\newif{\ifarxiv}
\newcommand{\alphabet}{\Sigma}
\newcommand{\BA}{\ensuremath{\mathsf{BA}}\xspace}
\newcommand{\KA}{\ensuremath{\mathsf{KA}}\xspace}
\newcommand{\CKAO}{\ensuremath{\mathsf{CKAO}}\xspace}
\newcommand{\termst}[1]{\ensuremath{{\mathcal{T}_{\scriptscriptstyle#1}}}}
\newcommand{\termsba}{\termst{\BA}}
\newcommand{\termska}{\termst{\KA}}
\newcommand{\termscka}{\termst{}}
\newcommand{\termsckao}{\termst{\CKAO}}
\newcommand{\equivt}[1]{\equiv_{\scriptscriptstyle#1}}
\newcommand{\equivba}{\equivt{\BA}}
\newcommand{\equivka}{\equivt{\KA}}
\newcommand{\equivcka}{\equivt{}}%\CKA}}
\newcommand{\leqqt}[1]{\leqq_{\scriptscriptstyle#1}}
\newcommand{\leqqba}{\leqqt{\BA}}
\newcommand{\leqqcka}{\leqq}
\newcommand{\semt}[2]{{\left\llbracket#2\right\rrbracket}_{\scriptscriptstyle#1}}
\newcommand{\semka}{\semt{\KA}}
\newcommand{\semcka}{\semt{}}%\CKA}}
\newcommand{\angl}[1]{\left\langle#1\right\rangle}
\newcommand{\pipe}{\;\;|\;\;}
\newcommand{\lp}[1]{\mathbf{#1}}
\newcommand{\ltr}[1]{\mathtt{#1}}
\newcommand{\pl}[1]{\mathcal{#1}}
\newcommand{\naturals}{\mathbb{N}}
\newcommand{\N}{\ensuremath{\mathsf{N}}\xspace}
\newcommand{\Pom}{\ensuremath{\mathsf{Pom}}}
\newcommand{\SP}{\ensuremath{\mathsf{SP}}}
\newcommand{\PC}{\mathsf{PC}}
\newcommand{\PCsp}{\mathsf{PC}^\mathsf{sp}}
\newcommand{\PCseq}{\mathsf{PC}^\mathsf{seq}}
\newcommand{\subsp}{\sqsubseteq^\mathsf{sp}}
\newcommand{\seq}{\mathsf{seq}}
\newcommand{\hexch}{\mathsf{exch}}
\newcommand{\hbool}{\mathsf{bool}}
\newcommand{\hcontr}{\mathsf{contr}}
\newcommand{\hglu}{\mathsf{glue}}
\newcommand{\hobs}{\mathsf{obs}}
\newcommand{\hgrp}{\mathsf{group}}
\newcommand{\At}{\mathsf{At}}
\newcommand\closure[2][H]{{#2}{\downarrow^{#1}}}
\newcommand\closurep[2][H]{\left(#2\right){\downarrow^{#1}}}
\newcommand\closurepsmall[2][H]{(#2){\downarrow^{#1}}}
\newcommand\seqclosure[2][H]{{#2}{\downarrow^{#1}_\seq}}
\spnewtheorem{fact}{Fact}{\bfseries}{\itshape} % chktex 6
   \def\@citecolor{blue}%
   \def\@urlcolor{blue}%
   \def\@linkcolor{blue}%
\def\orcidID#1{\smash{\href{http://orcid.org/#1}{\protect\raisebox{-1.25pt}{\protect\includegraphics{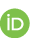}}}}}
\title{Concurrent Kleene Algebra with Observations:\texorpdfstring{\\}{} from Hypotheses to Completeness}
\author{
    Tobias Kapp\'e~\orcidID{0000-0002-6068-880X} (\Letter) \and % chktex 8
    Paul Brunet \orcidID{0000-0002-9762-6872} \and % chktex 8
    Alexandra Silva \orcidID{0000-0001-5014-9784} \and\texorpdfstring{\\}{} % chktex 8
    Jana Wagemaker \orcidID{0000-0002-8616-3905} \and % chktex 8
    Fabio Zanasi \orcidID{0000-0001-6457-1345} % chktex 8
}
\institute{University College London, London, United Kingdom; \email{tkappe@cs.ucl.ac.uk}}
\begin{document}

\maketitle

\begin{abstract} Concurrent Kleene Algebra (CKA) extends basic Kleene algebra with a parallel composition operator, which enables reasoning about concurrent programs.
However, CKA fundamentally misses \emph{tests}, which are needed to model standard programming constructs such as conditionals and $\mathsf{while}$-loops.
It turns out that integrating tests in CKA is subtle, due to their interaction with parallelism.
In this paper we provide a solution in the form of Concurrent Kleene Algebra with Observations (CKAO).
Our main contribution is a completeness theorem for CKAO\@.
Our result resorts on a more general study of CKA ``with hypotheses'', of which CKAO turns out to be an instance: this analysis is of independent interest, as it can be applied to extensions of CKA other than CKAO\@.
 \end{abstract}

\section{Introduction}%
\label{section:introduction}
\emph{Kleene algebra with tests} (KAT) is a (co)algebraic framework~\cite{kozen-1996,kozen-2017} that allows one to study properties of imperative programs with conditional branching, i.e. $\mathsf{if}$-statements and $\mathsf{while}$-loops. % chktex 36
KAT is build on Kleene algebra (KA)~\cite{conway-1971,kozen-1994}, the algebra of regular languages.
Both KA and KAT enjoy a rich meta-theory, which makes them a suitable foundation for reasoning about program verification.
In particular, it is well-known that the equational theories of KA and KAT characterise rational languages~\cite{salomaa-1966,krob-1990,kozen-1994} and guarded rational languages~\cite{kozen-1996} respectively.
Efficient procedures for deciding equivalence have been studied in recent years, also in view of recent applications to network verification~\cite{bonchi-pous-2013,foster-kozen-etal-2015,smolka-foster-etal-2020}.

Concurrency is a known source of bugs and hence challenges for verification.
Hoare, Struth, and collaborators~\cite{hoare-moeller-struth-wehrman-2009},  have proposed an extension of KA, \emph{Concurrent Kleene Algebra} (CKA), as an algebraic foundation for concurrent programming.
CKA enriches the basic language of KA with a parallel composition operator $\cdot \parallel \cdot$.
Analogously to  KA, CKA also has a semantic characterisation for which the equational theory is complete, in terms of rational languages of \emph{pomsets} (words with a partial order on letters)~\cite{laurence-struth-2014,laurence-struth-2017-arxiv,kappe-brunet-silva-zanasi-2018}.

The development of CKA raises a natural question, namely how tests, which were  essential in KAT for the study of sequential programs, can be integrated into CKA\@.
At first glance, the obvious answer may appear to be to merge KAT with CKA, yielding Concurrent Kleene Algebra with Tests (CKAT) --- as attempted in~\cite{jipsen-moshier-2016}.
However, as it turns out, integrating tests into CKA is quite subtle and this naive combination does not adequately capture the behaviour of concurrent programs.
In particular, using the CKAT framework of~\cite{jipsen-moshier-2016} one can prove that for any test $b$ and CKAT program $e$:\[
    0
        \quad \leq_{\scriptscriptstyle \mathsf{KAT}} \quad b \cdot e \cdot \overline{b}
        \quad \leq_{\scriptscriptstyle \mathsf{CKA}} \quad  e \parallel (b \cdot \overline{b})
        \quad \equiv_{\scriptscriptstyle \mathsf{KAT}} \quad  e \parallel 0
        \quad  \equiv_{\scriptscriptstyle \mathsf{CKA}} \quad  0
\]
thus $b \cdot e \cdot \overline{b} \equiv_{\scriptscriptstyle \mathsf{CKAT}} 0$, meaning no program $e$ can change the outcome of any test $b$.
Or equivalently, and undesirably, that any test is an invariant of any program!

The core issue is the identification in KAT of sequential composition $\cdot$ and Boolean conjunction $\land$.
In the concurrent setting this is not sound as the values of variables --- and hence tests --- can be changed between the two tests.

In order to fix this issue, we have presented \emph{Kleene Algebra with Observations} (KAO) in previous work~\cite{kappe-brunet-rot-silva-wagemaker-zanasi-2019}.
Algebraically, KAO differs from KAT in that conjunction of tests $b \wedge b'$ and their sequential composition $b \cdot b'$ are distinct operations.
In particular, $b \wedge b'$ expresses a single test executed \emph{atomically}, whereas $b \cdot b'$ describes two distinct executions, occurring one after the other.
As mentioned above, this distinction is crucial   when moving from the sequential setting of KA to the concurrent setting of CKA, as actions from another thread that happen to be scheduled after $b$ but before $b'$ may as well change the outcome of $b'$.

This newly developed extension of KA enables a novel attempt to enrich CKA with the ability to reason about programs that also have the traditional conditionals: in this paper, we present Concurrent Kleene Algebra with Observations (CKAO) and show that it overcomes the problems present in CKAT\@.

The traditional plan for developing a variant of (C)KA is to define a separate syntax, semantics, and set of axioms, before establishing a formal correspondence with the base syntax, semantics and axioms of (C)KA proper, and arguing that this correspondence allows one to conclude soundness and completeness of the axioms w.r.t.\ the semantics, as well as decidability of equivalence in the semantics. % chktex 36
Instead of such a tailor-made proof, however, we take a more general approach by first proposing CKA with hypotheses (CKAH) as a formalism for studying extensions of CKA, akin to how Kleene algebra with hypotheses~\cite{cohen-1994,kozen-2002,kozen-mamouras-2014,doumane-kuperberg-pous-pradic-2019} can be used to extend Kleene algebra.
We then apply CKAH to study CKAO, but the meta-theory developed can also be applied to extensions other than CKAO\@.

Using the CKAH formalism, we instantiate CKAO as CKAH with a particular set of hypotheses, and we immediately obtain a syntax and semantics; we can then use the meta-theory of CKAH to argue completeness and decidability in a modular proof, which composes results about CKA~\cite{kappe-brunet-silva-zanasi-2018} and KAO~\cite{kappe-brunet-rot-silva-wagemaker-zanasi-2019}.

The technical roadmap of the paper and its contributions are as follows.
\begin{itemize}
    \item
    We introduce Concurrent Kleene Algebra with Hypotheses (CKAH), a formalism for studying extensions of CKA;\@ this is a concurrent extension of Kleene Algebra with Hypotheses (\autoref{section:ckahypotheses}).
    We show how CKAH is sound with respect to rational pomset languages closed under an operation arising from the set of hypotheses.
    We propose techniques to argue completeness of the extended set of axioms with respect to the sound model as well as decidability of equivalence, capturing methods commonly used in literature to argue completeness and decidability for extensions of (concurrent) KA\@.

    \item
    We prove that CKAO can be presented as an instance of CKAH, for a certain set of hypotheses (\autoref{section:instantiationtockao}).
    This gives us a sound model of CKAO `for free'.
    We then prove that the axioms of CKAO are also complete for this model, and that equivalence is decidable, using the techniques developed previously.
\end{itemize}

\noindent
We conclude this introduction by giving an example of how hypotheses can be added to CKA to include the meaning of primitive actions.
Suppose we were designing a DSL for recipes, specifically, the steps necessary, and their order.
A recipe to prepare cookies might contain the actions $\mathsf{mix}$ (mixing the ingredients), $\mathsf{preheat}$ (pre-heating the oven), $\mathsf{chill}$ (chilling the dough) and $\mathsf{bake}$ (baking the cookies).
Using these actions, a recipe like ``mix the ingredients until combined; chill the dough while pre-heating the oven; bake cookies in the oven'' may be encoded as $\mathsf{mix}^* \cdot (\mathsf{chill} \parallel \mathsf{preheat}) \cdot \mathsf{bake}$.
Now, imagine that we have only one oven, meaning that we cannot bake two batches of cookies concurrently.
We might encode this restriction on concurrent behaviour by forcing the equation
\[
    (e \cdot \mathsf{bake} \cdot f) \parallel (g \cdot \mathsf{bake} \cdot h)
        = (e \cdot \mathsf{bake} \parallel g) \cdot (f \parallel \mathsf{bake} \cdot h) +
        (e \parallel g \cdot \mathsf{bake}) \cdot (\mathsf{bake} \cdot f \parallel h)
\]
As a consequence of this hypothesis, one could then derive properties such as
\[
    \mathsf{bake} \parallel (\mathsf{bake} \cdot \mathsf{mix}) = \mathsf{bake} \cdot \mathsf{bake} \cdot \mathsf{mix} + \mathsf{bake} \cdot \mathsf{mix} \cdot \mathsf{bake}
\]

In a nutshell, this paper provides an algebraic framework --- CKAH --- together with techniques for soundness and completeness results.
The framework is flexible in that different instantiations of the hypotheses generate very different algebraic systems.
We provide one instantiation --- CKAO --- that enables analysis of programs with both concurrency primitives and Boolean assertions.
This is the first sound and complete algebraic theory to reason about such programs.

For the sake of brevity, some proofs appear in
\ifarxiv%
\hyperref[appendix:proofs]{Appendix~\ref{appendix:proofs}}.
\else%
the extended version~\cite{arxiv}.
\fi%

\section{Preliminaries}%
\label{section:preliminaries}
We recall basic definitions on pomset languages, used in the semantics of CKA, which generalise languages to allow letters in words to be partially ordered.
We fix a (possibly infinite) alphabet $\Sigma$.
When defining sets parametrised by $\Sigma$, say $\mathsf S(\Sigma)$, if $\Sigma$ is clear from the context we use $\mathsf S$ to refer to $\mathsf S(\Sigma)$.

\subsubsection{Posets and Pomsets}

Pomsets~\cite{gischer-1988,grabowski-1981} are labelled posets, up to isomorphism. %We start with some basics on labelled posets.
\begin{definition}[Labellet poset]%
\label{definition:lp}
A \emph{labelled poset} over $\Sigma$ is a tuple $\lp{u} = \angl{S, \leq, \lambda}$, where $S$ is a finite set (the \emph{carrier} of $\lp{u}$), $\leq_\lp{u}$ is a partial order on $S$ (the \emph{order} of $\lp{u}$), and $\lambda: S \to \alphabet$ is a function (the \emph{labelling} of $\lp{u}$).
\end{definition}

We will denote labelled posets by bold lower-case letters $\lp{u}$, $\lp{v}$, etc.
We write $S_\lp{u}$ for the carrier of $\lp{u}$, $\leq_\lp{u}$ for the order of $\lp{u}$, and $\lambda_\lp{u}$ for the labelling of $\lp{u}$.
We assume that any labelled poset has a carrier that is a subset of some countably infinite set, say $\naturals$; this allows us to speak about the \emph{set of labelled posets} over $\Sigma$.
The precise contents of the carrier, however, are not important --- what matters to us is the labels of the points, and the ordering between them.

\begin{definition}[Poset isomorphism, pomset]%
\label{definition:lp-isomorphism-pomset}
Let $\lp{u}, \lp{v}$ be labelled posets over $\Sigma$.
We say $\lp{u}$ is \emph{isomorphic} to $\lp{v}$, denoted $\lp{u} \cong \lp{v}$, if there exists a bijection $h: S_\lp{u} \to S_\lp{v}$ that preserves labels, and preserves and reflects ordering.
More precisely, we require that $\lambda_\lp{v} \circ h = \lambda_\lp{u}$, and $s \leq_\lp{u} s'$ if and only if $h(s) \leq_\lp{v} h(s')$.

A \emph{pomset} over $\Sigma$ is an isomorphism class of labelled posets over $\Sigma$, i.e., the class $[\lp{v}] = \{ \lp{u} : \lp{u} \cong \lp{v} \}$ for some labelled poset $\lp{v}$.
\end{definition}

We write $\Pom(\Sigma)$ for the set of pomsets over $\Sigma$, and $1$ for the empty pomset.
As long as we have countably many pomsets in scope, the above allows us to assume w.l.o.g.\ that those pomsets are represented by labelled posets with pairwise disjoint carriers; we tacitly make this assumption throughout this paper.

Pomsets can be concatenated, creating a new pomset that contains all events of the operands, with the same label, but which orders all events of the left operand before those of the right one. We can also compose pomsets in parallel, where events of the operands are juxtaposed without any ordering between them.

\begin{definition}[Pomset composition]%
\label{definition:pomset-composition}
Let $U = [\lp{u}]$ and $V = [\lp{v}]$ be pomsets over $\Sigma$.
We write $U \parallel V$ for the \emph{parallel composition} of $U$ and $V$, which is the pomset over $\Sigma$ represented by the labelled poset $\lp{u} \parallel \lp{v}$, where
\begin{mathpar}
S_{\lp{u} \parallel \lp{v}} = S_\lp{u} \cup S_\lp{v}
\and
\leq_{\lp{u} \parallel \lp{v}} = {\leq_\lp{u}} \cup {\leq_\lp{v}}
\and
\lambda_{\lp{u} \parallel \lp{v}}(x) =
\begin{cases}
  \lambda_\lp{u}(x)&x \in S_\lp{u} \\
  \lambda_\lp{v}(x)&x \in S_\lp{v}
\end{cases}
\end{mathpar}

Similarly, we write $U \cdot V$ for the \emph{sequential composition} of $U$ and $V$, that is, the pomset represented by the labelled poset $\lp{u} \cdot \lp{v}$, where
\begin{mathpar}
S_{\lp{u} \cdot \lp{v}} = S_{\lp{u} \parallel \lp{v}}
\and
{\leq_{\lp{u} \cdot \lp{v}}} = {\leq_\lp{u}} \cup {\leq_\lp{v}} \cup (S_\lp{u} \times S_\lp{v})
\and
\lambda_{\lp{u} \cdot \lp{v}} = \lambda_{\lp{u} \parallel \lp{v}}
\end{mathpar}
\end{definition}

Just like words are built up from the empty word and letters using concatenation, we can build a particular set of pomsets using only sequential and parallel composition; this will be the primary type of pomset that we will use.

\begin{definition}[Series-parallel]%
\label{definition:pomset-sp}
The set of \emph{series-parallel pomsets} (\emph{sp-pomsets}) over $\Sigma$, denoted $\SP(\alphabet)$, is the smallest set s.t.\ $1 \in \SP(\alphabet)$, $\ltr{a} \in \SP(\alphabet)$ for every $\ltr{a} \in \alphabet$, and it is closed under parallel and sequential composition.
\end{definition}

The following characterisation of $\SP$ is very useful in proofs.

\begin{theorem}[Gischer~\cite{gischer-1988}]
Let $U = [\lp{u}] \in \Pom$.
Then $U \in \SP$ if and only if $U$ is \emph{\N-free}, which is to say that if there exist no distinct $s_0, s_1, s_2, s_3 \in S_\lp{u}$ such that $s_0 \leq_\lp{u} s_1$ and $s_2 \leq_\lp{u} s_3$ and $s_0 \leq_\lp{u} s_3$, with no other relation between them.
\end{theorem}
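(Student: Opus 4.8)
The plan is to prove the two implications separately: the forward direction ($U \in \SP \Rightarrow U$ is \N-free) by structural induction on the way $U$ is built, and the converse by strong induction on the size of the carrier $S_\lp{u}$.

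For the forward direction I would induct on the construction of $U$ as a series-parallel pomset. The base cases $U = 1$ and $U = \ltr{a}$ are immediate, since an \N requires four distinct points. For the inductive step, suppose $U = U_1 \cdot U_2$ or $U = U_1 \parallel U_2$ with $U_1, U_2$ both \N-free, and assume toward a contradiction that $\lp{u}$ carries points $s_0, s_1, s_2, s_3$ forming an \N. The key observation is that an \N contains precisely the comparabilities $s_0 \leq_\lp{u} s_1$, $s_0 \leq_\lp{u} s_3$, $s_2 \leq_\lp{u} s_3$ and the incomparable pairs $\{s_0, s_2\}$, $\{s_1, s_2\}$, $\{s_1, s_3\}$. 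In a sequential composition every point of the left factor lies below every point of the right, so incomparable points must share a factor; chasing the three incomparabilities forces all four points into one factor. In a parallel composition there are no cross-factor relations, so each comparability forces its endpoints into a common factor, which again collapses all four points into one. Either way the restriction of $\lp{u}$ to $\{s_0, s_1, s_2, s_3\}$ lives inside a single factor and is therefore an \N in $U_1$ or $U_2$, contradicting the induction hypothesis.

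For the converse I would argue by strong induction on $|S_\lp{u}|$. If $|S_\lp{u}| \leq 1$ this is clear. Otherwise, consider the comparability graph $G$ on $S_\lp{u}$ (an edge between $s, s'$ whenever $s \leq_\lp{u} s'$ or $s' \leq_\lp{u} s$) and its complement $\overline{G}$, the incomparability graph. If $G$ is disconnected, its vertices split into two nonempty parts with no relations between them, exhibiting $U$ as a parallel composition of the pomsets represented by the two restrictions of $\lp{u}$; restrictions of an \N-free poset are \N-free and strictly smaller, so the induction hypothesis applies. If $G$ is connected, I would first show $\overline{G}$ is disconnected, then that its components can be linearly ordered to witness a sequential decomposition. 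The ordering step holds for any poset: for distinct components $C, C'$ of $\overline{G}$ all cross pairs are comparable, and a short propagation along incomparability edges inside $C$ and $C'$ shows that either every point of $C$ lies below every point of $C'$ or vice versa, giving a strict total order on components, along which $U$ factors as an iterated sequential composition of smaller \N-free pomsets, closing the induction.

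The crux, and the only place \N-freeness is used, is showing that $G$ connected forces $\overline{G}$ disconnected. I would prove the contrapositive: if both $G$ and $\overline{G}$ are connected then $\lp{u}$ contains an \N. The translation is clean, since an induced four-vertex path $w - x - y - z$ in $G$ (consecutive pairs comparable, the rest incomparable) inherits from $\leq_\lp{u}$ a transitive orientation that is necessarily an \N; hence it suffices to produce such an induced $P_4$. This reduces the statement to Seinsche's theorem on $P_4$-free graphs (cographs): a finite graph on at least two vertices with both itself and its complement connected must contain an induced $P_4$. I expect this graph-theoretic lemma to be the main obstacle, and I would prove it by induction, using that in a $P_4$-free graph connectivity of the graph forces disconnectivity of the complement.
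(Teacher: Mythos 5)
The paper never proves this theorem, so there is no in-paper argument to compare against: the statement is imported as a known result of Gischer~\cite{gischer-1988} (going back to Grabowski~\cite{grabowski-1981}, and to Valdes, Tarjan and Lawler for series-parallel digraphs), and the appendix of omitted proofs does not touch it. Measured against the classical proof in that literature, your outline is essentially the standard argument, and its two halves are sound. The forward direction is complete as sketched: the three incomparabilities of an \N{} force its four points into a single factor of a sequential composition, the three comparabilities force them into a single factor of a parallel composition, and since both compositions restrict faithfully to their factors, the pattern survives as an \N{} inside a factor, contradicting the induction hypothesis. The converse is the standard decomposition: a disconnected comparability graph $G$ gives a parallel factorisation, a disconnected incomparability graph $\overline{G}$ gives a sequential factorisation (your propagation argument totally ordering the co-components is valid and, as you note, needs no \N-freeness), and \N-freeness enters exactly once, to exclude that $G$ and $\overline{G}$ are both connected on at least two vertices; your observation that any transitive orientation of an induced $P_4$ in the comparability graph is precisely an \N{} is also correct.

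The one soft spot is the graph-theoretic core you defer to. As worded, your plan to prove Seinsche's theorem ``by induction, using that in a $P_4$-free graph connectivity of the graph forces disconnectivity of the complement'' is circular, since that clause is the theorem itself. The fix is easy, because the statement you actually need --- any graph on at least two vertices with $G$ and $\overline{G}$ both connected contains an induced $P_4$ --- has a short self-contained induction on the number of vertices: pick a vertex $v$; if $G-v$ and $\overline{G-v}$ are both connected, apply the induction hypothesis; otherwise, say $G-v$ is disconnected, use connectivity of $\overline{G}$ to find a non-neighbour $u$ of $v$, take a shortest path $v, p_1, p_2, \dots, u$ inside the component of $G-v$ containing $u$ (together with $v$), and a neighbour $b$ of $v$ in a different component; then $p_2, p_1, v, b$ induces a $P_4$, since $p_2 v$ is a non-edge by minimality of the path and $p_1 b$, $p_2 b$ are non-edges across components. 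The remaining case, $\overline{G-v}$ disconnected, is symmetric because $\overline{P_4} = P_4$. With that lemma filled in, your proof is complete and coincides with the one the paper cites.
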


One way of comparing pomsets is to see whether they have the same events and labels, except that one is ``more sequential'' in the sense that more events are ordered.
This is captured by the notion of \emph{subsumption}~\cite{gischer-1988}, defined as follows.

\begin{definition}[Subsumption]%
\label{definition:subsumption}
Let $U = [\lp{u}]$ and $V = [\lp{v}]$.
We say $U$ \emph{is subsumed by} $V$, written $U \sqsubseteq V$, if there exists a label- and order-preserving bijection $h: S_\lp{v} \to S_\lp{u}$.
That is, $\lambda_\lp{u} \circ h = \lambda_\lp{v}$ and if $s \leq_\lp{v} s'$, then $h(s) \leq_\lp{u} h(s')$.
\end{definition}

Subsumption between sp-pomsets can be characterised as follows~\cite{gischer-1988}.

\begin{lemma}%
\label{lemma:subsumption-sp}
Let $\subsp$ be $\sqsubseteq$ restricted to $\SP$.
Then $\subsp$ is the smallest precongruence (preorder monotone w.r.t.\ the operators) such that for all $U, V, W, X \in \SP$:
\[(U \parallel V) \cdot (W \parallel X) \subsp (U \cdot W) \parallel (V \cdot X)\]
\end{lemma}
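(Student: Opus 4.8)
The plan is to prove the two inclusions separately: that $\subsp$ is itself a precongruence validating the exchange law (so it lies among the relations over which we take the least), and that it is contained in every such precongruence (minimality).

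For the first inclusion I would verify the defining properties directly from \autoref{definition:subsumption}. Reflexivity uses the identity bijection, transitivity composes the two witnessing bijections, and monotonicity with respect to $\cdot$ and $\parallel$ takes the disjoint union of the witnessing bijections, noting that both operators act on carriers by union and that $\cdot$ adds the same block of forward pairs on both sides. For the exchange inequality $(U \parallel V)\cdot(W \parallel X) \subsp (U \cdot W)\parallel(V \cdot X)$ the two pomsets share carrier and labelling, so the identity is the required bijection; one only checks that the order of the right-hand pomset (the internal orders, plus all of $U$ before $W$ and all of $V$ before $X$) is contained in that of the left-hand pomset (the same internal orders, plus all of $U \cup V$ before all of $W \cup X$). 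Hence $\subsp$ qualifies.

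The substance is the reverse inclusion. Fix an arbitrary precongruence $\preceq$ on $\SP$ satisfying the exchange law; I would show $U \subsp V \implies U \preceq V$ by induction on $n = |S_\lp{u}| = |S_\lp{v}|$ (equal, since a subsumption is carried by a bijection). Using the witnessing bijection I identify both pomsets on a common carrier, with orders $\leq_U \supseteq \leq_V$. For $n \le 1$ one has $U = V$; for $n \ge 2$ I split on the top-level decomposition of $U$: either $U$ is disconnected, giving $U = U_1 \parallel U_2$, or $U$ is connected, giving $U = U_1 \cdot U_2$. If $U = U_1 \parallel U_2$, then since $\leq_V \subseteq \leq_U$ there are no $\leq_V$-relations across the partition either, so $V$ restricts to $V_1 \parallel V_2$ along the same partition with $U_i \subsp V_i$, and the claim follows by induction and monotonicity. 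If $U = U_1 \cdot U_2$ and $V = V_1 \cdot V_2$ is also sequential, the forward pairs forced by $V$ lie in $\leq_V \subseteq \leq_U$ and split $U$ compatibly with $U_i \subsp V_i$, and again induction and monotonicity conclude.

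The main obstacle is the remaining case, where $U = U_1 \cdot U_2$ is sequential but $V = V_1 \parallel V_2$ is parallel: this is exactly where exchange must be invoked. Here I consider the $2 \times 2$ decomposition of the carrier into $a = U_1 \cap V_1$, $b = U_1 \cap V_2$, $c = U_2 \cap V_1$, $d = U_2 \cap V_2$, with $A,B,C,D$ the sub-pomsets carrying the order induced by $\leq_U$ (each is $\N$-free, hence in $\SP$, as an induced subposet of $U$). The crucial observation is that $V$ admits no relation from $c$ to $a$ nor from $d$ to $b$: such a relation would lie inside $V_1$ (resp.\ $V_2$), hence in $\leq_U$, yet it would run from $U_2$ back to $U_1$, contradicting $U = U_1 \cdot U_2$ and the antisymmetry of $\leq_U$. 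I then chain three steps: first $U = U_1 \cdot U_2 \preceq (A \parallel B)\cdot(C \parallel D)$, since $U_1 \subsp A \parallel B$ and $U_2 \subsp C \parallel D$ (dropping the cross relations inside each factor) on strictly fewer vertices, so the induction hypothesis and monotonicity apply; second $(A \parallel B)\cdot(C \parallel D) \preceq (A \cdot C)\parallel(B \cdot D)$, a single instance of the exchange law; third $(A \cdot C)\parallel(B \cdot D) \preceq V_1 \parallel V_2 = V$, since the observation gives $A \cdot C \subsp V_1$ and $B \cdot D \subsp V_2$ (all cross relations of $V_1, V_2$ already point forward) on strictly fewer vertices. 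Transitivity of $\preceq$ yields $U \preceq V$. I expect the delicate points to be checking that each recursive call strictly decreases $n$ — which holds because both factors of every split are nonempty — and handling the degenerate situations where one of the four blocks is empty, absorbed by treating the empty pomset $1$ as the unit of both compositions.
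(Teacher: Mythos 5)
Your proof is correct, but there is essentially nothing in the paper to compare it against: the paper states this lemma as a known fact, citing Gischer~\cite{gischer-1988}, and gives no proof of it in the main text or in the appendix. Your argument is a sound self-contained reconstruction of the classical one. The forward direction (that $\subsp$ is a precongruence on $\SP$ validating exchange) is indeed a routine check from the definition of subsumption, with the identity bijection witnessing the exchange instance since both sides share carrier and labelling and the right-hand order is contained in the left-hand one. For minimality, your induction is set up correctly, and the three delicate points all go through: first, the antisymmetry observation --- no $\leq_V$-pair can run from $U_2 \cap V_1$ back to $U_1 \cap V_1$, nor from $U_2 \cap V_2$ back to $U_1 \cap V_2$, since such a pair would lie in $\leq_U$ against the cross pairs forced by $U = U_1 \cdot U_2$ --- is exactly what makes $A \cdot C \subsp V_1$ and $B \cdot D \subsp V_2$ hold; second, induced subposets of $\N$-free posets are $\N$-free, so $A, B, C, D$ and the restrictions of $V$ stay in $\SP$ and the induction hypothesis is applicable to them; third, the measure strictly decreases in every recursive call because $U_1, U_2, V_1, V_2$ are each nonempty, even when some of the four blocks are empty (those degenerate cases are absorbed by $1$ being a unit, as you note, and the exchange law instance remains valid since $1 \in \SP$). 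The chain $U \preceq (A \parallel B)\cdot(C \parallel D) \preceq (A\cdot C) \parallel (B \cdot D) \preceq V$, spending exactly one instance of exchange, together with the two easy cases (splitting along $U$'s parallel decomposition, or along $V$'s sequential decomposition), yields a complete proof.
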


\subsubsection{CKA:\@ syntax and semantics.} %We are now ready to recall the basic syntax of CKA and the semantics in terms of languages of sp-pomsets.

CKA terms are generated by the grammar
\[
    e, f\in\termscka(\Sigma) ::= 0 \pipe 1 \pipe \ltr{a} \in \alphabet \pipe e + f \pipe e \cdot f \pipe e \parallel f \pipe e^*
\]

Semantics of CKA is given in terms of \emph{pomset languages}, that is subsets of $\SP$, which we simply denote by $2^\SP$. Formally, the function $\semcka{-}: \termscka \to 2^\SP$ assigning languages to CKA terms is defined as follows:
\begin{align*}
\semcka{0} &= \emptyset
    & \semcka{1} &= \{ 1 \}&
 \semcka{e + f} &= \semcka{e} \cup \semcka{f}
    & \semcka{e \cdot f} &= \semcka{e} \cdot \semcka{f}    \\
    \semcka{e^*} &= \semcka{e}^*    &
\semcka{\ltr{a}} &= \{ \ltr{a} \}
    & \semcka{e \parallel f} &= \semcka{e} \parallel \semcka{f}
\end{align*}
Here, we use the pointwise lifting of sequential and parallel composition from pomsets to pomset languages, i.e., when $\pl{U}, \pl{V} \subseteq \SP(\alphabet)$, we define
\begin{mathpar}
\pl{U} \cdot \pl{V} = \{ U \cdot V : U \in \pl{U}, V \in \pl{V} \}
\and
\pl{U} \parallel \pl{V} = \{ U \parallel V : U \in \pl{U}, V \in \pl{V} \}
\end{mathpar}
Furthermore, the Kleene star of a pomset language $\pl{U}$ is defined as
$\pl{U}^* = \bigcup_{n \in \naturals} \pl{U}^n$, where
$\pl{U}^0 = \{ 1 \}$ and $
\pl{U}^{n+1} = \pl{U}^n \cdot \pl{U}$.

Equivalence of CKA terms can be axiomatised in the style of Kleene algebra. The relation $\equivcka$ is the smallest congruence on $\termscka$ (with respect to all operators) such that for all $e, f, g \in \termscka$:
\begin{mathpar}
e + 0 \equivcka e
\and
e + e \equivcka e
\and
e + f \equivcka f + e
\and
e + (f + g) \equivcka (f + g) + h
\\
e \cdot (f \cdot g) \equivcka (e \cdot f) \cdot g
\and
e \cdot (f + g) \equivcka e \cdot f + e \cdot h
\and
(e + f) \cdot g \equivcka e \cdot g + f \cdot g
\\
e \cdot 1 \equivcka e \equivcka 1 \cdot e
\and
e \cdot 0 \equivcka 0 \equivcka 0 \cdot e
\and
e \parallel f \equivcka f \parallel e
\and
e \parallel 1 \equivcka e
\and
e \parallel 0 \equivcka 0
\\
 e \parallel (f \parallel g) \equivcka (e \parallel f) \parallel g
\and
e \parallel (f + g) \equivcka e \parallel f + e \parallel g
\and
1 + e \cdot e^* \equivcka e^* \equivcka 1 + e^* \cdot e
\\
e + f \cdot g \leqqcka g \implies f^* \cdot e \leqqcka g
\and
e + f \cdot g \leqqcka f \implies e \cdot g^* \leqqcka f
\end{mathpar}
in which $e \leqqcka f$ is the natural order $e + f \equivcka f$. %
The final (conditional) axioms are referred to as the \emph{least fixpoint axioms}.

Laurence and Struth~\cite{laurence-struth-2014} proved this axiomatisation to be sound and complete. A decision procedure was proposed in~\cite{brunet-pous-struth-2017}.
\begin{theorem}[Soundness, completeness, decidability]%
\label{theorem:bka-completeness-decidability}
Let $e, f \in \termscka$. We have:
    $e \equivcka f$ if and only if $\semcka{e} = \semcka{f}$, and
    it is decidable whether $\semcka{e} = \semcka{f}$.
\end{theorem}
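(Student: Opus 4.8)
The plan is to handle the three claims separately, spending almost all of the effort on completeness. \emph{Soundness} of $\equivcka$ with respect to $\semcka{-}$ is proved by induction on the derivation of $e \equivcka f$, checking that each axiom is an identity (or inclusion) of pomset languages. The purely equational axioms are immediate from the definitions of $\cup$, $\cdot$ and $\parallel$ on $2^\SP$ (their associativity, commutativity and distributivity, with $\{1\}$ and $\emptyset$ acting as unit and annihilator). For the star-unfolding identities one unfolds $\pl{U}^* = \bigcup_n \pl{U}^n$; for the two least fixpoint axioms one uses that $\cdot$ and $\parallel$ are monotone and union-continuous in each argument, so that $\semcka{g}$ (resp. $\semcka{f}$) being a pre-fixpoint of the relevant operator forces it to contain the least solution $\semcka{f^* \cdot e}$ (resp. $\semcka{e \cdot g^*}$).

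For \emph{completeness} I would follow the automata-theoretic route. The first step is a Kleene theorem: a pomset language is denotable by some $e \in \termscka$ if and only if it is recognised by a finite \emph{pomset automaton}, a machine whose transitions are either ordinary reads of a letter $\ltr{a} \in \alphabet$ or ``fork'' transitions that spawn a parallel subcomputation and resume in a target state once that subcomputation accepts. The bridge from syntax to automata is a syntactic (Brzozowski/Antimirov-style) decomposition: for each term $e$ one proves, using the axioms, that
\[
  e \equivcka \epsilon(e) + \sum_{\ltr{a}} \ltr{a} \cdot \delta_{\ltr{a}}(e) + \sum \bigl(\text{fork of a parallel head into its continuation}\bigr),
\]
where $\epsilon(e)$ is $1$ if $1 \in \semcka{e}$ and $0$ otherwise, and the $\delta$-terms and fork-continuations are again terms. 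The crucial property is finiteness: only finitely many distinct continuations arise up to $\equivcka$, so the associated automaton is finite and the construction terminates.

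Completeness then reduces to the classical argument of solving a left-linear system. The decomposition above associates to each automaton a system of left-linear equations, and the least fixpoint axioms are exactly what make its solution \emph{unique} up to $\equivcka$; the Kleene-theorem direction automaton $\to$ term thus produces, from an automaton recognising $\semcka{e}$, a term provably $\equivcka$-equal to $e$. Given $e, f$ with $\semcka{e} = \semcka{f}$, the two automata recognise the same language; merging states that recognise the same language yields a single finite automaton, and reading a term off it shows that both $e$ and $f$ are $\equivcka$-equal to that common term, whence $e \equivcka f$. \emph{Decidability} then falls out, since the decomposition effectively produces the two finite automata and equality of the recognised rational pomset languages is decidable; alternatively one invokes the dedicated decision procedure for rational pomset-language equivalence.

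The main obstacle is the parallel structure. In ordinary Kleene algebra the derivatives are genuinely left-linear — a nonempty word factors uniquely as a leading letter followed by a remainder — but pomsets branch, so a fork step must track several parallel threads at once, and the delicate point is reassembling them: one must prove, purely from the axioms, that the parallel composition of the per-thread continuations is recovered as a provable solution of the fixpoint system. Establishing this, together with the finiteness of the derivative construction in the presence of $\parallel$, is where essentially all of the work lies, and it is what distinguishes this argument from Kozen's proof for \KA. Gischer's \N-free description of $\SP$ and the precongruence description of $\subsp$ recalled above are the combinatorial tools that keep these pomset manipulations under control.
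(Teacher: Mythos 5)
The first thing to note is that the paper does not prove this theorem at all: it is quoted as a known result, with soundness and completeness attributed to Laurence and Struth~\cite{laurence-struth-2014} and the decision procedure to Brunet, Pous and Struth~\cite{brunet-pous-struth-2017}. Your soundness argument (induction on derivations, with monotonicity and union-continuity handling the fixpoint rules) is standard and unproblematic. The genuine gap is in the completeness argument, and it sits exactly where you yourself locate ``essentially all of the work''. The pivotal step you propose --- ``merging states that recognise the same language yields a single finite automaton'' from which both $e$ and $f$ can be read off --- is unjustified: for nondeterministic machines, quotienting each automaton by language equivalence of its states does not yield a canonical object, so two automata recognising the same language need not become a \emph{common} automaton, and there is no single term for both $e$ and $f$ to be provably equal to. In Kozen's proof for \KA~\cite{kozen-1994} this is repaired by determinising and minimising the automata \emph{inside the algebra} (via matrices over \KA), exploiting that the minimal deterministic automaton of a rational word language is canonical; pomset automata, whose fork transitions spawn parallel threads, have no known determinisation or canonical minimal form, so that repair is unavailable. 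Separately, your term-extraction step needs more than a semantic Kleene theorem: you need that the fixpoint system of an automaton has a solution that is provably unique up to $\equivcka$, including the axiomatic reassembly of forked threads, and establishing such a syntactic round-trip for fork transitions is an open problem rather than a verification. Tellingly, in~\cite{kappe-brunet-silva-zanasi-2018} pomset automata are used only semantically (to construct a closure term), and completeness is obtained by invoking the already-proven theorem of~\cite{laurence-struth-2014}, not by an automata-theoretic argument.

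For comparison, the proof the paper actually relies on takes a route that bypasses both obstacles: Laurence and Struth~\cite{laurence-struth-2014} argue by induction on the parallel nesting depth of terms, treating maximal parallel subterms as fresh letters of an enlarged alphabet; the unique sequential/parallel factorisation of series-parallel pomsets makes this abstraction faithful, and completeness of ordinary Kleene algebra for rational word languages is invoked at each level. Decidability is then a separate result~\cite{brunet-pous-struth-2017}, not a by-product of the completeness proof. As it stands, your proposal is not a proof of the theorem; to make it one you would have to supply precisely the two missing ingredients above, and there is no evidence in the literature that either is attainable in the form you assume.
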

Readers familiar with CKA will notice that the algebra defined here is not in fact CKA as defined in~\cite{hoare-moeller-struth-wehrman-2009}.
Indeed the signature axiom of CKA, the exchange law, has been omitted.
However, as we show in \autoref{sec:fact-exch-law}, the standard definition of CKA, as well as its completeness proof~\cite{kappe-brunet-silva-zanasi-2018}, may be recovered using hypotheses.

\section{Pomset contexts}
The linear one-dimensional structure of words makes it straightforward to define occurrences of subwords: if one wants to state that a word $w$ appears in another word $v$, one can simply say that $v = xwy$ for some $x$ and $y$.
Due to the two-dimensional nature of pomsets, it is not straightforward to define when a pomset occurs inside another pomset, because the pomset could appear below a parallel, which is nested in a sequential, which is in a parallel, etc.
In what follows we define {\em pomset contexts}, that will enable us to talk about pomset factorisations in a similar fashion as we do for words, and prove some useful properties for these.

\begin{definition}
Let $*$ be a symbol not occurring in $\Sigma$.
A \emph{pomset context} is a pomset over $\Sigma \cup \{ * \}$ with exactly \emph{one} node labelled by $*$.
More precisely, $C$ is a pomset context if $C = [\lp{c}]$ with exactly one $s_* \in S_\lp{c}$ with $\lambda_\lp{c}(s_*) = *$.
\end{definition}
Intuitively, $*$ is a placeholder or gap where another pomset can be inserted. We write $\PC(\Sigma)$ for the set of pomset contexts over $\Sigma$, and $\PCsp(\Sigma)$ for the series-parallel pomset contexts over $\Sigma$.

Given a $C \in \PC$ and $U \in \Pom$, we can ``plug'' $U$ into the gap left in $C$ to obtain the pomset $C[U] \in \Pom$.
More precisely, let $U = [\lp{u}]$ and $C = [\lp{c}]$ with $\lp{u}$ disjoint from $\lp{c}$.
We write $C[U]$ for the pomset represented by $\lp{c}[\lp{u}]$, where $S_{\lp{c}[\lp{u}]} = S_\lp{u} \cup S_\lp{c} - \{ * \}$ and $\lambda_{\lp{c}[\lp{u}]}(s)$ is given by $\lambda_\lp{c}(s)$ if $s \in S_\lp{c} - \{ * \}$, and $\lambda_\lp{u}(s)$ when $s \in S_\lp{u}$; lastly, $\leq_{\lp{c}[\lp{u}]}$ is the smallest relation on $S_{\lp{c}[\lp{u}]}$ satisfying
\begin{mathpar}
\inferrule{%
    s \leq_\lp{u} s'
}{%
    s \leq_{\lp{c}[\lp{u}]} s'
}
\and
\inferrule{%
    s \leq_\lp{c} s'
}{%
    s \leq_\lp{\lp{c}[\lp{u}]} s'
}
\and
\inferrule{%
    s_* \leq_\lp{c} s \\
    s' \in S_\lp{u}
}{%
    s' \leq_{\lp{c}[\lp{u}]} s
}
\and
\inferrule{%
    s' \in S_\lp{u} \\
    s \leq_\lp{c} s_*
}{%
    s \leq_{\lp{c}[\lp{u}]} s'
}
\end{mathpar}
It follows easily that $\leq_{\lp{c}[\lp{u}]}$ is a partial order.
We may also apply contexts to languages: if $L\subseteq\Pom$ and $C\in\PC$, the language $C[L]$ is defined as $\left\{C[U]:U\in L\right\}$.

We now prove some properties of contexts that will be useful later in our technical development. First, we note that pomset contexts respect subsumption.

\begin{restatable}{lemma}{pomsetcontextmonotone}%
\label{lemma:pomset-context-monotone}
Let $C, D \in \PC$, $U \in \Pom$.
If $C \sqsubseteq D$, then $C[U] \sqsubseteq D[U]$.
\end{restatable}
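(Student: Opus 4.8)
\noindent
The plan is to build a witness for $C[U] \sqsubseteq D[U]$ directly from a witness for $C \sqsubseteq D$. Fix representatives $C = [\lp{c}]$, $D = [\lp{d}]$ and $U = [\lp{u}]$ with pairwise disjoint carriers, and let $h \colon S_\lp{d} \to S_\lp{c}$ be the label- and order-preserving bijection given by $C \sqsubseteq D$. The key preliminary observation is that $h$ sends the gap to the gap: since $h$ preserves labels and each of $\lp{c}, \lp{d}$ has a \emph{unique} node labelled $*$, the $*$-node $s_*^\lp{d}$ of $\lp{d}$ must satisfy $h(s_*^\lp{d}) = s_*^\lp{c}$. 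Hence $h$ restricts to a bijection $S_\lp{d} - \{s_*^\lp{d}\} \to S_\lp{c} - \{s_*^\lp{c}\}$.

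I would then define $g \colon S_{\lp{d}[\lp{u}]} \to S_{\lp{c}[\lp{u}]}$ to be $h$ on $S_\lp{d} - \{*\}$ and the identity on $S_\lp{u}$. Since $S_{\lp{d}[\lp{u}]} = S_\lp{u} \cup (S_\lp{d} - \{*\})$ and likewise for $\lp{c}$, the disjointness of carriers together with the previous observation make $g$ a well-defined bijection. Label-preservation of $g$ is then a routine case split: on $S_\lp{u}$ the map is the identity and both labellings restrict to $\lambda_\lp{u}$, while on $S_\lp{d} - \{*\}$ it follows from label-preservation of $h$ and the fact that the labelling of $\lp{d}[\lp{u}]$ agrees with $\lambda_\lp{d}$ away from the gap.

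The heart of the argument is order-preservation, and the useful leverage is that $\leq_{\lp{d}[\lp{u}]}$ is the \emph{smallest} relation closed under the four inference rules; since no rule has a premise mentioning $\leq_{\lp{d}[\lp{u}]}$ itself, this relation is precisely the union of the pairs those four rules generate. It therefore suffices to check, rule by rule, that $g$ maps each generated pair into $\leq_{\lp{c}[\lp{u}]}$, with no separate transitivity or reflexivity argument needed. For the $\leq_\lp{u}$-rule this is immediate because $g$ is the identity there; for the $\leq_\lp{d}$-rule it follows from order-preservation of $h$, noting that $h$ keeps non-gap nodes off the gap so that the corresponding $\leq_\lp{c}$-rule applies. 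For the two mixed rules, which relate an inserted node of $\lp{u}$ to a node lying before or after the gap, I would transport the hypotheses $s_*^\lp{d} \leq_\lp{d} s$ and $s \leq_\lp{d} s_*^\lp{d}$ through $h$—using $h(s_*^\lp{d}) = s_*^\lp{c}$—to obtain $s_*^\lp{c} \leq_\lp{c} h(s)$ and $h(s) \leq_\lp{c} s_*^\lp{c}$ respectively, and then invoke the matching mixed rule for $\lp{c}$.

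I do not expect a real obstacle here; the only thing demanding care is the bookkeeping around the gap node, namely establishing $h(s_*^\lp{d}) = s_*^\lp{c}$ and recognising that the defining rules for $\leq_{\lp{d}[\lp{u}]}$ are non-recursive, so that the rule-by-rule check genuinely suffices to prove order-preservation.
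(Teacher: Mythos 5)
Your proposal is correct and takes essentially the same route as the paper: the paper simply performs your bijection bookkeeping up front, assuming w.l.o.g.\ that the subsumption witness is the identity (so that $S_\lp{c} = S_\lp{d}$, $\lambda_\lp{c} = \lambda_\lp{d}$ and ${\leq_\lp{d}} \subseteq {\leq_\lp{c}}$), and then runs exactly your rule-by-rule case analysis on the four generating rules to conclude ${\leq_{\lp{d}[\lp{u}]}} \subseteq {\leq_{\lp{c}[\lp{u}]}}$. The only difference is cosmetic: you carry the bijection $h$ explicitly (hence need the gap-preservation observation $h(s_*^{\lp{d}}) = s_*^{\lp{c}}$ and the combined map $g$), whereas the paper's w.l.o.g.\ identification makes those steps automatic.
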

Series-parallel pomset contexts can be given an inductive characterisation.

\begin{restatable}{lemma}{seriesparallelcontextsinductive}%
\label{lemma:series-parallel-contexts-inductive}
$\PCsp$~is the smallest pomset language $L$ satisfying
\begin{mathpar}
\inferrule{~}{%
    * \in L
}
\and
\inferrule{%
    U \in \SP \\
    C \in L
}{%
    U \cdot C \in L
}
\and
\inferrule{%
    C \in L \\
    V \in \SP
}{%
    C \cdot V \in L
}
\and
\inferrule{%
    U \in \SP \\
    C \in L
}{%
    U \parallel C \in L
}
\end{mathpar}
\end{restatable}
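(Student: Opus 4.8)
The plan is to prove the two inclusions separately. Write $L_0$ for the smallest pomset language closed under the four rules, so that the claim is $\PCsp = L_0$. For the inclusion $L_0 \subseteq \PCsp$ it suffices, by minimality of $L_0$, to check that $\PCsp$ itself contains $*$ and is closed under the four operations; for the reverse inclusion $\PCsp \subseteq L_0$ I would show that every series-parallel pomset context is derivable from the rules, by induction on its number of nodes.

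For $L_0 \subseteq \PCsp$, first note that the single $*$-labelled node is a sp-pomset over $\Sigma \cup \{*\}$ with exactly one $*$-node, hence $* \in \PCsp$. Then each rule preserves membership in $\PCsp$: in $U \cdot C$, $C \cdot V$ and $U \parallel C$ the $\SP$-operand contributes no $*$-node while the context operand contributes exactly one, so the composite still carries exactly one $*$-node; and since $\SP(\Sigma \cup \{*\})$ is closed under $\cdot$ and $\parallel$ by \autoref{definition:pomset-sp}, each composite is again series-parallel. As $\PCsp$ thus satisfies all the closure conditions, minimality of $L_0$ gives $L_0 \subseteq \PCsp$.

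For $\PCsp \subseteq L_0$, take $C \in \PCsp$ and proceed by induction on $|S_C|$, which is at least $1$ since a context has a $*$-node. If $|S_C| = 1$ then $C$ is the singleton $*$, which lies in $L_0$ by the first rule. If $|S_C| > 1$, then because $C$ is series-parallel it factors nontrivially, either as $C = P \cdot Q$ or as $C = P \parallel Q$, with $P, Q$ nonempty sp-pomsets over $\Sigma \cup \{*\}$ each having strictly fewer nodes than $C$. Since $C$ has exactly one $*$-node, precisely one of $P, Q$ contains it and the other lies in $\SP$. The factor containing the $*$-node is itself a series-parallel pomset context, so by the induction hypothesis it belongs to $L_0$, and the matching rule then places $C$ in $L_0$. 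In the subcase $C = P \parallel Q$ where the $*$-node sits in the \emph{left} factor $P$, I would first rewrite $P \parallel Q = Q \parallel P$ using the symmetry of parallel composition (\autoref{definition:pomset-composition}), so that the context operand appears on the right and matches the shape of the parallel rule.

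The crux is the decomposition step: I need that every series-parallel pomset with at least two nodes factors nontrivially as a sequential or parallel composition of two strictly smaller sp-pomsets. This is precisely where the inductive characterisation of $\SP$ in \autoref{definition:pomset-sp}, together with Gischer's theorem, is used, and it is what makes the induction on $|S_C|$ well-founded despite the fact that sp-decompositions are not unique --- the argument only needs the \emph{existence} of one valid factorisation. The only other point requiring care is the asymmetry of the parallel rule, which is resolved by the commutativity rewrite noted above.
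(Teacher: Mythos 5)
Your proof is correct, and its first half (appealing to minimality of $L_0$ and checking that $\PCsp$ contains $*$ and is closed under the four rules) coincides with the paper's argument. The reverse inclusion is where you diverge: the paper performs structural induction on the construction of the context as a series-parallel pomset over $\Sigma \cup \{*\}$, whereas you induct on the number of nodes. This difference matters in exactly one place, the step you yourself flag as the crux: a size-based induction needs the fact that every sp-pomset with at least two nodes factors into two \emph{nonempty} sp-factors, since a trivial factorisation such as $C = C \cdot 1$ does not decrease your measure. The paper's structural induction sidesteps this entirely --- if the construction presents $C = V \cdot W$ with, say, $W = 1$, that is harmless, because $1 \in \SP$ and the rule sending $C \in L$, $V \in \SP$ to $C \cdot V \in L$ applies regardless; well-foundedness comes from the construction tree rather than from node counts. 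Relatedly, your appeal to Gischer's theorem is unnecessary: the existence of a nontrivial factorisation follows from the inductive definition of $\SP$ in \autoref{definition:pomset-sp} alone, by stripping unit factors from a construction of $C$ (which is, in effect, what the paper's structural induction does implicitly), so N-freeness plays no role here. Finally, your explicit use of commutativity of $\parallel$ to route the $*$-carrying factor to the right of the single parallel rule makes precise a step the paper dismisses as ``analogous''; the same rewrite is needed there as well, so this is a small point in your proposal's favour.
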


We will identify \emph{totally ordered pomsets} with words, i.e., $\Sigma^* \subseteq \SP$.
If the pomset $U$ inserted in a context $C$ is a non-empty word, and the resulting pomset is a parallel pomset, then we can infer how to factorise $C$.
\begin{restatable}{lemma}{contextprimelocus}%
\label{lemma:context-prime-locus}
Let $C \in \PCsp$ be a pomset context, let $V, W \in \Pom$, and let $U \in \Sigma^*$ be non-empty.
If $C[U] = V \parallel W$, then there exists a $C' \in \PCsp$ such that either $C = C' \parallel W$ and $C'[U] = V$, or $C = V \parallel C'$ and $C'[U] = W$.
\end{restatable}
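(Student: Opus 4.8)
The plan is to argue directly from the event-level structure of the substitution, reducing the claim to the fact that a non-empty word, once plugged in, must occupy a single ``side'' of any parallel split. Fix a representative $\lp{c}$ of $C$ with gap $s_* \in S_\lp{c}$, and a representative $\lp{u}$ of $U$ with $S_\lp{u}$ disjoint from $S_\lp{c}$, so that $C[U]$ is carried by $(S_\lp{c} - \{s_*\}) \cup S_\lp{u}$. A parallel factorisation $C[U] = V \parallel W$ amounts to a partition $S_{C[U]} = P \sqcup Q$ into two blocks with no $\leq_{C[U]}$-relation between them, where $P$ carries $V$ and $Q$ carries $W$. Since $U \in \Sigma^*$ is a non-empty word, its events $S_\lp{u}$ form a non-empty chain in $C[U]$ (their internal order is exactly $\leq_\lp{u}$, which is total); being pairwise comparable, they cannot be split across the mutually incomparable blocks $P$ and $Q$, so $S_\lp{u} \subseteq P$ or $S_\lp{u} \subseteq Q$. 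I treat the first case; the second is symmetric and yields $C = V \parallel C'$ with $C'[U] = W$.

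The key structural observation is that $S_\lp{u}$ forms an interval that ``is'' the gap: every original event $t \in S_\lp{c} - \{s_*\}$ is related to all of $S_\lp{u}$ uniformly. Indeed, inspecting the rules generating $\leq_{\lp{c}[\lp{u}]}$, the only relations between $S_\lp{u}$ and the rest come from the two rules governing $s_*$, which place every event of $S_\lp{u}$ below $t$ exactly when $s_* \leq_\lp{c} t$, and above $t$ exactly when $t \leq_\lp{c} s_*$. Taking the transitive closure preserves this uniformity, since any order chain entering $S_\lp{u}$ must do so through one of these rules and hence reaches all of $S_\lp{u}$ at once. Consequently the restriction of $\leq_{C[U]}$ to $S_\lp{c} - \{s_*\}$ is exactly $\leq_\lp{c}$, and an original event is comparable to some event of $S_\lp{u}$ if and only if it is comparable to $s_*$ in $\lp{c}$.

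With this in hand I define $C'$ by contracting $S_\lp{u}$ back to a single gap: its carrier is $(P - S_\lp{u}) \cup \{s_*\}$, its labels are inherited with the reinstated node labelled $*$, and $s_*$ is placed above or below each original event precisely as the whole block $S_\lp{u}$ was. This $C'$ has exactly one $*$-node. Expanding the gap inverts the contraction, so $C'[U]$ has carrier $P$, the same labels, and the same order as $V$ (the internal order on $S_\lp{u}$ being $\leq_\lp{u}$, and the interactions with $P - S_\lp{u}$ matching by the uniformity above); thus $C'[U] = V$. Finally $C = C' \parallel W$: the block $Q$ consists only of original events (as $S_\lp{u} \subseteq P$), it is $\leq_\lp{c}$-incomparable to $P - S_\lp{u}$, and it is incomparable to $s_*$ as well, for any relation between $s_*$ and some $q \in Q$ would, by uniformity, force all of $S_\lp{u} \subseteq P$ to be comparable to $q \in Q$, contradicting the absence of order across $P$ and $Q$. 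Since all remaining labels and orders are inherited unchanged, $C$ splits as the parallel composition of $C'$ on $(P - S_\lp{u}) \cup \{s_*\}$ and $W$ on $Q$; being a parallel factor of the series-parallel context $C$, the one-gap pomset $C'$ is itself $\N$-free and hence lies in $\PCsp$.

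The main obstacle is the order bookkeeping of the second and third paragraphs: rigorously deriving the uniformity (``module'') property from the inductive definition of $\leq_{\lp{c}[\lp{u}]}$ through its transitive closure, and then pinning down that no order relation crosses between the reconstituted gap and the $W$-block. An alternative route is structural induction on $C$ via \autoref{lemma:series-parallel-contexts-inductive}; however, the sequential cases there require separating out empty sub-pomsets, and the parallel case must cope with the non-canonical binary split $V \parallel W$, so the direct interval argument above is cleaner.
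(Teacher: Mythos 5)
Your proposal is correct, but it takes a genuinely different route from the paper. The paper proves this lemma by structural induction on $C$, using precisely the inductive characterisation of $\PCsp$ from \autoref{lemma:series-parallel-contexts-inductive} (the route you explicitly decided against): the base case $C = *$ and the sequential cases are dispatched via a lemma from the prior CKA paper stating that a parallel pomset that factors sequentially must have a unit factor, and the case $C = D \parallel X$ uses a second lemma from that paper refining parallel factorisations ($D[U] = Y_0 \parallel Y_1$, $X = Z_0 \parallel Z_1$, etc.), after which the induction hypothesis applies to $D$. Your argument instead works directly with labelled-poset representatives: the image of $S_\lp{u}$ is a non-empty chain and a module (every event of $S_\lp{c} - \{s_*\}$ relates uniformly to all of $S_\lp{u}$, exactly as it relates to $s_*$ in $\lp{c}$), a chain cannot straddle the order-free partition $P \sqcup Q$ induced by $V \parallel W$, and contracting the module back to a gap yields $C'$; membership of $C'$ in $\PCsp$ follows because \N-freeness is inherited by induced subposets, so Gischer's theorem applies. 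Both arguments are sound. The paper's induction buys reuse of existing machinery and stays at the level of pomset algebra, never touching the inductively defined order $\leq_{\lp{c}[\lp{u}]}$; your argument buys self-containedness (no appeal to the external factorisation lemmas) and isolates the structural reason the lemma is true, at the cost of the order-theoretic bookkeeping you flag — though that obstacle is smaller than you suggest, since the four rules generating $\leq_{\lp{c}[\lp{u}]}$ are already closed under transitivity (any composite relation passing through $S_\lp{u}$ forces comparability with $s_*$ in $\lp{c}$), so the uniformity property and the fact that $\leq_{\lp{c}[\lp{u}]}$ restricts to $\leq_\lp{c}$ on $S_\lp{c} - \{s_*\}$ follow by a short case analysis rather than a delicate closure argument. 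One point worth making explicit in a full write-up: non-emptiness of $U$ is what turns "any relation between $s_*$ and $q \in Q$ forces a cross-relation from $S_\lp{u}$ to $Q$" into an actual contradiction, and this is the only place it is needed; your sketch uses this correctly but silently.
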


Application of series-parallel contexts preserves series-parallel pomsets.
\begin{restatable}{lemma}{contextpreserveseriesparallel}%
\label{lemma:context-preserve-series-parallel}
Let $C \in \PCsp$.
If $U \in \SP$, then $C[U] \in \SP$ as well.
\end{restatable}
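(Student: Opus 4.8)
The plan is to proceed by structural induction on $C$, using the inductive characterisation of $\PCsp$ provided by \autoref{lemma:series-parallel-contexts-inductive}. Since $\SP$ is by definition closed under sequential and parallel composition, the whole argument reduces to showing that plugging into a context built by one of the inductive rules produces the expected composition of pomsets, and then invoking the closure of $\SP$ together with the induction hypothesis.

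The crucial sub-claim, which I would establish first as a set of small distributivity identities, is that context application commutes with the operations used to build series-parallel contexts. Concretely, for $U' \in \SP$, $C' \in \PCsp$ and $U \in \Pom$ one has
\begin{mathpar}
(U' \cdot C')[U] = U' \cdot C'[U]
\and
(C' \cdot U')[U] = C'[U] \cdot U'
\and
(U' \parallel C')[U] = U' \parallel C'[U].
\end{mathpar}
Each of these is verified by comparing the two sides at the level of representative labelled posets: the carriers and labellings agree immediately, so only the orders need checking. For the sequential cases this amounts to observing that in $U' \cdot C'$ the gap $s_*$ lies after all of $S_{\lp{u}'}$ (resp.\ before all of it), so after substitution every point of $U$ inherits exactly the ordering relative to $U'$ that the rules defining $\leq_{\lp{c}[\lp{u}]}$ prescribe; for the parallel case the gap is incomparable to $S_{\lp{u}'}$, and the inserted points remain incomparable to it.

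With these identities in hand the induction is routine. In the base case $C = *$ we have $C[U] = U \in \SP$ by hypothesis. For $C = U' \cdot C'$ the first identity gives $C[U] = U' \cdot C'[U]$; the induction hypothesis yields $C'[U] \in \SP$, and closure of $\SP$ under sequential composition gives $C[U] \in \SP$. The cases $C = C' \cdot U'$ and $C = U' \parallel C'$ are handled identically using the remaining two identities and closure of $\SP$ under the corresponding operation.

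The main obstacle is the verification of the distributivity identities, since this is where the somewhat delicate definition of $\leq_{\lp{c}[\lp{u}]}$ via its four generating rules must be matched against the definitions of sequential and parallel pomset composition. Everything else is a mechanical induction driven by \autoref{lemma:series-parallel-contexts-inductive} and the closure properties of $\SP$. One convenient shortcut is to phrase the identities at the level of orders and appeal to the fact that $\leq_{\lp{c}[\lp{u}]}$ is defined as the \emph{smallest} relation closed under the four rules, so it suffices to check that each side is generated by the same closure conditions.
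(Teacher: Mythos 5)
Your proposal is correct and follows essentially the same route as the paper: the paper likewise inducts on the characterisation of $\PCsp$ from \autoref{lemma:series-parallel-contexts-inductive}, and isolates exactly your distributivity identities (together with the base-case identity $*[U] = U$) as a separate auxiliary lemma, proved by the same comparison of representative labelled posets. No gaps.
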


If we plug the empty pomset into a context, then any subsumed pomset can be obtained by plugging the empty pomset into a subsumed context.
If the subsumed pomset is series-parallel, then so is the subsumed context.

\begin{restatable}{lemma}{contexterasure}%
\label{lemma:context-erasure}
Let $C \in \PC$ and $V \in \Pom$ with $V \sqsubseteq C[1]$.
We can construct $C' \in \PC$ such that $C' \sqsubseteq C$ and $C'[1] = V$.
Moreover, if $V \in \SP$, then $C' \in \PCsp$.
\end{restatable}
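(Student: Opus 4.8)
The plan is to \emph{reinsert a fresh gap node into $V$}. Observe first that $C[1]$ is obtained from $C$ by deleting the unique $*$-node $s_*$ together with every order relation touching it, so (writing $V = [\lp v]$ and $C = [\lp c]$) the subsumption $V \sqsubseteq C[1]$ supplies a label- and order-preserving bijection $h \colon S_{\lp c} \setminus \{s_*\} \to S_{\lp v}$. Throughout I extend $h$ to $g \colon S_{\lp c} \to S_{\lp v} \cup \{s'_*\}$ by sending $s_* \mapsto s'_*$ for a fresh node $s'_*$; this $g$ is the bijection I will use to witness $C' \sqsubseteq C$.

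For the first claim I take $C'$ to be the labelled poset with carrier $S_{\lp v} \cup \{s'_*\}$, with $s'_*$ labelled $*$ and the remaining nodes labelled as in $\lp v$, and with order the transitive closure of $\leq_{\lp v} \cup \{(g(s), g(s')) : s \leq_{\lp c} s'\}$. By construction $g$ preserves labels and maps $\leq_{\lp c}$ into $\leq_{C'}$, so once $\leq_{C'}$ is shown to be a partial order I obtain $C' \sqsubseteq C$. Antisymmetry is the only subtle point, but it is routine: any chain running through $s'_*$ arises from nodes $s \leq_{\lp c} s_* \leq_{\lp c} s'$ in $C$, whence $s \leq_{\lp c} s'$ and therefore $h(s) \leq_{\lp v} h(s')$ already. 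The same observation shows that $\leq_{C'}$ restricted to $S_{\lp v}$ is exactly $\leq_{\lp v}$, so that no relation among the $h$-images is created by detours through $s'_*$, and in particular erasing $s'_*$ returns $V$, i.e.\ $C'[1] = V$.

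For the moreover part this minimal construction does not suffice: reinserting $s'_*$ with only the forced relations can complete an $\N$ even when $V$ is $\N$-free (for instance when $s'_*$ must sit below a single maximal node that already has another predecessor). I am, however, free to add further relations incident to $s'_*$ --- this neither disturbs $C'[1] = V$ nor breaks $C' \sqsubseteq C$, since it only enlarges $\leq_{C'}$ --- and when $V \in \SP$ I use this freedom to build a context that also meets the first claim. Writing $P = \{h(s) : s <_{\lp c} s_*\}$ and $Q = \{h(s') : s_* <_{\lp c} s'\}$ for the nodes that must lie below, respectively above, the gap, the key compatibility fact is that $p \leq_{\lp v} q$ for every $p \in P$ and $q \in Q$ (again because $p <_{\lp c} s_* <_{\lp c} q$). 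I then proceed by induction on the series--parallel structure of $V$, strengthening the statement to carry $P$ and $Q$: in the base cases $1$ and $\ltr{a}$ the gap is placed in parallel, before, or after the single letter according to $P$ and $Q$; for $V_1 \cdot V_2$ I recurse into the factor meeting $P \cup Q$ or place $s'_*$ at the seam; and for $V_1 \parallel V_2$ the compatibility fact forbids $P$ and $Q$ from straddling the two branches in conflicting ways, so I may recurse into a single branch, or, when $Q = \emptyset$ (resp.\ $P = \emptyset$), append (resp.\ prepend) $s'_*$ to all of $V$. At each step the inductive characterisation of $\PCsp$ (\autoref{lemma:series-parallel-contexts-inductive}) certifies that the assembled context lies in $\PCsp$, while the placement guarantees that $s'_*$ dominates all of $P$ and is dominated by all of $Q$, so that $g$ still witnesses $C' \sqsubseteq C$.

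The main obstacle is exactly this last case analysis: the gap must be positioned so as to sit above $P$, below $Q$, \emph{and} not complete an $\N$, all while keeping the erasure equal to $V$. Gischer's characterisation of $\SP$ as the $\N$-free pomsets is the convenient tool for checking $\N$-freeness locally, and the compatibility fact between $P$ and $Q$ is what makes a consistent placement possible at all; reconciling these two demands in the sequential and parallel cases is where the real work lies.
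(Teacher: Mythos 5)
Your proof is correct, but for the crucial ``moreover'' clause it takes a genuinely different route from the paper. The first half coincides with the paper's argument: up to renaming along $h$, your $C'$ is exactly the paper's, namely the carrier of $C$ equipped with the smallest transitive relation containing both $\leq_{\lp{v}}$ and $\leq_{\lp{c}}$, and the paper proves antisymmetry and $C'[1]=V$ via interpolation properties that formalise precisely your observation that any chain passing through the gap can be short-cut to a $\leq_{\lp{v}}$-edge. The divergence is in how a series-parallel such context is obtained when $V \in \SP$. The paper proves a free-standing repair lemma (\autoref{lemma:context-to-sp-context}): in any context whose erasure lies in $\SP$, every \N-pattern must involve the gap node $s_*$, so a case analysis on the position of $s_*$ in the pattern produces one extra order relation incident to $s_*$ that destroys the pattern while preserving the erasure and the partial-order axioms; each repair strictly increases the order, so well-founded induction on $\sqsubseteq$ terminates in a member of $\PCsp$. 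You instead build the series-parallel context directly, by structural induction on the sp-decomposition of $V$, threading through the forced sets $P$ and $Q$ and their compatibility; this is sound --- in the parallel case compatibility indeed confines a nonempty $P$ and a nonempty $Q$ to a single branch, and in the sequential case the only possible configurations are seam placement or recursion into one factor --- and it buys a one-pass construction with no termination argument and none of the relational-algebra facts the paper needs to verify that each repaired relation is still a partial order. What the paper's route buys is a reusable lemma that never mentions the original $C$ (it sequentialises \emph{any} context with series-parallel erasure) and a more local case analysis. If you write yours up in full, spell out the sequential case $V = V_1 \cdot V_2$ in which $P$ meets both factors: compatibility then forces $Q$ into $V_2$, you recurse into $V_2$ with $Q$ and only the part of $P$ lying in $V_2$, and the remainder of $P$ sits below the gap automatically because the gap lands in the span of $V_2$; also record that $P \cap Q = \emptyset$, so the compatibility fact is strict, which is what the incomparability argument in the parallel case actually uses.
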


An analogue to the previous lemma can be obtained if instead of the empty pomset one inserts a single letter pomset $a$.
\begin{restatable}{lemma}{contextsubsumption}%
\label{lemma:context-subsumption}
Let $C \in \PC$, $V \in \Pom$ and $\ltr{a} \in \Sigma$ with $V \sqsubseteq C[\ltr{a}]$.
We can construct $C' \in \PC$ s.t.\ $C' \sqsubseteq C$ and $C'[\ltr{a}] = V$.
Moreover, if $V \in \SP$, then $C' \in \PCsp$.
\end{restatable}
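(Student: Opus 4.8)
The plan is to leverage the companion result for the empty pomset, \autoref{lemma:context-erasure}, by ``hiding'' the distinguished letter $\ltr{a}$ inside the context and then re-exposing it at the end. First I would pick representatives: write $C = [\lp{c}]$ with gap node $s_*$, and $V = [\lp{v}]$, and fix the label- and order-preserving bijection $h : S_{\lp{c}[\lp{a}]} \to S_\lp{v}$ witnessing $V \sqsubseteq C[\ltr{a}]$. Since $C[\ltr{a}]$ plugs a single $\ltr{a}$-labelled node into the gap, its carrier is $(S_\lp{c} - \{*\}) \cup \{t\}$ where $t$ is the fresh node carrying label $\ltr{a}$. The idea is that $h$ identifies a unique node $s_a := h(t) \in S_\lp{v}$ which must also be labelled $\ltr{a}$, and this is the node we will turn back into a gap.

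The key steps, in order, are as follows. First, form the context $D \in \PC$ obtained from $C$ by replacing the gap label $*$ with the letter $\ltr{a}$; concretely $D$ is just $C[\ltr{a}]$ viewed as a pomset over $\Sigma$, so that $D[1] = C[\ltr{a}]$ and $V \sqsubseteq D[1]$. Second, apply \autoref{lemma:context-erasure} to $D$ and $V$ to obtain a context $C' \in \PC$ with $C' \sqsubseteq D$ and $C'[1] = V$, and with $C' \in \PCsp$ whenever $V \in \SP$. At this point $C'$ has its own gap node, but what I actually want is a context whose gap sits at the $\ltr{a}$-position $s_a$; so the third step is to inspect how \autoref{lemma:context-erasure} places its gap and relabel. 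The cleanest route is to re-run the erasure argument not on $D$ but on a modified target: I would instead directly build $C'$ by taking $V = [\lp{v}]$, designating the node $s_a$ as the new gap (relabelling $\lambda_\lp{v}(s_a) = \ltr{a} \mapsto *$), and verifying that the restriction of $h$ to $S_\lp{c} - \{*\}$ together with $t \mapsto s_*$ witnesses $C' \sqsubseteq C$. Since $h$ preserves and reflects labels and preserves order, the relabelled bijection preserves labels (it sends the gap to the gap and $\ltr{a}$-nodes are matched) and preserves order, giving exactly $C' \sqsubseteq C$ and $C'[\ltr{a}] = V$ by construction.

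The main obstacle I expect is the \emph{series-parallel} clause rather than the plain subsumption statement. The plain part follows essentially mechanically from \autoref{lemma:context-erasure} (or a direct relabelling of $h$), because subsumption only adds order and never disturbs $\N$-freeness of the underlying order relation in a way that the Gischer characterisation cannot handle. For the $\PCsp$ conclusion, though, I must check that when $V \in \SP$ the constructed $C'$ is genuinely a series-parallel \emph{context}, i.e. that $C'$ is $\N$-free as a pomset over $\Sigma \cup \{*\}$. The natural argument is: $C'[\ltr{a}] = V \in \SP$, so $C'[\ltr{a}]$ is $\N$-free by Gischer's theorem; replacing the single $\ltr{a}$-node by a gap node cannot create an $\N$ pattern, since the gap node inherits exactly the order relations of the $\ltr{a}$-node it replaces. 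Hence $C'$ is $\N$-free, so by \autoref{lemma:context-preserve-series-parallel} together with Gischer's characterisation $C' \in \PCsp$. I would make this precise by observing that the carrier and order of $C'$ differ from those of $C'[\ltr{a}] = V$ only in the identity of one label, so any purported forbidden $\N$-configuration in $C'$ would transfer verbatim to $V$, contradicting $V \in \SP$.
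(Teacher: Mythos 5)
Your operative proof---the ``cleanest route'' in which you relabel the $\ltr{a}$-node $s_a = h(t)$ of $V$ as the gap, check $C' \sqsubseteq C$ by adjusting the subsumption bijection to send the gap of $C$ to $s_a$, verify $C'[\ltr{a}] = V$ directly, and establish the $\PCsp$ clause by transferring any putative \N-pattern of $C'$ verbatim to $V$---is essentially identical to the paper's proof, which performs the same relabelling construction (packaging the ``plugging a letter'' verification as the auxiliary \autoref{lemma:context-plug-letter} and working with identified carriers rather than an explicit bijection) and closes with the same \N-freeness-transfer argument. One remark: the detour you first sketch is genuinely ill-formed, since the object $D$ obtained by replacing $*$ with $\ltr{a}$ has no gap node, so $D \notin \PC$ and $D[1]$ is undefined, and \autoref{lemma:context-erasure} would in any case place its gap at an uncontrolled position rather than at $s_a$---but as you discard that route in favour of the direct construction, the proof stands.
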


\section{Concurrent Kleene Algebra with Hypotheses}%
\label{section:ckahypotheses}

Kleene algebra has basic axioms about how program composition operators should work in general, and hence does not make any assumptions about how these operators work on specific programs.
When reasoning about equivalence in a programming language, however, it makes sense to embed domain-specific truths about the operators into the axioms.
For instance, if a programming language includes assignments to variables, %
then subsequent assignments to the same variable could be merged into one, giving rise to an equation such as
\begin{equation}%
\label{eq:double-assign}
x \leftarrow m \leq x \leftarrow n \cdot x \leftarrow m,
\end{equation}
which says that the behaviour of first assigning $n$, then $m$ to $x$ (on the right) includes the behaviour of simply assigning $m$ to $x$ directly (on the left).

Kleene algebra with hypotheses (KAH)~\cite{cohen-1994,kozen-2002,kozen-mamouras-2014,doumane-kuperberg-pous-pradic-2019} enables the addition of extra axioms, called \emph{hypotheses}, to the axioms of KA\@.
The appeal of KAH is that it allows a wide range of such hypotheses about programs to be added to the equational theory, while retaining the theoretical boilerplate of KA\@.
In particular, it turns out that we can derive a sound model for any set of hypotheses, using the language model that is sound for KA proper~\cite{doumane-kuperberg-pous-pradic-2019}.
Moreover, the completeness and decidability results that hold for KA can be leveraged to obtain completeness and decidability results for some specific types of hypotheses~\cite{cohen-1994,kozen-mamouras-2014,doumane-kuperberg-pous-pradic-2019}; in general, equivalence under other hypotheses may turn out to be undecidable~\cite{kozen-2002}.

In this section, we propose a generalisation of so-called Kleene algebra with hypotheses to a concurrent setting, showing how one can obtain a sound (pomset language) model for any set of hypotheses.
We then discuss a number of techniques that allow one to prove completeness and decidability of the resulting system for a large set of hypotheses, by relying on  analogous results about CKA\@.

\begin{definition}
A \emph{hypothesis} is an inequation $e \leq f$ where $e, f \in \termscka$.
When $H$ is a set of hypotheses, we write $\equivcka^H$ for the smallest congruence on $\termscka$ generated by the hypotheses in $H$ as well as the axioms and implications that build $\equivcka$.
More concretely, whenever $e \leq f \in H$, also $e \leqqcka^H f$.%
\end{definition}

A hypothesis that declares two programs to be equivalent, such as in~\eqref{eq:double-assign}, can be encoded by including both $e \leq f$ and $f \leq e$ in $H$.

\begin{example}
Suppose the set of primitive actions $\Sigma$ includes the increments of the form $\mathtt{incr}\, x$, as well as a statement $\mathtt{print}$, which writes the complete state of the machine (including variables) on the standard output.
Since we would like to depict the state consistently, the state should not change while the output is rendered; hence, $\mathtt{print}$ cannot be executed concurrently with any other action.
Instead, when a program containing $\mathtt{print}$ is scheduled to run in parallel with an assignment, it must be interleaved such that the assignment runs either entirely before or after $\mathtt{print}$.
To encode this, we can include in $H$ the hypotheses
\[
    \mathtt{incr}\, x \parallel \mathtt{print}
        = \mathtt{incr}\, x \cdot \mathtt{print} + \mathtt{print} \cdot \mathtt{incr}\, x
\]
for all variables $x$.
This allows us to prove, for instance, that
\[
    \mathtt{print} \cdot \mathtt{incr}\, x \cdot \mathtt{incr}\, x \cdot \mathtt{print}
        \leqqcka^H {(\mathtt{incr}\, x \parallel \mathtt{print})}^*
\]
That is, if we run some number of increments and $\mathtt{print}$ statements in parallel, it is possible that $x$ is incremented twice between print statements.
\end{example}

To obtain a model of CKAH, it is not enough to use $\semcka{-}$, as some programs equated by the hypotheses might have different semantics.
To get around this, we adapt the method from~\cite{doumane-kuperberg-pous-pradic-2019}: take $\semcka{-}$ as a base semantics, and adapt the resulting language using hypotheses, such that the pomsets that could be obtained by rearranging the term using the hypotheses are also present in the language:

\begin{definition}
Let $L \subseteq \Pom$.
We define the $H$-closure of $L$, written $\closure L$, as the smallest language containing $L$ such that for all $e \leq f \in H$ and $C \in \PCsp$, if $C[\semcka f]\subseteq \closure L$, then $C[\semcka e]\subseteq\closure L$.
Formally, $\closure L$ may be described as the smallest language satisfying the following inference rules:
\begin{mathpar}
    \inferrule{~}{%
        L\subseteq \closure L
    }
    \and
    \inferrule{%
      e\leq f \in H\\
      C\in\PCsp\\
      C[\semcka f]\subseteq \closure L
    }{%
      C[\semcka e]\subseteq \closure L
    }
\end{mathpar}
\end{definition}

\begin{example}
Continuing with the hypotheses $H$ and actions $\Sigma$ used in the previous examples, note that if $L = \semcka{\mathtt{incr}\, x \parallel \mathtt{print}}$, then we have that
\[
    \mathtt{incr}\, x \parallel \mathtt{print}\in\closure L
\]
Choose $C = *$; because $\mathtt{incr}\, x \cdot \mathtt{print} + \mathtt{print} \cdot \mathtt{incr}\, x \leq \mathtt{incr}\, x \parallel \mathtt{print} \in H$ and for all $U \in \semcka{\mathtt{incr}\, x \parallel \mathtt{print}}$ we have $C[U]\in L \subseteq \closure L$, also
\[
    C[\mathtt{incr}\, x \cdot \mathtt{print}] = \mathtt{incr}\, x \cdot \mathtt{print}\in\closure L
\]
and therefore $\mathtt{incr}\, x \cdot \mathtt{print} \in \closure L$.
\end{example}

We observe the following useful properties about the interaction between closure and other operators on pomset languages.

\begin{restatable}{lemma}{compositionversusclosure}%
\label{lemma:composition-vs-closure}
Let $L, K \subseteq \Pom$ and $C \in \PCsp$.
The following hold.

\smallskip
\begin{minipage}{0.40\textwidth}
\begin{enumerate}[\hspace{-.56cm}1.]
    \item\label{property:hypothesis-closure}
    $L \subseteq {\closure K}$ iff ${\closure L} \subseteq {\closure K}$.

    \item\label{property:hypothesis-monotone}
    If $L \subseteq K$, then ${\closure L} \subseteq {\closure K}$.

    \item\label{property:hypothesis-union}
    ${\closurep {L \cup K}} = {\closurep {{\closure L} \cup {\closure K}}}$

    \item\label{property:hypothesis-concat}
    ${\closurep {L \cdot K}} = {\closurep {{\closure L} \cdot {\closure K}}}$
\end{enumerate}
\end{minipage}
\begin{minipage}{0.54\textwidth}
\begin{enumerate}[\hspace{-.05cm}1.]
 \setcounter{enumi}{4}
    \item\label{property:hypothesis-parallel}
    ${\closurep {L \parallel K}} = {\closurep {{\closure L} \parallel {\closure K}}}$

    \item\label{property:hypothesis-star}
    ${\closurep {L^*}} = \closure {({\left(\closure L\right)}^*)}$

    \item\label{property:hypothesis-context}
    If $\closure{L} \subseteq \closure{K}$, then $\closure{C[L]} \subseteq \closure{C[K]}$.

    \item\label{property:hypothesis-sp}
    If $L \subseteq \SP$, then $\closure{L} \subseteq \SP$.
\end{enumerate}
\end{minipage}
\end{restatable}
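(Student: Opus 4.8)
The plan is to treat $\closure{-}$ as a closure operator and derive everything from (a) its defining minimality property and (b) a single ``context compatibility'' lemma. I would first record the cheap structural facts. Extensivity, $L\subseteq\closure L$, is immediate from the first inference rule. For Property~1, the forward implication holds because $\closure K$ itself satisfies the two inference rules (it is by definition the least such language) and contains $L$ by hypothesis, so minimality of $\closure L$ forces $\closure L\subseteq\closure K$; the backward implication is just extensivity. Property~2 is then immediate ($L\subseteq K\subseteq\closure K$, then apply Property~1), and idempotence $\closure{\closure L}=\closure L$ drops out of Property~1 applied to $\closure L\subseteq\closure L$. These three facts --- extensive, monotone, idempotent --- are the only closure-operator machinery the rest of the proof needs.

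The technical heart is the auxiliary lemma $C[\closure L]\subseteq\closure{C[L]}$ for every $C\in\PCsp$. I would prove this by the induction principle attached to the inductive definition of $\closure L$: setting $X=\{U: C[U]\in\closure{C[L]}\}$, it suffices to show $L\subseteq X$ and that $X$ is closed under the second inference rule. The first is clear, since $C[U]\in C[L]\subseteq\closure{C[L]}$. For the second, given $e\le f\in H$ and $C'\in\PCsp$ with $C'[\semcka f]\subseteq X$, I form the composite context $C[C']$ and invoke two facts about plugging: that substitution is associative, $C[C'[U]]=(C[C'])[U]$, and that $\PCsp$ is closed under composition, so $C[C']\in\PCsp$. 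The latter I would get from \autoref{lemma:context-preserve-series-parallel} together with Gischer's \N-freeness characterisation (plugging a letter into $C[C']$ yields an sp-pomset, and \N-freeness ignores labels). Then $C'[\semcka f]\subseteq X$ reads $(C[C'])[\semcka f]\subseteq\closure{C[L]}$, and a single application of the closure rule for $\closure{C[L]}$ with hypothesis $e\le f$ and context $C[C']$ yields $(C[C'])[\semcka e]\subseteq\closure{C[L]}$, i.e.\ $C'[\semcka e]\subseteq X$. Property~7 then follows formally: $\closure L\subseteq\closure K$ gives $C[L]\subseteq C[\closure K]\subseteq\closure{C[K]}$, and Property~1 upgrades this to $\closure{C[L]}\subseteq\closure{C[K]}$.

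For the remaining equalities I would systematically split each into an easy and a hard inclusion. The easy one always comes from extensivity and monotonicity: e.g.\ $L\cdot K\subseteq\closure L\cdot\closure K$ yields $\closure{(L\cdot K)}\subseteq\closure{(\closure L\cdot\closure K)}$, and likewise for $\cup$, $\parallel$ and $(-)^*$. The hard inclusion always reduces, via Property~1, to pushing the closure out of the operation, e.g.\ $\closure L\cdot\closure K\subseteq\closure{(L\cdot K)}$. The observation that makes these tractable is that composing with a fixed series-parallel pomset $V$ is an instance of plugging into a context: $U\mapsto U\cdot V$ is $(*\cdot V)[-]$, $U\mapsto V\cdot U$ is $(V\cdot *)[-]$, and $U\mapsto U\parallel V$ is $(*\parallel V)[-]$, each lying in $\PCsp$ by \autoref{lemma:series-parallel-contexts-inductive}. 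So the context-compatibility lemma applies one argument at a time; fixing the right argument and unioning, then fixing the left and using idempotence, gives $\closure L\cdot\closure K\subseteq\closure{(L\cdot K)}$, and the parallel case is identical. The union case needs no contexts (both $\closure L,\closure K\subseteq\closure{(L\cup K)}$ by monotonicity). The star case follows by iterating the concatenation result to get $(\closure L)^n\subseteq\closure{(L^n)}\subseteq\closure{(L^*)}$ and taking the union. Property~8 I would establish separately and first, directly from minimality: $\SP$ contains $L$ and is itself closed under the inference rule, since $C\in\PCsp$ and $\semcka e\subseteq\SP$ force $C[\semcka e]\subseteq\SP$ by \autoref{lemma:context-preserve-series-parallel}, whence $\closure L\subseteq\SP$.

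I expect the main obstacle to be the context-compatibility lemma, and within it the closure of $\PCsp$ under context composition with associative plugging; this is the one genuinely geometric step, whereas everything else is bookkeeping with minimality, monotonicity and idempotence. A secondary point to handle with care is that the operation-compatibility arguments for Properties~3--6 require the composed-with pomsets to be series-parallel, so that the ad hoc contexts $*\cdot V$, $V\cdot *$, $*\parallel V$ genuinely lie in $\PCsp$; this is exactly what Property~8 secures, since the semantics and hence the closures in play remain within $\SP$, which is why I would prove Property~8 before the operational identities.
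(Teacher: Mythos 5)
Your proposal is correct, but it is organised around a different key lemma than the paper's proof. The paper establishes the hard inclusions of properties 4 and 5 by nested inductions on the constructions of $\closure{L}$ and $\closure{K}$, manufacturing the required contexts on the fly ($C' = V \cdot C$ for $V \in L$ in the inner induction, $C' = C \cdot W$ for $W \in \closure{K}$ in the outer one), handles the star exactly as you do, and then derives property 7 from monotonicity together with the already-proved equalities 4 and 5 (which give $\closure{C[\closure{L}]} = \closure{C[L]}$ by decomposing $C$). You instead isolate the single context-compatibility lemma $C[\closure{L}] \subseteq \closure{C[L]}$, prove it once by rule-induction using the composite context $C[C']$, and then obtain properties 4, 5, 6 and 7 uniformly from it via the one-hole contexts $* \cdot V$, $V \cdot *$ and $* \parallel V$ plus closure-operator bookkeeping. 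Your route is more modular --- property 7 in particular becomes a two-line consequence rather than a corollary of 4 and 5 --- but it rests on two geometric facts the paper never has to state: associativity of plugging, $(C[C'])[U] = C[C'[U]]$, and closure of $\PCsp$ under context composition. Both are true, and provable by induction on the characterisation in \autoref{lemma:series-parallel-contexts-inductive} together with \autoref{lemma:substitution-vs-composition} (your alternative argument for the second, via Gischer's $\N$-freeness being label-independent, also works), but they are genuine proof obligations; the paper's nested inductions sidestep them because they only ever plug pomsets, never contexts, into contexts. Finally, both proofs share the same unstated requirement that the pomsets glued into the ad hoc contexts be series-parallel (the paper's $V \cdot C$ needs $V \in \SP$ exactly as your $V \cdot *$ does, even though the statement allows $L, K \subseteq \Pom$); you flag this explicitly and order property 8 first to discharge it, which is if anything more careful than the paper.
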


\begin{remark}
Property~\eqref{property:hypothesis-closure} states that $\closure{-}$ is a closure operator.
However, it is not in general a Kuratowski closure operator~\cite{kuratowski-1922}, since it fails to commute with union.
For instance, let $\ltr{a}, \ltr{b}, \ltr{c} \in \Sigma$ and $H = \{ \ltr{a} \leq \ltr{b} + \ltr{c} \}$; then $\closure {\{ \ltr{b} \}} \cup \closure {\{ \ltr{c} \}} = \{ \ltr{b}, \ltr{c} \}$, while $\ltr{a} \in \closurep {\{ \ltr{b} \} \cup \{ \ltr{c} \}}$.
\end{remark}

Using~\autoref{lemma:composition-vs-closure}, we can show that, if we combine the semantics from $\semcka{-}$ with $H$-closure, we obtain a sound semantics for CKA with hypotheses $H$.

\begin{restatable}[Soundness]{lemma}{soundness}%
\label{lemma:soundness}
If $e \equivcka^H f$, then $\closure {\semcka{e}} = \closure {\semcka{f}}$.
\end{restatable}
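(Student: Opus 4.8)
The plan is to show that the relation
\[
    R = \{ (e,f) \in \termscka \times \termscka : \closure{\semcka{e}} = \closure{\semcka{f}} \}
\]
contains $\equivcka^H$. Since $\equivcka^H$ is by definition the \emph{smallest} congruence on $\termscka$ that contains all CKA axioms, that satisfies every hypothesis of $H$, and that is closed under the two least fixpoint implications, it suffices to verify that $R$ has all of these properties; minimality then gives $\equivcka^H \subseteq R$, which is exactly the claim. I will freely use the semantic counterpart of the natural order: from property~\eqref{property:hypothesis-closure} and property~\eqref{property:hypothesis-monotone} of \autoref{lemma:composition-vs-closure} one gets that $\closure{\semcka{e}} \subseteq \closure{\semcka{f}}$ holds iff $\closure{\semcka{e+f}} = \closure{\semcka{f}}$, so that $e \leqqcka^H f$ corresponds semantically to $\closure{\semcka{e}} \subseteq \closure{\semcka{f}}$, and I switch between the two without comment.

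That $R$ is an equivalence relation is immediate. That it is a congruence follows directly from \autoref{lemma:composition-vs-closure}: for example, if $\closure{\semcka{e}} = \closure{\semcka{e'}}$ and $\closure{\semcka{f}} = \closure{\semcka{f'}}$, then property~\eqref{property:hypothesis-concat} gives
\[
    \closure{\semcka{e \cdot f}} = \closurep{\closure{\semcka{e}} \cdot \closure{\semcka{f}}} = \closurep{\closure{\semcka{e'}} \cdot \closure{\semcka{f'}}} = \closure{\semcka{e' \cdot f'}},
\]
and the cases for $+$, $\parallel$ and ${}^*$ are identical using properties~\eqref{property:hypothesis-union}, \eqref{property:hypothesis-parallel} and~\eqref{property:hypothesis-star}. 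For the CKA axioms, the forward direction of \autoref{theorem:bka-completeness-decidability} yields $\semcka{e} = \semcka{f}$ for each axiom $e \equivcka f$, whence $\closure{\semcka{e}} = \closure{\semcka{f}}$ at once. For a hypothesis $e \leq f \in H$, instantiating the closure rule with the trivial context $C = {*}$ and $L = \semcka{f}$ (for which $C[\semcka{f}] = \semcka{f} \subseteq \closure{\semcka{f}}$ holds by the inclusion rule) yields $\semcka{e} = C[\semcka{e}] \subseteq \closure{\semcka{f}}$, and then property~\eqref{property:hypothesis-closure} gives $\closure{\semcka{e}} \subseteq \closure{\semcka{f}}$, i.e.\ $e \leqqcka^H f$ holds in $R$.

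The main obstacle is the two least fixpoint implications, as these are the only rules whose soundness is not a one-line computation. Consider the left variant: assuming $\closure{\semcka{e + f \cdot g}} \subseteq \closure{\semcka{g}}$, I must derive $\closure{\semcka{f^* \cdot e}} \subseteq \closure{\semcka{g}}$. Write $K = \closure{\semcka{g}}$, which is closed. Unfolding the premise through property~\eqref{property:hypothesis-closure} gives $\semcka{e} \subseteq K$ and $\semcka{f} \cdot \semcka{g} \subseteq K$. The crucial step is to promote the latter to $\semcka{f} \cdot K \subseteq K$; using property~\eqref{property:hypothesis-concat} together with idempotence of $\closure{-}$,
\[
    \closurep{\semcka{f} \cdot K} = \closurep{\semcka{f} \cdot \closure{\semcka{g}}} = \closurep{\closure{\semcka{f}} \cdot \closure{\semcka{g}}} = \closurep{\semcka{f} \cdot \semcka{g}} \subseteq K,
\]
so $\semcka{f} \cdot K \subseteq K$. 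A routine induction on $n$ then gives $\semcka{f}^n \cdot \semcka{e} \subseteq K$ (base case $\semcka{e} \subseteq K$; step $\semcka{f}^{n+1} \cdot \semcka{e} = \semcka{f} \cdot (\semcka{f}^n \cdot \semcka{e}) \subseteq \semcka{f} \cdot K \subseteq K$), and taking the union over $n$ yields $\semcka{f^* \cdot e} = \semcka{f}^* \cdot \semcka{e} \subseteq K$; a final application of property~\eqref{property:hypothesis-closure} gives $\closure{\semcka{f^* \cdot e}} \subseteq K = \closure{\semcka{g}}$. The right-hand implication is entirely symmetric, establishing $K \cdot \semcka{g} \subseteq K$ in place of $\semcka{f} \cdot K \subseteq K$. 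Having checked every defining property of $\equivcka^H$ for $R$, minimality gives $\equivcka^H \subseteq R$, hence the result.
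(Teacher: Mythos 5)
Your proof is correct and follows essentially the same route as the paper's: phrasing the argument as minimality of $\equivcka^H$ over the relation $R$ is just the fixed-point formulation of the paper's induction on the construction of $\equivcka^H$, and the individual cases (CKA axioms via \autoref{theorem:bka-completeness-decidability}, hypotheses via the trivial context $C = {*}$ with \autoref{lemma:composition-vs-closure}\eqref{property:hypothesis-closure}, congruence via properties \eqref{property:hypothesis-union}--\eqref{property:hypothesis-star}, and the least-fixpoint rules via $\semcka{f} \cdot \closure{\semcka{g}} \subseteq \closure{\semcka{g}}$ followed by iteration) match the paper's step for step. If anything, your treatment of the fixpoint case is slightly more explicit than the paper's, which compresses the induction on $n$ into ``putting this together.''
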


The converse of the above, where semantic equivalence is sufficient to establish axiomatic equivalence, is called \emph{completeness}.
Similarly, we may also be interested in \emph{deciding} whether $\closure {\semcka{e}}$ and $\closure {\semcka{f}}$ coincide.

\begin{definition}
Let $e, f \in \termscka$.
\begin{enumerate}[(i)]
    \item
    If $\closure {\semcka{e}} = \closure {\semcka{f}}$ implies $e \equivcka^H f$, then $H$ is called \emph{complete}.

    \item
    If $\closure {\semcka{e}} = \closure {\semcka{f}}$ is decidable, then $H$ is said to be \emph{decidable}.
\end{enumerate}
\end{definition}

Note that, in the special case where $H = \emptyset$, we know that $H$ is complete and decidable by \autoref{theorem:bka-completeness-decidability}.
One method to find out whether $H$ is complete or decidable is to reduce the problem to this special case.
More concretely, suppose we know $\closure{\semcka{e}} = \closure{\semcka{f}}$, and want to establish that $e \equivcka^H f$.
If we could find a set of hypotheses $H'$ that is complete, and we could map $e$ and $f$ to terms $r(e)$ and $r(f)$ such that $\closure[H']{\semcka{r(e)}} = \closure[H']{\semcka{r(f)}}$, then we would have $r(e) \equivcka^{H'} r(f)$.
If we could then ``lift'' that equivalence to prove $e \equivcka^H f$, we are done.
Similarly, if we would know that $\closure[H']{\semcka{r(e)}} = \closure[H']{\semcka{r(f)}}$ is equivalent to $\closure[H]{\semcka{e}} = \closure[H]{\semcka{f}}$, we could decide the latter. To formalise this intuition, we first need the following.

\begin{definition}
  We say that $H$ \emph{implies} $H'$ if we can use the hypotheses in $H$ to prove those of $H'$, i.e., if for every hypothesis $e \leq f \in H'$ it holds that $e \leqqcka^H f$.
\end{definition}

Implication relates to equivalence and closure as follows.

\begin{restatable}{lemma}{implicationlemma}%
\label{lemma:implication}
Let $H$ and $H'$ be sets of hypotheses such that $H$ implies $H'$.
\begin{enumerate}[(i)]
    \item\label{property:implication-vs-equivalence}
    If $e, f \in \termscka$ with $e \equivcka^{H'} f$, then $e \equivcka^H f$.

    \item\label{property:implication-vs-closure}
    If $L \subseteq \Pom$, then $\closure[H']{L} \subseteq \closure{L}$.

    \item\label{property:implication-vs-double-closure}
    If $L \subseteq \Pom$, then $\closurepsmall{\closure[H']{L}} = \closure{L}$.
\end{enumerate}
\end{restatable}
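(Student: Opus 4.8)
The plan is to treat the three parts in order, with part~(ii) carrying the real weight while parts~(i) and~(iii) follow from standard minimality and closure-operator reasoning.

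For part~\eqref{property:implication-vs-equivalence} I would argue by minimality of $\equivcka^{H'}$. Recall that $\equivcka^{H'}$ is the smallest congruence on $\termscka$ that contains the CKA axioms, is closed under the least fixpoint implications, and satisfies $e \leqqcka^{H'} f$ for every $e \leq f \in H'$. It therefore suffices to check that $\equivcka^H$ enjoys all of these generating properties. It is a congruence containing the CKA axioms and closed under the fixpoint implications by its own definition; and since $H$ implies $H'$, for every $e \leq f \in H'$ we have $e \leqqcka^H f$ by hypothesis. Hence $\equivcka^H$ satisfies every defining condition of $\equivcka^{H'}$, so by minimality $\equivcka^{H'}$ is contained in $\equivcka^H$, which is exactly the claim. (Equivalently, this is an induction on the derivation of $e \equivcka^{H'} f$.)

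For part~\eqref{property:implication-vs-closure} I would again use minimality, this time of the $H'$-closure: $\closure[H']{L}$ is the smallest language containing $L$ that is closed under the $H'$-rule. Thus it is enough to show that $\closure{L}$ contains $L$ (immediate) and is itself $H'$-closed, i.e.\ that for every $e \leq f \in H'$ and $C \in \PCsp$, if $C[\semcka{f}] \subseteq \closure{L}$ then $C[\semcka{e}] \subseteq \closure{L}$. The key bridge is to turn the syntactic fact $e \leqqcka^H f$ (available because $H$ implies $H'$) into the semantic inclusion $\closure{\semcka{e}} \subseteq \closure{\semcka{f}}$: since $e \leqqcka^H f$ means $e + f \equivcka^H f$, soundness (\autoref{lemma:soundness}) gives $\closure{\semcka{e+f}} = \closure{\semcka{f}}$, and as $\semcka{e+f} = \semcka{e} \cup \semcka{f} \supseteq \semcka{e}$, monotonicity (\autoref{lemma:composition-vs-closure}, property~\eqref{property:hypothesis-monotone}) yields $\closure{\semcka{e}} \subseteq \closure{\semcka{f}}$. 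Now I transport this through the context: property~\eqref{property:hypothesis-context} gives $\closure{C[\semcka{e}]} \subseteq \closure{C[\semcka{f}]}$, while from $C[\semcka{f}] \subseteq \closure{L}$ together with property~\eqref{property:hypothesis-closure} (and idempotence of $\closure{-}$) we get $\closure{C[\semcka{f}]} \subseteq \closure{L}$. Chaining, $C[\semcka{e}] \subseteq \closure{C[\semcka{e}]} \subseteq \closure{C[\semcka{f}]} \subseteq \closure{L}$, so $\closure{L}$ is $H'$-closed, and minimality gives $\closure[H']{L} \subseteq \closure{L}$.

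For part~\eqref{property:implication-vs-double-closure} I would deduce both inclusions from part~\eqref{property:implication-vs-closure} and the fact that $\closure{-}$ is a closure operator. The inclusion $\closure{L} \subseteq \closurepsmall{\closure[H']{L}}$ follows from $L \subseteq \closure[H']{L}$ by monotonicity. For the reverse, part~\eqref{property:implication-vs-closure} gives $\closure[H']{L} \subseteq \closure{L}$, so monotonicity yields $\closurepsmall{\closure[H']{L}} \subseteq \closure{\closure{L}}$, and idempotence of $\closure{-}$ (a consequence of property~\eqref{property:hypothesis-closure}) collapses the right-hand side to $\closure{L}$. The main obstacle is the bridging step in part~\eqref{property:implication-vs-closure}: one must convert the derivability statement $e \leqqcka^H f$ into a closure inclusion, which is precisely where \autoref{lemma:soundness} is invoked, and then show that this inclusion survives the application of an arbitrary series-parallel context, which is handled by the algebraic properties of \autoref{lemma:composition-vs-closure}. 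Once the reduction to $H'$-closedness of $\closure{L}$ is in place, the remainder is routine.
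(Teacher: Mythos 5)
Your proposal is correct and takes essentially the same approach as the paper: your minimality arguments in parts (i) and (ii) are equivalent reformulations of the paper's inductions on the derivation of $e \equivcka^{H'} f$ and on the construction of $\closure[H']{L}$, and they rest on exactly the same ingredients, namely \autoref{lemma:soundness} to convert $e \leqqcka^H f$ into $\closure{\semcka{e}} \subseteq \closure{\semcka{f}}$, followed by \autoref{lemma:composition-vs-closure}\eqref{property:hypothesis-context} and \eqref{property:hypothesis-closure}. Your part (iii) via monotonicity plus idempotence is likewise a routine repackaging of the paper's direct appeal to \autoref{lemma:composition-vs-closure}\eqref{property:hypothesis-closure}.
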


If $H$ implies $H'$ and vice versa, then $H$ is complete (resp.\ decidable) precisely when $H'$ is.
In general, however, this is not very helpful; we need something more asymmetrical, in order to get from a complicated set of hypotheses $H$ to a simpler set of hypotheses $H'$, where completeness or decidability might be easier to prove.
Ideally, we would like to reduce to $H' = \emptyset$, which is complete and decidable.

One idea to formalise this idea of a reduction is as follows.

\begin{definition}%
\label{definition:reduction}
Let $H$ and $H'$ be sets of hypotheses such that $H$ implies $H'$.
A map $r: \termscka \to \termscka$ is a \emph{reduction} from $H$ to $H'$ when both of the following are true:
\begin{enumerate}[(i)]
    \item
    for $e \in \termscka$, it holds that $e \equivcka^H r(e)$, and

    \item
    for $e, f \in \termscka$, if $\closure[H]{\semcka{e}} = \closure[H]{\semcka{f}}$, then $\closure[H']{\semcka{r(e)}} = \closure[H']{\semcka{r(f)}}$.
\end{enumerate}
We call $H$ \emph{reducible} to $H'$ if there exists a reduction from $H$ to $H'$.
\end{definition}

It is straightforward to show that reductions do indeed carry over completeness and decidability results, in the following sense.

\begin{restatable}{lemma}{carryingover}%
\label{lemma:carry-over}
Suppose $H$ is reducible to $H'$.
The following hold:
\begin{enumerate}[(i)]
    \item
    If $H'$ is complete, then so is $H$.

    \item
    If $H'$ is decidable, then so is $H$.
\end{enumerate}
\end{restatable}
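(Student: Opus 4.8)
The plan is to prove both statements by a direct diagram chase, combining the two defining clauses of a reduction (\autoref{definition:reduction}) with soundness (\autoref{lemma:soundness}) and the properties of implication recorded in \autoref{lemma:implication}. Throughout I fix a reduction $r$ from $H$ to $H'$; recall that $H$ implies $H'$ by the standing assumption in \autoref{definition:reduction}.

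For completeness (part~(i)), suppose $H'$ is complete and let $e, f \in \termscka$ with $\closure{\semcka{e}} = \closure{\semcka{f}}$. First I would push this equality through the reduction: the second clause of \autoref{definition:reduction} yields $\closure[H']{\semcka{r(e)}} = \closure[H']{\semcka{r(f)}}$. Completeness of $H'$ then gives $r(e) \equivcka^{H'} r(f)$, and since $H$ implies $H'$, \autoref{lemma:implication} lifts this to $r(e) \equivcka^H r(f)$. Finally, the first clause of \autoref{definition:reduction} provides $e \equivcka^H r(e)$ and $f \equivcka^H r(f)$, so transitivity of $\equivcka^H$ yields $e \equivcka^H r(e) \equivcka^H r(f) \equivcka^H f$, i.e.\ $e \equivcka^H f$, as required.

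For decidability (part~(ii)), the idea is to upgrade the one-directional second clause of \autoref{definition:reduction} into a full equivalence
\[
    \closure{\semcka{e}} = \closure{\semcka{f}}
        \quad\Longleftrightarrow\quad
        \closure[H']{\semcka{r(e)}} = \closure[H']{\semcka{r(f)}}
\]
whose right-hand side is decidable (because $H'$ is) and therefore settles the left-hand side. The forward implication is exactly the second clause of \autoref{definition:reduction}. For the converse, I would first use the first clause of \autoref{definition:reduction} together with soundness (\autoref{lemma:soundness}) to obtain $\closure{\semcka{e}} = \closure{\semcka{r(e)}}$ and $\closure{\semcka{f}} = \closure{\semcka{r(f)}}$, reducing the goal to $\closure{\semcka{r(e)}} = \closure{\semcka{r(f)}}$. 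Now the double-closure identity of \autoref{lemma:implication}, namely $\closurepsmall{\closure[H']{L}} = \closure{L}$, lets me apply $\closure{-}$ to both sides of the assumed equality $\closure[H']{\semcka{r(e)}} = \closure[H']{\semcka{r(f)}}$ and simplify, giving $\closure{\semcka{r(e)}} = \closurepsmall{\closure[H']{\semcka{r(e)}}} = \closurepsmall{\closure[H']{\semcka{r(f)}}} = \closure{\semcka{r(f)}}$. This establishes the equivalence, and hence a decision procedure: compute $r(e)$ and $r(f)$, then invoke the decision procedure witnessing decidability of $H'$.

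Both parts are largely mechanical, and the step to get right is the converse direction of the decidability argument: it is tempting to think the second clause of \autoref{definition:reduction} already suffices, but that clause transports equalities only one way, and it is precisely the double-closure identity of \autoref{lemma:implication} that is needed to travel back from $H'$-closures to $H$-closures. A second subtlety I would flag explicitly is that the decision procedure requires the reduction $r$ to be effectively computable; this holds for every reduction we actually construct, but is not part of the abstract \autoref{definition:reduction}.
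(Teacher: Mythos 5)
Your proposal is correct and follows essentially the same route as the paper's own proof: part (i) is the paper's argument verbatim, and for part (ii) you establish exactly the same two-way equivalence, proving the converse direction via the first clause of \autoref{definition:reduction} plus soundness (\autoref{lemma:soundness}) and the double-closure identity of \autoref{lemma:implication}, just as the paper does. Your closing caveat about the effective computability of $r$ is a fair observation that the paper leaves implicit, but it does not change the substance of the argument.
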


\begin{example}%
\label{example:simple-reduction}
Let $\Sigma = \{ \ltr{a}, \ltr{b} \}$.
Let $H = \{ \ltr{a} \leq \ltr{b} \}$.
We can define for $e \in \termscka$ the term $r(e) \in \termscka$, which is $e$ but with every occurrence of $\ltr{b}$ replaced by $\ltr{a} + \ltr{b}$.
For instance, $r(\ltr{a} \cdot \ltr{b}^* \parallel \ltr{c}) = \ltr{a} \cdot {(\ltr{a} + \ltr{b})}^* \parallel \ltr{c}$.
An inductive argument on the structure of $e$ shows that $r$ reduces $H$ to $\emptyset$, and hence $H$ is complete and decidable.
\end{example}

It is not very hard to show that reductions can be chained, as follows.
\begin{restatable}{lemma}{chains}%
\label{lemma:chains}
If $H$ reduces to $H'$, which reduces to $H''$, then $H$ reduces to $H''$.
\end{restatable}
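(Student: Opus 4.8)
The plan is to take the obvious candidate reduction, namely the composite of the two given maps. Let $r$ be a reduction witnessing that $H$ reduces to $H'$, and let $s$ be a reduction witnessing that $H'$ reduces to $H''$; I define the map $t \colon \termscka \to \termscka$ by $t(e) = s(r(e))$ and claim $t$ is a reduction from $H$ to $H''$. Before checking the two defining properties, I must first discharge the standing precondition of \autoref{definition:reduction}, namely that $H$ implies $H''$. This is transitivity of implication: given $e \leq f \in H''$, the hypothesis that $H'$ implies $H''$ yields $e \leqqcka^{H'} f$, i.e.\ $e + f \equivcka^{H'} f$, and since $H$ implies $H'$, part~\eqref{property:implication-vs-equivalence} of \autoref{lemma:implication} upgrades this to $e + f \equivcka^H f$, that is $e \leqqcka^H f$.

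Next I verify property~(i) of \autoref{definition:reduction} for $t$, namely $e \equivcka^H t(e)$. Property~(i) for $r$ gives $e \equivcka^H r(e)$, and property~(i) for $s$ applied to the term $r(e)$ gives $r(e) \equivcka^{H'} s(r(e))$. Since $H$ implies $H'$, I again invoke \autoref{lemma:implication}\eqref{property:implication-vs-equivalence} to obtain $r(e) \equivcka^H s(r(e))$, and transitivity of the congruence $\equivcka^H$ then yields $e \equivcka^H s(r(e)) = t(e)$.

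Finally, property~(ii) follows by simply chaining the two implications, with no reliance on \autoref{lemma:implication}. Assume $\closure[H]{\semcka{e}} = \closure[H]{\semcka{f}}$. Property~(ii) for $r$ gives $\closure[H']{\semcka{r(e)}} = \closure[H']{\semcka{r(f)}}$, and property~(ii) for $s$, applied to $r(e)$ and $r(f)$, then gives $\closure[H'']{\semcka{s(r(e))}} = \closure[H'']{\semcka{s(r(f))}}$, which is exactly $\closure[H'']{\semcka{t(e)}} = \closure[H'']{\semcka{t(f)}}$. This establishes that $t$ is a reduction from $H$ to $H''$. There is no real obstacle here: the argument is routine bookkeeping, and the only point requiring any care is the repeated passage between the congruences $\equivcka^{H'}$ and $\equivcka^H$, for which \autoref{lemma:implication}\eqref{property:implication-vs-equivalence} is precisely the tool needed.
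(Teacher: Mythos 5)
Your proposal is correct and follows essentially the same route as the paper's own proof: composing the two reductions, establishing that $H$ implies $H''$ via transitivity and \autoref{lemma:implication}\eqref{property:implication-vs-equivalence}, verifying property~(i) by transporting the $\equivcka^{H'}$-equivalence down to $\equivcka^H$ with the same lemma, and obtaining property~(ii) by direct chaining. No discrepancies to report.
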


Another way of reducing $H$ is to find two sets of hypotheses $H_0$ and $H_1$, and reduce each of those to another set of hypotheses $H'$~\cite{doumane-kuperberg-pous-pradic-2019}.
The idea is that a proof of $e \equivcka^H f$ can be split up in a phase where we find $e', f' \in \termscka$ such that $e \equivcka^{H_0} e'$ and $f \equivcka^{H_0} f'$, after which we find $e'', f'' \in \termscka$ with $e' \equivcka^{H_1} e''$ and $f' \equivcka^{H_1} f''$.
Finally, we establish that $e'' \equivcka^{H'} f''$, before lifting those equivalences to $H$, concluding
\[
    e \equivcka^H e' \equivcka^H e'' \equivcka^H f'' \equivcka^H f' \equivcka^H f
\]
One way of achieving this is as follows.

\begin{definition}
We say that $H$ \emph{factorises} into $H_0$ and $H_1$ if $H$ implies both $H_0$ and $H_1$, and for all $L \subseteq \SP$ we have that $\closure[H]{L} = \closure[H_1]{(\closure[H_0]{L})}$.
\end{definition}

In order to use factorisation to compose simpler reductions into more complicated ones, we need a slightly stronger notion of reduction, as follows.

\begin{definition}
We say that $r$ is a \emph{strong reduction} from $H$ to $H'$ if it is a reduction such that for $e \in \termscka$, it holds that $\closure[H]{\semcka{e}} = \closure[H']{\semcka{r(e)}}$.
\end{definition}

Note that this additional condition essentially strengthens the second condition in \autoref{definition:reduction}.
Factorisation then lets us compose strong reductions.

\begin{restatable}{lemma}{factorisation}%
\label{lemma:factorisation}
Suppose $H$ factorises into $H_0$ and $H_1$, and both $H_0$ and $H_1$ strongly reduce to $H'$.
Then $H$ strongly reduces to $H'$.
\end{restatable}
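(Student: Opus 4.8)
The natural candidate is the composite $r = r_1 \circ r_0$, where $r_0$ is a strong reduction from $H_0$ to $H'$ and $r_1$ is a strong reduction from $H_1$ to $H'$; that is, $r(e) = r_1(r_0(e))$. The order matters: since factorisation applies $H_0$-closure before $H_1$-closure, we must apply $r_0$ before $r_1$. It suffices to verify that $r$ is a strong reduction from $H$ to $H'$, which requires $H$ to imply $H'$, condition (i) of \autoref{definition:reduction}, and the strong identity $\closure[H]{\semcka{e}} = \closure[H']{\semcka{r(e)}}$. Observe that the strong identity subsumes condition (ii) of \autoref{definition:reduction}: once $\closure[H]{\semcka{e}} = \closure[H']{\semcka{r(e)}}$ holds for every $e$, condition (ii) follows by applying the identity to both $e$ and $f$. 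So the real work is condition (i) and the strong identity.

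First I would settle the structural obligations. Since $H$ factorises into $H_0$ and $H_1$, $H$ implies both $H_0$ and $H_1$; and because $H_0$ strongly reduces to $H'$, $H_0$ implies $H'$. Implication is transitive, which follows from the first part of \autoref{lemma:implication} applied to the natural order: if $e \leq f \in H'$ then $e + f \equivcka^{H_0} f$, and as $H$ implies $H_0$ we lift this to $e + f \equivcka^H f$, i.e.\ $e \leqqcka^H f$; hence $H$ implies $H'$. For condition (i), the strong reductions give $e \equivcka^{H_0} r_0(e)$ and $r_0(e) \equivcka^{H_1} r_1(r_0(e))$; lifting these along $H$ implies $H_0$ and $H$ implies $H_1$ via the first part of \autoref{lemma:implication} and chaining by transitivity of $\equivcka^H$ yields $e \equivcka^H r(e)$.

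The crux is the strong identity. Using that $\semcka{e} \subseteq \SP$, factorisation of $H$ into $H_0, H_1$ gives $\closure[H]{\semcka{e}} = \closurep[H_1]{\closure[H_0]{\semcka{e}}}$. The strong reduction $r_0$ rewrites the inner closure via $\closure[H_0]{\semcka{e}} = \closure[H']{\semcka{r_0(e)}}$, after which the strong reduction $r_1$, applied to the term $r_0(e)$, contributes $\closure[H_1]{\semcka{r_0(e)}} = \closure[H']{\semcka{r(e)}}$. Assembling these steps gives the chain
\[
\closure[H]{\semcka{e}}
= \closurep[H_1]{\closure[H_0]{\semcka{e}}}
= \closurep[H_1]{\closure[H']{\semcka{r_0(e)}}}
= \closure[H_1]{\semcka{r_0(e)}}
= \closure[H']{\semcka{r(e)}}.
\]
The step I expect to be the main obstacle, and the one on which the whole argument hinges, is the third equality, which collapses the nested $H'$-closure sitting inside the $H_1$-closure. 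This is precisely the content of the third part of \autoref{lemma:implication}: since $H_1$ implies $H'$, we have $\closurep[H_1]{\closure[H']{L}} = \closure[H_1]{L}$ for any $L \subseteq \Pom$, instantiated at $L = \semcka{r_0(e)}$. Without this absorption the closures $H_1$ and $H'$ would not line up, and the strong reduction property of $r_1$ could not be invoked. With the strong identity established, condition (ii) of \autoref{definition:reduction} is immediate, and $r$ is a strong reduction from $H$ to $H'$.
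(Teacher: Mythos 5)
Your proposal is correct and follows essentially the same route as the paper's proof: the composite $r = r_1 \circle r_0$ is wrong notation—rather, $r = r_1 \circ r_0$, the same three obligations (implication by transitivity, condition~(i) by lifting the two equivalences, and the strong identity), and the identical four-step chain using factorisation, the strong reduction property of $r_0$, \autoref{lemma:implication}(iii) to collapse the nested $H'$-closure, and then the strong reduction property of $r_1$. The only cosmetic differences are that you route the implication argument through $H_0$ where the paper uses $H_1$, and you make explicit the (correct, and implicitly used by the paper) observation that the strong identity subsumes condition~(ii) of \autoref{definition:reduction}.
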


The remainder of this section is devoted to developing techniques that can be used to design reductions, based on the properties of the sets of hypotheses under consideration.
Using the lemmas we have established so far, these techniques may then be leveraged to obtain completeness and decidability results.

\subsection{Reification}
It can happen that the hypotheses in $H$ impose an algebraic structure on the letters in $\Sigma$; for instance, as we will see later on, the letters in $H$ could be propositional terms, whose equivalence is mediated by the axioms of Boolean algebra.
In order to peel away this layer of axioms and reduce to a smaller $H'$, we can try to reduce to terms over a smaller alphabet, making the algebraic structure on the letters irrelevant to equivalence.
In a sense, performing this kind of reduction is like showing that the equivalences between letters from the hypotheses can already be guaranteed by replacing them with the right terms.

\begin{example}%
\label{example:reify-group}
Let $\Sigma$ be the set of group terms over a (finite) alphabet $\Lambda$, that is, $\Sigma$ consists of the terms generate by the grammar
\[
    g, h ::= u \pipe \ltr{a} \in \Lambda \pipe g \circ h \pipe \overline{g}
\]
Furthermore, let $\equiv_G$ be the smallest congruence generated by the group axioms, i.e., for all $g, h, i \in \Lambda$ it holds that
\begin{mathpar}
g \circ (h \circ i) \equiv_G (g \circ h) \circ i
\and
g \circ u \equiv_G g \equiv_G u \circ g
\and
\overline{g} \circ g \equiv_G u \equiv_G g \circ \overline{g}
\end{mathpar}
Lastly, let $\hgrp = \{ g \leq h : g \equiv_G h \}$.
We can then define a reduction from $\hgrp$ to $\emptyset$ by replacing every letter (group term) in a term $e$ with its reduced form, that is, with the (unique) equivalent group term of minimum size.
For instance, if $\Lambda = \{ \ltr{a}, \ltr{b}, \ltr{c} \}$, then we send the term $\ltr{a} \circ \overline{\ltr{a}} \parallel \ltr{b} \circ \ltr{c} \circ \overline{\ltr{c}}$ to the term $u \parallel \ltr{b}$.
\end{example}

For the remainder of this section, we fix a subalphabet $\Gamma \subseteq \Sigma$.
When $r: \Sigma \to \termscka(\Gamma)$, we extend $r$ to a map from $\termscka(\Sigma)$ to $\termscka(\Gamma)$, by inductively applying $r$ to terms.
We can also apply $r$ to a series-parallel pomset, obtaining a pomset language.
More precisely, when $U$ is a pomset, we define $r(U)$ as follows:
\begin{align*}
r(1) &= \{ 1 \}
    & r(U \cdot V) &= r(U) \cdot r(V) &
r(\ltr{a}) &= \semcka{r(\ltr{a})}
    & r(U \parallel V) &= r(U) \parallel r(V)
\end{align*}
Lastly, when $L \subseteq \SP$, we write $r(L)$ for the set $\bigcup \{ r(U) : U \in L \}$.

The following then formalises the idea of reducing by replacing letters.

\begin{definition}
  A map $r: \Sigma \to \termscka(\Gamma)$ is a \emph{reification} from $H$ to $H'$ if
  \begin{enumerate}[(i)]
  \item\label{property:reify-equivalence-letter}
    For all $\ltr{a} \in \Sigma$, it holds that $r(\ltr{a}) \equivcka^{H} \ltr{a}$.
  \item\label{property:reify-idem}
    $r$ is expansive on $\Gamma$, i.e., for all $\ltr{a} \in\Gamma$, $\ltr{a} \leqq r(\ltr{a})$.

  \item\label{property:reify-preserve-gamma}
    $H'$-closure preserves $\Gamma$, i.e., for all $L \subseteq \SP(\Gamma)$, also $\closure[H']{L} \subseteq \SP(\Gamma)$.

  \item\label{property:reify-hypotheses}
    For all $e\leq f \in H$, it holds that $r(e) \leqqcka^{H'} r(f)$.
  \end{enumerate}
\end{definition}

\begin{example}
Continuing with the previous example, let $r$ be the map that sends a group term to its reduced form; we claim that $r$ is a reification from $\hgrp$ to $\emptyset$.
By definition, we then know that for a group term $g \in \Sigma$, we have $r(g) \equiv_G g$, and hence $r(g) \equiv^\hgrp g$.
Furthermore, the reduction of a reduced term is that term itself; hence, the second condition is satisfied.
The third condition holds trivially.
Lastly, if $e \leq f \in \hgrp$, then $e, f \in \Sigma$ such that $e \equiv_G f$.
Since reductions are unique, we then know that $r(e) = r(f)$, and hence $r(e) \leqqcka^\emptyset r(f)$.
\end{example}

We have the following general properties of a map $r$, which we will use in demonstrating how to obtain a reduction from a reification.
\begin{restatable}{lemma}{contextsem}\label{lemma:reify-context-sem}
Let $r: \Sigma \to \termscka$ be some map.
\begin{enumerate}[(i)]
  \item\label{property:reification-preserves-context}
  For all $C \in \PCsp$, we have $r\left(C\right) \subseteq \PCsp$.

  \item\label{property:reification-vs-context}
  For all $L\subseteq \SP$ and $C \in \PCsp$, we have $r\left(C[L]\right)=\bigcup_{D\in r(C)}D\left[r(L)\right]$.

  \item\label{property:reification-vs-sem}
  For all $e \in \termscka$, it holds that $r(\semcka{e}) = \semcka{r(e)}$.
\end{enumerate}
\end{restatable}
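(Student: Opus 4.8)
The plan is to prove all three parts by induction, using the inductive characterisation of $\PCsp$ from \autoref{lemma:series-parallel-contexts-inductive} for parts~\eqref{property:reification-preserves-context} and~\eqref{property:reification-vs-context}, and structural induction on terms for part~\eqref{property:reification-vs-sem}. First I would fix the convention that the pomset-level action of $r$ extends to pomsets over $\Sigma \cup \{*\}$ by setting $r(*) = \{*\}$, so that $r(C)$ is a set of pomsets each carrying exactly one $*$-node (every genuine letter of $\Sigma$ is sent by $\semcka{-}$ to a $*$-free language). I would also record the elementary substitution identities $(U \cdot C)[W] = U \cdot C[W]$, $(C \cdot V)[W] = C[W] \cdot V$, $(U \parallel C)[W] = U \parallel C[W]$ and $*[W] = W$, all immediate from the definition of $\lp{c}[\lp{u}]$, together with the distributivity $D[\bigcup_i L_i] = \bigcup_i D[L_i]$ of context application over unions of languages.

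With this in place, part~\eqref{property:reification-preserves-context} is a direct induction on $C$. The base case $C = *$ gives $r(C) = \{*\} \subseteq \PCsp$. In an inductive case such as $C = U \cdot C'$ with $U \in \SP$ and $C' \in \PCsp$, I have $r(C) = r(U) \cdot r(C')$; a short sub-induction shows $r(U) \subseteq \SP$ (since $\semcka{-}$ lands in $2^\SP$ and $\SP$ is closed under both compositions), while $r(C') \subseteq \PCsp$ by the induction hypothesis, so every element of $r(C)$ has the form $U' \cdot D$ with $U' \in \SP$ and $D \in \PCsp$ and hence lies in $\PCsp$ by the corresponding rule of \autoref{lemma:series-parallel-contexts-inductive}. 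The cases $C = C' \cdot V$ and $C = U \parallel C'$ are handled identically.

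Part~\eqref{property:reification-vs-context} is the heart of the argument. I would first establish the single-pomset identity $r(C[U]) = \bigcup_{D \in r(C)} D[r(U)]$ for $U \in \SP$, by induction on $C$. The base case $C = *$ reduces to $*[W] = W$. For $C = U' \cdot C'$ I would combine $C[U] = U' \cdot C'[U]$, the recursion $r(U' \cdot C'[U]) = r(U') \cdot r(C'[U])$, the induction hypothesis, and the identity $(V' \cdot D)[W] = V' \cdot D[W]$ to rewrite the resulting double union as one ranging over $r(C) = r(U') \cdot r(C')$; the right-multiplication and parallel cases are analogous. Lifting from single pomsets to a language $L \subseteq \SP$ is then bookkeeping: $r(C[L]) = \bigcup_{U \in L} r(C[U]) = \bigcup_{U \in L} \bigcup_{D \in r(C)} D[r(U)] = \bigcup_{D \in r(C)} D[r(L)]$, using distributivity of context application over unions and the definition $r(L) = \bigcup_{U \in L} r(U)$.

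Part~\eqref{property:reification-vs-sem} is then a structural induction on $e$ that rests on $r$ being a homomorphism at the level of languages, i.e.\ $r(L \cup K) = r(L) \cup r(K)$, $r(L \cdot K) = r(L) \cdot r(K)$, $r(L \parallel K) = r(L) \parallel r(K)$ and $r(L^*) = r(L)^*$, all of which follow from the pomset-level clauses of $r$ (the star case via a further induction on $n$ through $r(L^n) = r(L)^n$). The cases $0$ and $1$ are immediate, and the letter case $r(\semcka{\ltr{a}}) = r(\ltr{a}) = \semcka{r(\ltr{a})}$ is exactly the defining clause that ties the pomset-level and term-level readings of $r(\ltr{a})$ together; each inductive case pushes $r$ through the relevant operator and applies the induction hypothesis. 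I expect the main obstacle to be the single-pomset identity in part~\eqref{property:reification-vs-context}: one must check that the non-uniqueness of series-parallel decompositions does not make $r$ ambiguous (it does not, because $\cdot$ and $\parallel$ on languages inherit the associativity and commutativity that identify these decompositions, with $\{1\}$ as unit), and that the union over the decompositions of $C$ is matched exactly with the union over $D \in r(C)$ after re-associating via $(V' \cdot D)[W] = V' \cdot D[W]$ and its parallel analogue. The remaining steps are routine induction and set-theoretic bookkeeping.
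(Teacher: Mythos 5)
Your proposal is correct and follows essentially the same route as the paper's proof: parts (i) and (ii) by induction on the inductive characterisation of $\PCsp$ from \autoref{lemma:series-parallel-contexts-inductive}, using the substitution identities that the paper isolates as \autoref{lemma:substitution-vs-composition}, and part (iii) by structural induction on terms using the homomorphic action of $r$ on languages. The only cosmetic difference is that you run the induction for (ii) on single pomsets and then take unions over $L$, whereas the paper performs the same induction directly at the language level; your explicit remarks on $r(*) = \{*\}$ and on well-definedness of $r$ under re-decomposition merely make precise what the paper leaves implicit.
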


The following technical lemma is a consequence of property~\eqref{property:reify-hypotheses}.

\begin{restatable}{lemma}{propreif}\label{lemma:prop-reif}
  If $r$ is a reification and $L\subseteq\SP(\Sigma)$, then $r(\closure L) \subseteq \closure[H']{r(L)}$.
\end{restatable}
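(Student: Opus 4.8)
The plan is to prove $r(\closure L) \subseteq \closure[H']{r(L)}$ by induction on the construction of the $H$-closure. Unfolding definitions, the claim reads: for every $U \in \closure L$ we have $r(U) \subseteq \closure[H']{r(L)}$. So I would define the auxiliary pomset language
\[
    K = \{ U \in \Pom : r(U) \subseteq \closure[H']{r(L)} \}
\]
and show that $K$ satisfies the two inference rules defining $\closure L$. Since $\closure L$ is by definition the \emph{smallest} language closed under those rules, this gives $\closure L \subseteq K$, which is exactly the desired inclusion. The base case is routine: for $U \in L$ we have $r(U) \subseteq r(L) \subseteq \closure[H']{r(L)}$ by the first $H'$-closure rule, so $L \subseteq K$.

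The inductive step is where the work lies. I must show that for every $e \leq f \in H$ and $C \in \PCsp$, if $C[\semcka f] \subseteq K$ then $C[\semcka e] \subseteq K$; equivalently, from $r(C[\semcka f]) \subseteq \closure[H']{r(L)}$ I must derive $r(C[\semcka e]) \subseteq \closure[H']{r(L)}$. Using \autoref{lemma:reify-context-sem}, parts~\eqref{property:reification-vs-context} and~\eqref{property:reification-vs-sem} (noting $\semcka e, \semcka f \subseteq \SP$ and $C \in \PCsp$), these rewrite as $r(C[\semcka e]) = \bigcup_{D \in r(C)} D[\semcka{r(e)}]$ and $r(C[\semcka f]) = \bigcup_{D \in r(C)} D[\semcka{r(f)}]$, where each $D \in r(C)$ lies in $\PCsp$ by part~\eqref{property:reification-preserves-context}. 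The hypothesis thus amounts to $D[\semcka{r(f)}] \subseteq \closure[H']{r(L)}$ for every such $D$, and the goal is $D[\semcka{r(e)}] \subseteq \closure[H']{r(L)}$ for every such $D$.

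The main obstacle is that the reification only guarantees the \emph{derived} inequality $r(e) \leqqcka^{H'} r(f)$ (condition~\eqref{property:reify-hypotheses}), which is generally \emph{not} a hypothesis in $H'$, so it cannot be fed directly into the $H'$-closure rule. To bridge this gap I would first establish the auxiliary fact that any derivable inequality $g \leqqcka^{H'} h$ yields $\closure[H']{\semcka g} \subseteq \closure[H']{\semcka h}$: indeed $g \leqqcka^{H'} h$ means $g + h \equivcka^{H'} h$, so soundness (\autoref{lemma:soundness} applied to $H'$) gives $\closure[H']{\semcka{g+h}} = \closure[H']{\semcka h}$, and since $\semcka g \subseteq \semcka{g+h}$, monotonicity (property~\eqref{property:hypothesis-monotone}) forces $\closure[H']{\semcka g} \subseteq \closure[H']{\semcka h}$. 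Applying this with $g = r(e)$, $h = r(f)$ gives $\closure[H']{\semcka{r(e)}} \subseteq \closure[H']{\semcka{r(f)}}$, and property~\eqref{property:hypothesis-context} then pushes this through the context $D$ to obtain $\closure[H']{D[\semcka{r(e)}]} \subseteq \closure[H']{D[\semcka{r(f)}]}$. Finally, since $D[\semcka{r(f)}] \subseteq \closure[H']{r(L)}$, property~\eqref{property:hypothesis-closure} gives $\closure[H']{D[\semcka{r(f)}]} \subseteq \closure[H']{r(L)}$, and chaining the inclusions yields $D[\semcka{r(e)}] \subseteq \closure[H']{D[\semcka{r(e)}]} \subseteq \closure[H']{r(L)}$. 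Taking the union over $D \in r(C)$ gives $r(C[\semcka e]) \subseteq \closure[H']{r(L)}$, i.e.\ $C[\semcka e] \subseteq K$, completing the induction.
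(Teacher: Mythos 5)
Your proof is correct and follows essentially the same route as the paper's: induction on the construction of $\closure L$ (your minimality-of-$K$ formulation is just a rephrasing of that), with the inductive step handled by \autoref{lemma:reify-context-sem}, soundness applied to $r(e) \leqqcka^{H'} r(f)$, and \autoref{lemma:composition-vs-closure}\eqref{property:hypothesis-context} and~\eqref{property:hypothesis-closure}. The only differences are cosmetic: you chain the inclusions per context $D$ where the paper takes the union over $D \in r(C)$ first, and you spell out explicitly the step (left implicit in the paper) that a derivable inequality $g \leqqcka^{H'} h$ yields $\closure[H']{\semcka{g}} \subseteq \closure[H']{\semcka{h}}$ via soundness and monotonicity.
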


Using this, we can then show how to obtain a reduction from a reification.

\begin{restatable}{lemma}{reificationtoreduction}%
\label{lemma:reification-to-reduction}
  If $H$ implies $H'$ and $r$ is a reification from $H$ to $H'$, then $r$ is a reduction from $H$ to $H'$.
\end{restatable}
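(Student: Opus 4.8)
The goal is to verify the two conditions of \autoref{definition:reduction}; the hypothesis that $H$ implies $H'$ is the standing assumption that makes ``reduction from $H$ to $H'$'' well-typed, so I simply carry it along. Condition~(i), that $e \equivcka^H r(e)$ for every $e \in \termscka$, follows by a routine structural induction: the constants $0$ and $1$ are fixed by $r$, letters are handled directly by reification property~\eqref{property:reify-equivalence-letter}, and each inductive step is discharged because $r$ is defined by applying it homomorphically to the term constructors while $\equivcka^H$ is a congruence.

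The substance of the proof is condition~(ii): assuming $\closure[H]{\semcka e} = \closure[H]{\semcka f}$, I must derive $\closure[H']{\semcka{r(e)}} = \closure[H']{\semcka{r(f)}}$. The plan is to isolate the single identity
\[
    \closure[H']{\semcka{r(e)}} = \closure[H']{r(\closure[H]{\semcka e})},
\]
valid for every term (and so in particular for both $e$ and $f$), after which the conclusion is immediate: applying $r$ to the assumed equality $\closure[H]{\semcka e} = \closure[H]{\semcka f}$ gives $r(\closure[H]{\semcka e}) = r(\closure[H]{\semcka f})$, and taking $H'$-closures of both sides chains the two instances of the identity together.

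To establish the identity I argue by two inclusions, freely using that $r$ is monotone on languages (being a union of pointwise images) and that $\closure[H']{-}$ is a monotone, idempotent closure operator (\autoref{lemma:composition-vs-closure}, properties \eqref{property:hypothesis-closure} and \eqref{property:hypothesis-monotone}). For ``$\subseteq$'', I rewrite $\semcka{r(e)} = r(\semcka e)$ using \autoref{lemma:reify-context-sem}\eqref{property:reification-vs-sem} and combine it with $\semcka e \subseteq \closure[H]{\semcka e}$; monotonicity of $r$ and of $\closure[H']{-}$ then yield $\closure[H']{\semcka{r(e)}} = \closure[H']{r(\semcka e)} \subseteq \closure[H']{r(\closure[H]{\semcka e})}$. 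For ``$\supseteq$'', the key input is \autoref{lemma:prop-reif}, which gives $r(\closure[H]{\semcka e}) \subseteq \closure[H']{r(\semcka e)} = \closure[H']{\semcka{r(e)}}$; applying $\closure[H']{-}$ and collapsing the resulting double closure by idempotency gives $\closure[H']{r(\closure[H]{\semcka e})} \subseteq \closure[H']{\semcka{r(e)}}$.

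The main obstacle is precisely this second inclusion, which rests entirely on \autoref{lemma:prop-reif} (the containment $r(\closure[H]{L}) \subseteq \closure[H']{r(L)}$); everything else is bookkeeping with monotonicity and idempotency of the closure operator. Intuitively, that lemma says that rewriting under an $H$-hypothesis and then reifying is dominated by reifying first and then rewriting under the corresponding $H'$-hypothesis --- which is exactly what reification property~\eqref{property:reify-hypotheses} is designed to guarantee --- and it is the one place where the transport of hypotheses across the change of alphabet is genuinely carried out.
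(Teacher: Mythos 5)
Your proof is correct, but it closes the argument differently from the paper. The treatment of condition (i) is identical (structural induction using property~\eqref{property:reify-equivalence-letter}), and both arguments rest on \autoref{lemma:prop-reif} together with \autoref{lemma:reify-context-sem}\eqref{property:reification-vs-sem}; the difference lies in the intermediate identity used for condition (ii). The paper proves the exact equality $r(\closure{\semcka{e}}) = \closure[H']{\semcka{r(e)}}$ via the cyclic chain
\[
    r(\closure{\semcka{e}})
        \subseteq \closure[H']{r(\semcka{e})}
        = \closure[H']{\semcka{r(e)}}
        \subseteq r(\closure[H']{\semcka{r(e)}})
        \subseteq r(\closure{\semcka{r(e)}})
        = r(\closure{\semcka{e}}),
\]
whose return leg (the last three steps) needs property~\eqref{property:reify-idem}, property~\eqref{property:reify-preserve-gamma}, \autoref{lemma:implication}\eqref{property:implication-vs-closure}, and property~\eqref{property:reify-equivalence-letter} plus soundness. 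You instead prove the weaker identity $\closure[H']{\semcka{r(e)}} = \closure[H']{r(\closure{\semcka{e}})}$, which holds only up to an outer $H'$-closure but is exactly enough for condition (ii) of \autoref{definition:reduction}; your two inclusions use nothing beyond \autoref{lemma:prop-reif}, \autoref{lemma:reify-context-sem}\eqref{property:reification-vs-sem}, and monotonicity and idempotency of closure (\autoref{lemma:composition-vs-closure}\eqref{property:hypothesis-closure}--\eqref{property:hypothesis-monotone}). What you gain is economy: reification properties~\eqref{property:reify-idem} and~\eqref{property:reify-preserve-gamma} are never used directly in your argument (they enter only through the blanket hypothesis of \autoref{lemma:prop-reif}, whose own proof uses only property~\eqref{property:reify-hypotheses}), so your route shows in passing that the bare reduction claim does not depend on them. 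What the paper gains is the sharper structural fact that reifying the $H$-closed semantics gives \emph{exactly} the $H'$-closed semantics of the reified term, with no residual closure. Neither version produces a strong reduction, and the downstream application (\autoref{lemma:instantiate-reification}) needs only a reduction, so your identity suffices; the one piece of bookkeeping left implicit (by both proofs) is that $r(\closure{\semcka{e}})$ is well defined because $\closure{\semcka{e}} \subseteq \SP$, by \autoref{lemma:composition-vs-closure}\eqref{property:hypothesis-sp}.
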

\begin{proof}
  The first condition, i.e., that for $e \in \termscka$ we have $e \equivcka^H r(e)$, can be checked using the first property of reification by induction on the structure of $e$.
  It thus remains to check the second condition; we do this by proving that for all $e\in\termscka(\Sigma)$ we have $r\left(\closure {\semcka e}\right)=\closure[H']{\semcka{r(e)}}$.
  To this end, we derive as follows:
  \begin{align*}
  r(\closure{\semcka{e}})
    &\subseteq \closure[H']{r(\semcka{e})}
        \tag{\autoref{lemma:prop-reif}} \\
    &= \closure[H']{\semcka{r(e)}}
        \tag{\autoref{lemma:reify-context-sem}\eqref{property:reification-vs-sem}} \\
    &\subseteq r(\closure[H']{\semcka{r(e)}})
        \tag{property~\eqref{property:reify-idem}} \\
    &\subseteq r(\closure{\semcka{r(e)}})
        \tag{\autoref{lemma:implication}\eqref{property:implication-vs-closure}} \\
    &= r(\closure{\semcka{e}})
        \tag{property~\eqref{property:reify-equivalence-letter}, soundness}
  \end{align*}
  Specifically, in the third step, property~\eqref{property:reify-idem} ensures that for $L \subseteq \SP(\Gamma)$ we have $L \subseteq r(L)$.
  We can use this property because $H'$-closure preserves the $\Gamma$-language by property~\eqref{property:reify-preserve-gamma}.
  This completes the proof.
\end{proof}

\subsection{Factoring the exchange law}%
\label{sec:fact-exch-law}

In the basic axioms that generate $\equivcka$, there is no interaction between sequential and parallel composition.
One sensible way of adding that kind of interaction is, as suggested by Hoare, Struth and collaborators~\cite{hoare-moeller-struth-wehrman-2009}, by adding an axiom of the form $(e \parallel f) \cdot (g \parallel h) \leqqcka (e \cdot g) \parallel (f \cdot h)$, known as the \emph{exchange law}.
Essentially, this axiom encodes the possibility of (partial) interleaving: when $e \cdot g$ runs in parallel with $f \cdot h$, one possible behaviour is that, first $e$ runs in parallel with $f$, and then $g$ runs in parallel with $h$.
The core observation of this section is that the exchange law can be treated as another set of hypotheses, as we show below, and this can then be used to recover the completeness result of CKA~\cite{kappe-brunet-silva-zanasi-2018}.

\begin{definition}
We write $\hexch$ for the set
\[
    \{ (e \parallel f) \cdot (g \parallel h) \leq (e \cdot g) \parallel (f \cdot h) : e, f, g, h \in \termscka \}
\]
\end{definition}

The semantic effect of adding $\hexch$ to our hypotheses is that, if $U$ is a pomset in a series-parallel language $L$, and $V$ is a series-parallel pomset subsumed by $U$, then $V$ is in the $\hexch$-closure of $L$.
Intuitively, the $\hexch$-closure adds pomsets that are more sequential, i.e., have more ordering, than the ones already in $L$.
Indeed, $\hexch$-closure coincides with the downward closure w.r.t. $\subsp$.

\begin{restatable}{lemma}{exchclosure}%
\label{lemma:exch-closure-vs-subsumption}
Let $L \subseteq \SP$ and $U \in \SP$.
Now $U\in\closure[\hexch]L$ if and only if there exists a $V \in L$ such that $U \subsp V$.
\end{restatable}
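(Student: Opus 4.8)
The plan is to prove the two inclusions separately, using the characterisation of $\subsp$ from \autoref{lemma:subsumption-sp} as the smallest precongruence generated by the exchange inequality, together with the inference rules defining $\hexch$-closure. Throughout, the key orientation to keep in mind is that the closure rule for a hypothesis $e \leq f$ moves from $C[\semcka{f}]$ (here, the more \emph{parallel} right-hand side) to $C[\semcka{e}]$ (the more \emph{sequential} left-hand side), which matches the fact that $\subsp$ points from sequential towards parallel.

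For the forward implication, I would show that the set $S = \{\, U \in \SP : \exists\, V \in L.\ U \subsp V \,\}$ is itself $\hexch$-closed and contains $L$; since $\closure[\hexch]{L}$ is the smallest such language, this gives $\closure[\hexch]{L} \subseteq S$, which is exactly the statement that every $U \in \closure[\hexch]{L}$ is subsumed by some $V \in L$. Containment of $L$ is immediate from reflexivity of $\subsp$. For $\hexch$-closedness, consider a hypothesis $(e_1 \parallel f_1) \cdot (g_1 \parallel h_1) \leq (e_1 \cdot g_1) \parallel (f_1 \cdot h_1)$ in $\hexch$, a context $C \in \PCsp$ with $C[\semcka{(e_1 \cdot g_1) \parallel (f_1 \cdot h_1)}] \subseteq S$, and a pomset $C[P] \in C[\semcka{(e_1 \parallel f_1) \cdot (g_1 \parallel h_1)}]$. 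Writing $P = (A \parallel B) \cdot (A' \parallel B')$ with $A, B, A', B'$ drawn from the appropriate factor languages, the pomset $Q = (A \cdot A') \parallel (B \cdot B')$ lies in $\semcka{(e_1 \cdot g_1) \parallel (f_1 \cdot h_1)}$, and the base inequality of \autoref{lemma:subsumption-sp} gives $P \subsp Q$. As $\subsp$ is a precongruence it is preserved by plugging into a context (by induction on the structure of $C$ via \autoref{lemma:series-parallel-contexts-inductive}), so $C[P] \subsp C[Q]$; since $C[Q] \in S$ there is $V \in L$ with $C[Q] \subsp V$, and transitivity gives $C[P] \subsp V$, i.e.\ $C[P] \in S$.

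The reverse implication is the harder direction: I must show $\closure[\hexch]{L}$ is downward closed under $\subsp$, so that $V \in L \subseteq \closure[\hexch]{L}$ and $U \subsp V$ force $U \in \closure[\hexch]{L}$. To exploit \autoref{lemma:subsumption-sp} I would introduce the context-parametrised relation
\[ R = \{\, (U, V) \in \SP \times \SP : \forall\, C \in \PCsp,\ \forall\, K \subseteq \SP.\ C[V] \in \closure[\hexch]{K} \Rightarrow C[U] \in \closure[\hexch]{K} \,\} \]
and prove $\subsp \subseteq R$; taking $C = {*}$ and $K = L$ then yields downward closure. By \autoref{lemma:subsumption-sp} it suffices to check that $R$ is a precongruence containing the exchange inequality. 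Reflexivity and transitivity are immediate, and monotonicity w.r.t.\ $\cdot$ and $\parallel$ follows by composing contexts: for instance, to derive $(U_1 \cdot U_2, V_1 \cdot V_2) \in R$ from $(U_1, V_1), (U_2, V_2) \in R$, I rewrite $C[V_1 \cdot V_2]$ as $C_1[V_1]$, where $C_1$ is $C$ with its hole replaced by ${*} \cdot V_2$, apply $R$ on $V_1$ to obtain $C[U_1 \cdot V_2] \in \closure[\hexch]{K}$, then rewrite this as $C_2[V_2]$ with $C_2$ replacing the hole by $U_1 \cdot {*}$ and apply $R$ on $V_2$ (symmetrically for $\parallel$). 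That these composite contexts are again series-parallel is exactly \autoref{lemma:series-parallel-contexts-inductive}.

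The crux is showing that $R$ contains the base inequality $(A \parallel B) \cdot (A' \parallel B') \mathrel{R} (A \cdot A') \parallel (B \cdot B')$ for all sp-pomsets $A, B, A', B'$; this is where a single exchange step is realised as one application of the closure rule. Given $C \in \PCsp$ with $C[(A \cdot A') \parallel (B \cdot B')] \in \closure[\hexch]{K}$, I choose CKA terms $t_A, t_B, t_{A'}, t_{B'}$ denoting the singleton languages $\{A\}, \{B\}, \{A'\}, \{B'\}$ (which exist since each factor is series-parallel), so that the hypothesis $(t_A \parallel t_B) \cdot (t_{A'} \parallel t_{B'}) \leq (t_A \cdot t_{A'}) \parallel (t_B \cdot t_{B'})$ belongs to $\hexch$. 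Its right-hand side has semantics $\{(A \cdot A') \parallel (B \cdot B')\}$, so $C[\semcka{(t_A \cdot t_{A'}) \parallel (t_B \cdot t_{B'})}] \subseteq \closure[\hexch]{K}$, and the closure inference rule immediately yields $C[(A \parallel B) \cdot (A' \parallel B')] \in \closure[\hexch]{K}$, as required. I expect the remaining obstacles to be largely bookkeeping: tracking the direction of the exchange hypothesis, and handling the precongruence in the reverse direction, which is precisely why $R$ must be parametrised by an ambient context rather than stated for bare pomsets.
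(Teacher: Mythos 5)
Your proposal is correct and takes essentially the same route as the paper: your forward direction is the paper's induction on the construction of $\closure[\hexch]{L}$, merely rephrased as minimality of the closure against the subsumption-downward set $S$, and your reverse direction matches the paper's context-parametrised induction on the generation of $\subsp$ via \autoref{lemma:subsumption-sp}, including the key step of realising a single exchange step as one closure-rule application by choosing terms whose semantics are the singleton factor pomsets. The only differences are cosmetic (your relation $R$ quantifies over all $K \subseteq \SP$ where the paper fixes $L$, and you verify two-sided monotonicity directly where the paper uses one-sided congruence cases plus transitivity).
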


We have previously shown that $\hexch$ is complete~\cite{kappe-brunet-silva-zanasi-2018}; as a matter of fact, the pivotal result from op.\ cit.\ can be presented as follows.
\begin{theorem}%
\label{theorem:reduce-exch}
The set of hypotheses $\hexch$ is strongly reducible to $\emptyset$.
\end{theorem}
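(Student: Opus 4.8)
The plan is to exploit the pivotal completeness result for \CKA from~\cite{kappe-brunet-silva-zanasi-2018}, which already provides, for every $e \in \termscka$, a \emph{closure term} $\lfloor e \rfloor \in \termscka$ whose exchange-free pomset semantics is exactly the subsumption-downward-closure of $\semcka{e}$, and which is provably equal to $e$ in the presence of the exchange law. Since $\emptyset$-closure is the identity on pomset languages, i.e.\ $\closure[\emptyset]{L} = L$, a strong reduction $r$ from $\hexch$ to $\emptyset$ is nothing more than a map satisfying $e \equivcka^{\hexch} r(e)$ together with $\closure[\hexch]{\semcka{e}} = \semcka{r(e)}$ for all $e$. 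By \autoref{lemma:exch-closure-vs-subsumption}, and using that $\semcka{e} \subseteq \SP$ always holds, the closure $\closure[\hexch]{\semcka{e}}$ is precisely the set of $U \in \SP$ with $U \subsp V$ for some $V \in \semcka{e}$, i.e.\ the downward closure under $\subsp$. Thus the closure term is exactly the object we need.

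Concretely, I would set $r(e) = \lfloor e \rfloor$. First, $\hexch$ trivially implies $\emptyset$, so the notion of reduction applies. Then I would verify the two conditions. The axiomatic condition $e \equivcka^{\hexch} r(e)$ is exactly the statement that $e$ and its closure are \CKA-equivalent, as established in~\cite{kappe-brunet-silva-zanasi-2018}: every pomset added by the closure is more sequential than one already present and hence derivable as a lower bound via the exchange law, giving $r(e) \leqqcka^{\hexch} e$, while $e \leqqcka^{\hexch} r(e)$ holds because the language of $r(e)$ contains that of $e$. For the strong condition, the defining semantic property of the closure term gives $\semcka{r(e)} = \closure[\hexch]{\semcka{e}}$, and since $\closure[\emptyset]{\semcka{r(e)}} = \semcka{r(e)}$, this reads $\closure[\hexch]{\semcka{e}} = \closure[\emptyset]{\semcka{r(e)}}$, as required. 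Finally, the strong condition subsumes the ordinary second reduction condition: if $\closure[\hexch]{\semcka{e}} = \closure[\hexch]{\semcka{f}}$, then $\semcka{r(e)} = \semcka{r(f)}$, i.e.\ $\closure[\emptyset]{\semcka{r(e)}} = \closure[\emptyset]{\semcka{r(f)}}$.

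The genuine mathematical content, and the main obstacle, is the existence of the closure term with these two properties, namely that the downward $\subsp$-closure of a series-rational pomset language is again series-rational and effectively term-definable in a manner compatible with the \CKA axioms. This is not re-proved here: it is precisely the difficult construction underlying the \CKA completeness theorem of~\cite{kappe-brunet-silva-zanasi-2018}, which is imported wholesale. Relative to that result, all that remains is the bookkeeping above: recognising via \autoref{lemma:exch-closure-vs-subsumption} that $\hexch$-closure coincides with subsumption-downward-closure, recalling that $\emptyset$-closure is trivial, and checking that a strong reduction automatically yields an ordinary one.
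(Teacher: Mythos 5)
Your proposal is correct and takes essentially the same route as the paper: the paper offers no independent proof of this theorem, presenting it as a direct repackaging of the pivotal closure-term construction of~\cite{kappe-brunet-silva-zanasi-2018}, which is exactly what you do by setting $r(e) = \lfloor e \rfloor$ and importing its two defining properties. Your added bookkeeping --- identifying $\hexch$-closure with $\subsp$-downward closure via \autoref{lemma:exch-closure-vs-subsumption}, observing that $\closure[\emptyset]{L} = L$, and checking that the strong condition subsumes the ordinary second reduction condition --- simply makes explicit what the paper leaves implicit in the phrase ``the pivotal result from op.\ cit.\ can be presented as follows.''
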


When $\hexch$ is contained in our hypotheses, it is not immediately clear whether those hypotheses can be reduced.
What we can do is try to factorise our hypotheses into $\hexch$ and some residual set of hypotheses, and prove strong reducibility for that residual set.
To this end, we first note that, in some circumstances, the $H$-closure of the $\hexch$-closure remains downward-closed w.r.t. $\subsp$.

\begin{restatable}{lemma}{factoriseseq}%
\label{lemma:factorise-seq-exch}
Suppose that for each $e \leq f \in H$ we have that $e = 1$ or $e = \ltr{a}$ for some $\ltr{a} \in \Sigma$, and let $L \subseteq \SP$.
If $U, V \in \SP$ such that $U \subsp V$ and $V \in \closure{(\closure[\hexch]L)}$, then $U\in\closure{(\closure[\hexch]L)}$.
\end{restatable}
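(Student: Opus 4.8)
The plan is to fix $K = \closure[\hexch]{L}$ and reformulate the goal as the statement that $\closure{K}$ is downward-closed with respect to $\subsp$. The crucial starting observation is that $K$ itself is already $\subsp$-downward-closed: by \autoref{lemma:exch-closure-vs-subsumption}, $W \in K$ holds precisely when $W \subsp W'$ for some $W' \in L$, so if $W' \subsp W \in K$ with $W' \in \SP$, transitivity of $\subsp$ places $W'$ in $K$ as well. With this in hand, I would prove by rule induction on the derivation witnessing $V \in \closure{K}$ the claim: for every $U \in \SP$ with $U \subsp V$, we have $U \in \closure{K}$. The base case, where $V \in K$, is immediate from the downward-closure of $K$ just noted, since $U \subsp V \in K$ gives $U \in K \subseteq \closure{K}$.

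The inductive step treats a $V$ added by the closure rule, so $V \in C[\semcka{e}]$ for some hypothesis $e \leq f \in H$ and context $C \in \PCsp$ whose premise $C[\semcka{f}] \subseteq \closure{K}$ has already been derived. Here the hypothesis restriction enters: since $e = 1$ or $e = \ltr{a}$, the language $\semcka{e}$ is a single pomset, so $V = C[1]$ or $V = C[\ltr{a}]$, and the assumption $U \subsp V$ reads $U \sqsubseteq C[1]$ or $U \sqsubseteq C[\ltr{a}]$. I would then invoke \autoref{lemma:context-erasure} (in the $e = 1$ case) or \autoref{lemma:context-subsumption} (in the $e = \ltr{a}$ case) to produce a context $C' \in \PCsp$ with $C' \sqsubseteq C$ and $C'[1] = U$ (resp.\ $C'[\ltr{a}] = U$); the claim $C' \in \PCsp$ is exactly the ``moreover'' clause of these lemmas, available because $U \in \SP$.

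The heart of the argument is to reuse the same hypothesis $e \leq f$ with the subsumed context $C'$. To license the closure rule I must discharge its premise $C'[\semcka{f}] \subseteq \closure{K}$. For each $W \in \semcka{f} \subseteq \SP$, monotonicity of contexts (\autoref{lemma:pomset-context-monotone}) applied to $C' \sqsubseteq C$ yields $C'[W] \sqsubseteq C[W]$, and \autoref{lemma:context-preserve-series-parallel} ensures both $C[W], C'[W] \in \SP$, so in fact $C'[W] \subsp C[W]$. Since $C[W]$ is one of the premises of the current derivation, the inductive hypothesis applies to it and delivers $C'[W] \in \closure{K}$. Ranging over all $W$ gives $C'[\semcka{f}] \subseteq \closure{K}$, whereupon the closure rule for $e \leq f$ with context $C'$ adds $C'[\semcka{e}] = \{ U \}$, so $U \in \closure{K}$, completing the induction.

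The main obstacle, and the point where the hypotheses on $H$ are essential, is precisely the ability to pull the subsumption $U \subsp V$ back through the context $C$: this is only possible because $\semcka{e}$ is a single gap-filler equal to $1$ or to a letter, for which \autoref{lemma:context-erasure} and \autoref{lemma:context-subsumption} let us turn a subsumed pomset into a subsumed context. For a hypothesis with a more complex left-hand side, $\semcka{e}$ could contain arbitrary sp-pomsets and no such factorisation through the context would be available. A secondary subtlety is that the inductive step must reuse the \emph{same} hypothesis with the modified context $C'$ and discharge its premise via the inductive hypothesis applied to the original premises $C[\semcka{f}]$; this is sound because rule induction grants the downward-closure property for every pomset appearing in those premises.
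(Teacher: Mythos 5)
Your proof is correct and takes essentially the same route as the paper's: rule induction on the construction of $\closure{(\closure[\hexch]{L})}$, with the base case discharged by the $\subsp$-downward-closure of $\closure[\hexch]{L}$ (via \autoref{lemma:exch-closure-vs-subsumption}), and the inductive step splitting on $e = 1$ versus $e = \ltr{a}$, pulling the subsumption back through the context with \autoref{lemma:context-erasure} and \autoref{lemma:context-subsumption}, discharging the new premise $C'[\semcka{f}] \subseteq \closure{(\closure[\hexch]{L})}$ via \autoref{lemma:pomset-context-monotone} and the inductive hypothesis, and reapplying the closure rule with $C'$. Your explicit appeal to \autoref{lemma:context-preserve-series-parallel} to keep the subsumptions within $\SP$ is a minor elaboration the paper leaves implicit, not a difference in approach.
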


Using this fact, we can now show that, under the same precondition, $\hexch \cup H$ factors into $\hexch$ and $H$.
This factorisation is what we were looking for: it tells us that whenever $H$ strongly reduces to $\emptyset$, so does $H \cup \hexch$.

\begin{restatable}{lemma}{factorix}%
\label{lemma:factorise-exch}
Suppose that for each $e \leq f \in H$ we have that $e = 1$, or $e = \ltr{a}$ for some $\ltr{a} \in \Sigma$.
Then $H \cup \hexch$ factorises into $\hexch$ and $H$.
\end{restatable}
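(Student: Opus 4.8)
The plan is to unfold the definition of factorisation and discharge its three obligations: that $H \cup \hexch$ implies $\hexch$, that it implies $H$, and that for every $L \subseteq \SP$ we have $\closure[H \cup \hexch]{L} = \closure[H]{(\closure[\hexch]{L})}$. The two implications are immediate, since every hypothesis of $\hexch$ (resp.\ of $H$) is literally a member of $H \cup \hexch$, so $e \leqqcka^{H \cup \hexch} f$ holds by a single use of that very hypothesis. All the substance therefore lies in the closure identity, which I prove by mutual inclusion. Throughout I abbreviate $M := \closure[H]{(\closure[\hexch]{L})}$, and I first record that $M \subseteq \SP$: this follows from two applications of \autoref{lemma:composition-vs-closure}\eqref{property:hypothesis-sp}, first to $L \subseteq \SP$ to get $\closure[\hexch]{L} \subseteq \SP$, and then to $\closure[\hexch]{L}$.

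For the inclusion $M \subseteq \closure[H \cup \hexch]{L}$ I argue by monotonicity. Since $H \cup \hexch$ implies $\hexch$, \autoref{lemma:implication}\eqref{property:implication-vs-closure} gives $\closure[\hexch]{L} \subseteq \closure[H \cup \hexch]{L}$. Applying $H$-closure to both sides and invoking monotonicity (\autoref{lemma:composition-vs-closure}\eqref{property:hypothesis-monotone}) yields $M \subseteq \closure[H]{(\closure[H \cup \hexch]{L})}$. But $\closure[H \cup \hexch]{L}$ is already closed under every $H$-rule, so applying $H$-closure to it adds nothing and the right-hand side equals $\closure[H \cup \hexch]{L}$, as required.

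For the reverse inclusion $\closure[H \cup \hexch]{L} \subseteq M$ I exploit that $\closure[H \cup \hexch]{L}$ is, by definition, the \emph{smallest} language that contains $L$ and is closed under all $(H \cup \hexch)$-rules. It therefore suffices to verify that $M$ satisfies both requirements. Containment is clear from $L \subseteq \closure[\hexch]{L} \subseteq M$, and closure of $M$ under the $H$-rules holds by construction, as $M$ is itself an $H$-closure. The only nontrivial obligation is closure of $M$ under the $\hexch$-rules; since an $(H \cup \hexch)$-rule is either an $H$-rule or an $\hexch$-rule, once both are checked the union follows. By \autoref{lemma:exch-closure-vs-subsumption}, closure of an $\SP$-language under $\hexch$ coincides with downward closure under $\subsp$, so (using $M \subseteq \SP$) it is enough to show that $M$ is $\subsp$-downward closed. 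This is exactly \autoref{lemma:factorise-seq-exch}, whose hypothesis --- that each $e \leq f \in H$ satisfies $e = 1$ or $e = \ltr{a}$ --- is precisely the standing assumption of the present statement.

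The main obstacle is thus entirely deferred to \autoref{lemma:factorise-seq-exch}: the delicate point is that interleaving an already-interleaved pomset and then applying the restricted $H$-rules never yields a pomset that could not have been obtained by interleaving as the last step. Once that downward-closure is available, the present lemma is a routine assembly --- the two rule sets are checked separately against $M$, their union gives the $(H \cup \hexch)$-rules, and minimality of $\closure[H \cup \hexch]{L}$ closes the argument. Combining both inclusions establishes the closure identity, and together with the two trivial implications this is exactly the claim that $H \cup \hexch$ factorises into $\hexch$ and $H$.
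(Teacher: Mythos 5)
Your proof is correct and follows essentially the same route as the paper's: both settle the two implications trivially, handle the inclusion $\closure[H]{(\closure[\hexch]{L})} \subseteq \closure[H \cup \hexch]{L}$ via \autoref{lemma:implication}, and reduce the hard inclusion to \autoref{lemma:factorise-seq-exch}. The only difference is packaging --- the paper runs an explicit induction on the construction of $\closure[H \cup \hexch]{L}$ and handles the $\hexch$ case with \autoref{lemma:pomset-context-monotone} and \autoref{lemma:context-preserve-series-parallel}, whereas you invoke minimality of the closure (the same principle as that induction) and route the $\hexch$ case through \autoref{lemma:exch-closure-vs-subsumption}; these are interchangeable.
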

\begin{proof}
Since $H, \hexch \subseteq H \cup \hexch$, it should be obvious that $H \cup \hexch$ implies both $H$ and $\hexch$.
It remains to show that, if $L \subseteq \SP$, then $\closure[H]{(\closure[\hexch]{L})} = \closure[H \cup \hexch]{L}$.
The inclusion from left to right is a consequence of \autoref{lemma:implication}\eqref{property:implication-vs-closure}--\eqref{property:implication-vs-double-closure}.

For the other inclusion, we show that if $A \subseteq \closure[H \cup \hexch]{L}$, then $A \subseteq \closure[H]{(\closure[\hexch]{L})}$.
The proof proceeds by induction on the construction of $A \subseteq \closure[H \cup \hexch]L$.
In the base, we have that $A \subseteq \closure[H \cup \hexch]{L}$ because $A = L$; in that case, $A \subseteq \closure[\hexch]{L} \subseteq \closure[H]{(\closure[\hexch]{L})}$.

For the inductive step, $A \subseteq \closure[H \cup \hexch]{L}$ because there exist $e \leq f \in H \cup \hexch$ and $C \in \PCsp$ such that $A = C[\semcka{e}]$, and $C[\semcka{f}] \subseteq \closure[H \cup \hexch]{L}$.
By induction, we then know that $C[\semcka{f}] \subseteq \closurep{\closure[\hexch]L}$.
On the one hand, if $e \leq f \in H$, then $A = C[\semcka{e}] \subseteq \closurep{\closure[\hexch]L}$ immediately.
On the other hand, if $e \leq f \in \hexch$, then $\semcka{e} \subsp \semcka{f}$, and hence $C[\semcka{e}] \subsp C[\semcka{f}]$ by \autoref{lemma:pomset-context-monotone}.
By \autoref{lemma:context-preserve-series-parallel} and \autoref{lemma:factorise-seq-exch}, it then follows that $A = C[\semcka{e}] \subseteq \closure[H]{(\closure[\hexch]{L})}$.
\end{proof}

\subsection{Lifting}

A number of reduction procedures already exist at the level of Kleene algebra~\cite{kozen-mamouras-2014,doumane-kuperberg-pous-pradic-2019}; ideally, one would like to lift those procedures to CKA\@.

\begin{example}
The reductions in \autoref{example:simple-reduction} and \autoref{example:reify-group} worked out for terms without $\parallel$, and then extended inductively, by defining the reduction of $e \parallel f$ to be the parallel composition of the reductions of $e$ and $f$ respectively.

As a non-example, consider $H = \{ \ltr{a} \leq 1 \}$.
Even though this hypothesis can be reduced to $\emptyset$ within Kleene algebra~\cite{cohen-1994}, it is not obvious how this would work for pomset languages.
In particular, if $1 \in L$, then $1 \parallel \dots \parallel 1 \in L$ for any number of $1$'s, and hence $\ltr{a} \parallel \dots \parallel \ltr{a} \in \closure{L}$ for any number of $\ltr{a}$'s.
This precludes the possibility of a strong reduction to $\emptyset$, because $\closure{\semcka{1}}$ is a pomset language of unbounded (parallel) width, which cannot be expressed by any $e \in \termscka$~\cite{lodaya-weil-2000}.
\end{example}

We now establish a set of sufficient conditions for such a lifting to work.
To this end, we first formally define Kleene algebra syntax, axioms and semantics.

\begin{definition}
Write $\termska$ for the set of \emph{Kleene algebra terms}, i.e., the terms in $\termscka$ that do not contain $\parallel$.
Furthermore, we write $\equivka$ for the smallest congruence on $\termska$ that is generated by the axioms of $\equivcka$ that do not involve $\parallel$.
\end{definition}

When $e \in \termska$, it is not hard to see that $\semcka{e}$ contains totally ordered pomsets, i.e., words, exclusively.
Using these definitions, we can now specialise the notions of hypotheses, context, and closure to the sequential setting, as follows.

\begin{definition}
The relation $\equivka^H$ is generated from $H$ and $\equivka$ as before.

A context $C \in \PCsp$ is \emph{sequential} if it is totally ordered, i.e., if it is a word with one occurrence of $*$; we write $\PCseq$ for the set of sequential contexts.

Given a set of hypotheses $H$ and a language $L \subseteq \Sigma^*$, we define the \emph{sequential closure} of $L$ with respect to $H$, written $\seqclosure L$, as the least language containing $L$ such that for all $e \leq f \in H$ and $C \in \PCseq$, if $C[\semcka f]\subseteq \seqclosure L$, then $C[\semcka e]\subseteq\seqclosure L$.
\end{definition}

If $\parallel$ does not occur in any hypothesis, then the definition of sequential closure coincides with the closure operator from~\cite{doumane-kuperberg-pous-pradic-2019}.
Thus, if $L \subseteq \Sigma^*$, then $\seqclosure{L} \subseteq \Sigma^*$.

The analogue of strong reduction for the sequential setting is as follows.

\begin{definition}
Suppose that $H$ implies $H'$.
A map $r: \termska \to \termska$ is a \emph{sequential reduction} from $H$ to $H'$ when the following hold:
\begin{enumerate}[(i)]
    \item\label{property:seq-preserve-equivalence}
    for $e \in \termska$, it holds that $e \equivka^H r(e)$, and

    \item\label{property:seq-project-equivalence}
    for $e \in \termska$, it holds that $\seqclosure[H]{\semka{e}} = \seqclosure[H']{\semka{r(e)}}$.
\end{enumerate}
$H$ \emph{sequentially reduces} to $H'$ if there exists a sequential reduction from $H$ to $H'$.
\end{definition}

To lift a sequential reduction to a proper reduction, the following class of hypotheses will turn out to be useful.

\begin{definition}
A hypothesis $e \leq f$ with $e, f \in \termska$ is called \emph{grounded} if $\semcka{f} = \{ W \}$ for some non-empty word (totally ordered pomset) $W$, and $e \in \termska$.
We say that a set of hypotheses $H$ is grounded if every $e \leq f \in H$ is grounded.
\end{definition}

\begin{example}
Any hypothesis of the form $e \leq \ltr{a}_1 \cdots \ltr{a}_n$ for $n > 0$ is grounded.
On the other hand, the hypothesis $\ltr{a} \leq 1$ that we saw in the previous example is not grounded, since the semantics of $1$ contains the empty pomset.
\end{example}

The closure of a language of words can be expressed in terms of its sequential closure, provided that the set of hypotheses is grounded.

\begin{restatable}{lemma}{liftclosure}%
\label{lemma:lift-closure}
Let $H$ be grounded.
If $L \subseteq \Sigma^*$, then $\closure{L} = \seqclosure{L}$.
Moreover, for $L, L' \subseteq \SP$, we have that $\closurep{L \parallel L'} = \closure{L} \parallel \closure{L'}$.
\end{restatable}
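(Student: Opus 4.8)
The plan is to prove the two equalities separately, in both cases reducing the full ($\PCsp$) closure to something controlled by groundedness: a grounded hypothesis $e \leq f$ has $e \in \termska$ and $\semcka{f} = \{W\}$ for a single \emph{non-empty} word $W$, and this is exactly what prevents genuinely parallel context structure from surviving.

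For the first equality, $\seqclosure{L} \subseteq \closure{L}$ is immediate: since $\PCseq \subseteq \PCsp$, the language $\closure{L}$ satisfies every defining rule of the sequential closure, and $\seqclosure{L}$ is the least such language. For the converse $\closure{L} \subseteq \seqclosure{L}$, I would show that $\seqclosure{L}$ is itself closed under the full $\PCsp$-rule; since $L \subseteq \seqclosure{L}$, leastness of $\closure{L}$ then gives the inclusion. So let $e \leq f \in H$ and $C \in \PCsp$ with $C[\semcka{f}] \subseteq \seqclosure{L}$. As $H$ is grounded, $\semcka{f} = \{W\}$ for a non-empty word $W$ and $e \in \termska$, so $C[\semcka{f}] = \{C[W]\}$; and since $L \subseteq \Sigma^*$ implies $\seqclosure{L} \subseteq \Sigma^*$ (as noted in the text), $C[W]$ is totally ordered.

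The key step is a forcing claim: if $C \in \PCsp$ and $C[W]$ is totally ordered for some non-empty word $W$, then $C \in \PCseq$. I would prove this by induction on the structure of $C$ using \autoref{lemma:series-parallel-contexts-inductive}: the base $C = *$ and the sequential cases $U \cdot C'$ and $C' \cdot V$ are immediate, while in the parallel case $C = U \parallel C'$ the pomset $C[W] = U \parallel C'[W]$ has two incomparable non-empty parts unless $U = 1$, contradicting totality; so $U = 1$ and the induction hypothesis applies to $C'$. Granting the claim, $C \in \PCseq$, and since $\semcka{e} \subseteq \Sigma^*$ (because $e \in \termska$), the defining rule of $\seqclosure{L}$ applies and yields $C[\semcka{e}] \subseteq \seqclosure{L}$. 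This establishes closedness, hence the first equality.

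For the second equality, I would start from \autoref{lemma:composition-vs-closure}\eqref{property:hypothesis-parallel}, which gives $\closurep{L \parallel L'} = \closurep{\closure{L} \parallel \closure{L'}}$, so it suffices to show that $\closure{L} \parallel \closure{L'}$ is already $H$-closed. Again I check the $\PCsp$-rule: take $e \leq f \in H$ and $C \in \PCsp$ with $C[\semcka{f}] \subseteq \closure{L} \parallel \closure{L'}$; groundedness gives $\semcka{f} = \{W\}$ with $W$ a non-empty word, so $C[W] = P \parallel Q$ for some $P \in \closure{L}$ and $Q \in \closure{L'}$. Now \autoref{lemma:context-prime-locus}, applied with the non-empty word $W$, splits the context: either $C = C' \parallel Q$ with $C'[W] = P$, or $C = P \parallel C'$ with $C'[W] = Q$, for some $C' \in \PCsp$. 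In the first case $C'[\semcka{f}] = \{P\} \subseteq \closure{L}$, so the closure rule for $\closure{L}$ gives $C'[\semcka{e}] \subseteq \closure{L}$, whence $C[\semcka{e}] = C'[\semcka{e}] \parallel \{Q\} \subseteq \closure{L} \parallel \closure{L'}$; the second case is symmetric. Thus $\closure{L} \parallel \closure{L'}$ is closed, giving $\closurep{\closure{L} \parallel \closure{L'}} = \closure{L} \parallel \closure{L'}$ and hence the equality.

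The main obstacle, in both parts, is the same phenomenon: one must rule out any genuinely parallel structure in $C$ once a non-empty word has been plugged into the gap. In the first part this is the forcing claim collapsing $C$ to a sequential context; in the second part it is isolated cleanly by \autoref{lemma:context-prime-locus}, which peels a parallel factor off $C$ so that the grounded hypothesis can be applied inside the residual context. Non-emptiness of $W$ is precisely what makes these steps go through, as the running non-example $\ltr{a} \leq 1$ illustrates.
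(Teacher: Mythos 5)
Your proof is correct, and its mathematical core coincides with the paper's: groundedness turns the right-hand side of each hypothesis into a singleton non-empty word $\{W\}$, which in part one forces the surrounding context to be sequential, and in part two lets \autoref{lemma:context-prime-locus} split the context along the parallel composition. What differs is the proof architecture. You establish each equality by showing that the candidate set ($\seqclosure{L}$, respectively $\closure{L} \parallel \closure{L'}$) already satisfies the full $\PCsp$-closure rule, and then invoke leastness of the closure (together with \autoref{lemma:composition-vs-closure}\eqref{property:hypothesis-parallel} in part two); the paper instead inducts on the derivation of membership in the closure, threading an invariant through the induction --- in part one that $A \subseteq \seqclosure{L}$ \emph{and} $A \subseteq \Sigma^*$, in part two that $A$ admits a factorisation $A \subseteq B \parallel B'$ with $B \subseteq \closure{L}$ and $B' \subseteq \closure{L'}$. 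These are equivalent formulations of the same fixpoint principle, but yours avoids carrying invariants, at the price of needing $\seqclosure{L} \subseteq \Sigma^*$ up front (legitimate here: grounded hypotheses are $\parallel$-free, so the paper's standing remark applies). Also, your ``forcing claim'' is exactly \autoref{lemma:context-vs-sequential}\eqref{property:context-vs-sequential-extract}, an auxiliary lemma the paper proves by direct reasoning on labelled posets, whereas you give a structural induction over the characterisation of $\PCsp$ in \autoref{lemma:series-parallel-contexts-inductive}, where non-emptiness of $W$ rules out a genuine parallel case --- a sound and arguably slicker argument. Net effect: same theorem via the same key lemmas, with your version slightly more modular and the paper's slightly more self-contained.
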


The above then allows us to turn a sequential reduction into a reduction.

\begin{restatable}{lemma}{liftreduction}%
\label{lemma:lift-reduction}
Suppose that $H$ sequentially reduces to $H'$.
If $H$ and $H'$ are grounded, then $H$ strongly reduces to $H'$.
\end{restatable}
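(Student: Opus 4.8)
The goal is to show that if $H$ sequentially reduces to $H'$ via some sequential reduction $r : \termska \to \termska$, and both $H$ and $H'$ are grounded, then $H$ strongly reduces to $H'$. The natural strategy is to extend $r$ to a map $\hat r : \termscka \to \termscka$ by recursing through the parallel operator, setting $\hat r(e \parallel f) = \hat r(e) \parallel \hat r(f)$ (and $\hat r$ agreeing with $r$ on the sequential fragment), and then to verify the two conditions of strong reduction, namely $e \equivcka^H \hat r(e)$ and $\closure{\semcka{e}} = \closure[H']{\semcka{\hat r(e)}}$, by induction on the structure of $e$.

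The plan is to handle the base cases and the $0, +, \cdot, {*}$ cases using the sequential reduction hypothesis directly, since on the sequential fragment $\hat r$ coincides with $r$ and closures restricted to word languages agree with sequential closures by \autoref{lemma:lift-closure}. For the first condition, $e \equivcka^H \hat r(e)$, the interesting step is $e \parallel f$: here I would use the inductive hypotheses $e \equivcka^H \hat r(e)$ and $f \equivcka^H \hat r(f)$ together with the fact that $\equivcka^H$ is a congruence with respect to $\parallel$ to conclude $e \parallel f \equivcka^H \hat r(e) \parallel \hat r(f) = \hat r(e \parallel f)$. This part is routine. For the second (strong) condition I would similarly induct; the crucial structural case is again $e \parallel f$, where the equation $\closurep{L \parallel L'} = \closure{L} \parallel \closure{L'}$ from \autoref{lemma:lift-closure} is exactly what is needed to commute closure past parallel composition, so that
\[
    \closure{\semcka{e \parallel f}}
        = \closure{\semcka{e}} \parallel \closure{\semcka{f}}
        = \closure[H']{\semcka{\hat r(e)}} \parallel \closure[H']{\semcka{\hat r(f)}}
        = \closure[H']{\semcka{\hat r(e \parallel f)}},
\]
using the inductive hypotheses in the middle and \autoref{lemma:lift-closure} again (now for $H'$, also grounded) at both ends.

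The main obstacle I expect is justifying that the sequential condition genuinely lifts to the full closure on the fragment where no $\parallel$ appears at the top level but sequential composition and star are present. The subtlety is that $\closure{-}$ quantifies over all series-parallel contexts $\PCsp$, whereas the sequential reduction only controls sequential contexts $\PCseq$; the bridge is \autoref{lemma:lift-closure}, which tells us that for a word language the full $H$-closure collapses to the sequential closure precisely because $H$ is grounded. Thus groundedness of both $H$ and $H'$ is used in an essential way at every leaf of the induction where we reduce to the word case, and I would need to check carefully that $\semcka{e}$ is indeed a language of words whenever $e$ is $\parallel$-free, so that \autoref{lemma:lift-closure} applies. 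Once that bookkeeping is in place, the induction goes through, and the two verified conditions are exactly the definition of a strong reduction, completing the proof.
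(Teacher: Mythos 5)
Your proposal takes essentially the same route as the paper's proof: extend the sequential reduction $r$ across $\parallel$, dispatch condition (i) of strong reduction by congruence, and prove the closure equation by induction, using \autoref{lemma:lift-closure} twice --- once to collapse $\closure{-}$ to the sequential closure on word languages so that the sequential-reduction hypothesis applies, and once to commute closure past $\parallel$. The only structural difference is that you induct on the shape of $e$, whereas the paper inducts on the number of occurrences of $\parallel$, treating all of $\termska$ as a single base case.

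There is, however, a gap in your $+$, $\cdot$ and ${}^*$ cases --- and it is worth naming precisely, because the paper's own proof has the same blind spot. In a structural induction these cases include terms whose \emph{subterms} contain $\parallel$, e.g.\ $e = ((\ltr{a} \parallel \ltr{c}) + \ltr{a}) \cdot \ltr{a}$. For such $e$ your justification (``use the sequential reduction hypothesis directly, since on the sequential fragment $\hat r$ coincides with $r$'') does not apply: $e$ is not in the sequential fragment, $\hat r$ is not even defined on it by your two clauses, and the sequential-reduction property is a property of whole terms of $\termska$, not a compositional one, so it cannot be invoked on the pieces. Nor do the induction hypotheses for the subterms suffice: what you would need is an analogue of the second part of \autoref{lemma:lift-closure} for sequential composition, i.e.\ $\closurep{L \cdot K} = \closure{L} \cdot \closure{K}$, and this is false even for grounded $H$. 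For instance, with $H = \{ \ltr{b} \leq \ltr{a} \cdot \ltr{a} \}$ (which is grounded) we have $\ltr{b} \in \closurep{\{\ltr{a}\} \cdot \{\ltr{a}\}}$, yet $\closure{\{\ltr{a}\}} \cdot \closure{\{\ltr{a}\}} = \{\ltr{a} \cdot \ltr{a}\}$: the word witnessing a grounded hypothesis can straddle a sequential cut, whereas \autoref{lemma:context-prime-locus} guarantees it cannot straddle a parallel one --- that asymmetry is exactly why \autoref{lemma:lift-closure} is special to $\parallel$. (Falling back on \autoref{lemma:composition-vs-closure}\eqref{property:hypothesis-concat} does not help either: it leaves mismatched outer closures, an $H$-closure on one side and an $H'$-closure on the other.) Note the slippage in your own wording between ``no $\parallel$ at the top level'' and ``$\parallel$-free'': these are different fragments, and your argument only covers the latter plus top-level parallel compositions thereof. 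The paper sidesteps writing these cases by making its inductive step apply only to terms whose root is $\parallel$, so its proof too really establishes the claim only for parallel compositions of $\termska$-terms, not for mixed terms such as $(\ltr{a} \parallel \ltr{c}) \cdot \ltr{a}$ or $(\ltr{a} \parallel \ltr{c})^*$. In short: on the fragment the paper actually treats, your argument matches it and is correct; but the cases you dismiss as routine are precisely where both arguments stop short, and closing that gap needs an idea beyond the cited lemmas.
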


\section{Instantiation to CKA with Observations}%
\label{section:instantiationtockao}
In this section, we will present Concurrent Kleene Algebra with Observations (CKAO), an extension of CKA with Boolean assertions that enable the specification of programs with the usual guarded conditionals and loops. We will obtain CKAO as an instance of CKAH by choosing a particular set of hypotheses. First, we define the set of propositional terms or Boolean observations.
\begin{definition}
Fix a finite set $\Omega$ of \emph{primitive observations}.
The set of \emph{propositional terms}, written $\termsba$, is generated by
\[
    p, q ::= \bot \pipe \top \pipe o \in \Omega \pipe p \vee q \pipe p \wedge q \pipe \overline{p}
\]
The relation $\equivba$ is the smallest congruence on $\termsba$ s.t.\ for $p, q, r \in \termsba$, we have
\begin{mathpar}
p \vee \bot \equivba p
\and
p \vee q \equivba q \vee p
\and
p \vee \overline{p} \equivba \top
\and
p \vee (q \vee r) \equivba (p \vee q) \vee r
\\
p \wedge \top \equivba p
\and
p \wedge q \equivba q \wedge p
\and
p \wedge \overline{p} \equivba \bot
\and
p \wedge (q \wedge r) \equivba (p \wedge q) \wedge r
\\
p \vee (q \wedge r) \equivba (p \vee q) \wedge (p \vee r)
\and
p \wedge (q \vee r) \equivba (p \wedge q) \vee (p \wedge r)
\end{mathpar}
We will write $p \leqqba q$ as a shorthand for $p \vee q \equivba q$.
\end{definition}
We write $\At$ for $2^\Omega$, the set of \emph{atoms} of the Boolean algebra.
It is well known that every $\alpha \in \At$ corresponds canonically to a Boolean term $\pi_\alpha$, such that every Boolean term $p \in \termsba$ is equivalent to the disjunction of all $\pi_\alpha$ with $\pi_\alpha \leqqba p$~\cite{birkhoff-bartee-1970}.
To simplify notation we identify $\alpha \in \At$ with $\pi_\alpha$.

We can now use $\termsba$ in defining the terms and axioms of CKAO, which will be given as a CKA over a specific alphabet with the following hypotheses:

\begin{definition}[CKAO]
We define the \emph{terms} of CKAO, denoted $\termsckao$, as $\termscka(\Sigma \cup \termsba)$, that is, as the CKA terms over $\termsba \cup \Sigma$.
We furthermore define the following set of hypotheses over $\termsckao$:
\begin{mathpar}
\hbool = \{ p = q : p, q \in \termsba\ \mathrm{s.t.}\ p \equivba q \}
\and
\hcontr = \{ p \wedge q \leq p \cdot q : p, q \in \termsba \}
\and
\hglu = \{ 0 = \bot \} \cup \{ p + q = p \vee q : p, q \in \termsba \}
\and
\hobs = \hbool \cup \hcontr \cup \hexch \cup \hglu
\end{mathpar}
The \emph{semantics} of CKAO is then given by $\closure[\hobs]{\semcka{-}}$.
\end{definition}
The hypotheses $\hbool$ contain the boolean identities, and $\hglu$ identifies the disjunction with the union (and their respective units as well).
$\hcontr$ specifies that if $p$ and $q$ hold simultaneously, then it is possible to observe them in sequence.
Note that the converse inequality is not included: observing $p$ and $q$ in sequence has strictly more behaviour than observing $p$ and $q$ simultaneously, as some intervening action can happen between the two observations.

The above definition gives us the semantics of CKAO as the standard pomset language model obtained from taking the $\hobs$-closure of the semantics of CKA\@.
As a matter of fact, we find by \autoref{lemma:soundness} that if $e, f \in \termsckao$ with $e \equivcka^\hobs f$, then $\closure[\hobs]{\semcka{e}} = \closure[\hobs]{\semcka{f}}$; hence, we already have a sound model of CKAO\@.

To prove completeness, we will use the techniques from the previous section.

\paragraph{First step: reification.}
We start by using reification to rid ourselves of the hypotheses from $\hbool$ and $\hglu$, and to simplify the hypotheses in $\hcontr$.
To this end, let $\hcontr'$ be the set of hypotheses given by $\{ \alpha \leq \alpha \cdot \alpha : \alpha \in \At \}$.
Let $\Gamma = \At \cup \Sigma \subseteq \termsba \cup \Sigma$.
We define $r: \Sigma \cup \termsba \to \termscka(\Gamma)$ by setting
\[
    r(a) =
        \begin{cases}
        \sum_{\alpha \leqqba p} \alpha & a = p \in \termsba \\
        \ltr{a} & a = \ltr{a} \in \Sigma
        \end{cases}
\]

\begin{lemma}%
\label{lemma:instantiate-reification}
The hypotheses $\hobs$ reduce to $\hexch \cup \hcontr'$.
\end{lemma}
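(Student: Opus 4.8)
The goal is to show that $\hobs$ reduces to $\hexch \cup \hcontr'$ via the map $r$ defined above. By \autoref{lemma:reification-to-reduction}, it suffices to verify two things: that $\hobs$ implies $\hexch \cup \hcontr'$, and that $r$ is a reification from $\hobs$ to $\hexch \cup \hcontr'$. The plan is to check each of the four conditions of reification in turn, and separately dispatch the implication.

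**The implication and the first two conditions.**

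First I would verify that $\hobs$ implies $\hexch \cup \hcontr'$. Since $\hexch \subseteq \hobs$ already, that half is immediate; for $\hcontr'$, each hypothesis $\alpha \leq \alpha \cdot \alpha$ is the instance $p = q = \alpha$ of $\hcontr$, so $\alpha \leqqcka^\hobs \alpha \cdot \alpha$ holds. For condition~\eqref{property:reify-equivalence-letter} of reification, I must show $r(a) \equivcka^\hobs a$ for each generator $a \in \Sigma \cup \termsba$. On $\Sigma$ this is trivial since $r$ is the identity; on $\termsba$, I would use the fact (recalled just before the definition) that every $p \in \termsba$ is $\equivba$-equivalent to $\sum_{\alpha \leqqba p} \alpha$, and then pass this $\equivba$-equation through $\hbool$ and $\hglu$ to obtain $\equivcka^\hobs$. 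Condition~\eqref{property:reify-idem}, expansiveness on $\Gamma = \At \cup \Sigma$, needs $\ltr{a} \leqq r(\ltr{a})$ for $\ltr{a} \in \Sigma$ (trivial, an equality) and $\alpha \leqq r(\alpha)$ for $\alpha \in \At$; here $r(\alpha) = \sum_{\beta \leqqba \alpha} \beta$, and since atoms are minimal the only $\beta \leqqba \alpha$ is $\alpha$ itself, so $r(\alpha) = \alpha$ and the inequality is again an equality.

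**The last two conditions.**

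Condition~\eqref{property:reify-preserve-gamma} requires that $(\hexch \cup \hcontr')$-closure preserves $\SP(\Gamma)$: if $L \subseteq \SP(\Gamma)$ then $\closure[\hexch \cup \hcontr']{L} \subseteq \SP(\Gamma)$. I would argue this directly from the closure definition, checking that neither generating hypothesis introduces letters outside $\Gamma$: the $\hexch$ hypotheses only reorder existing events (indeed by \autoref{lemma:exch-closure-vs-subsumption} $\hexch$-closure is just subsumption, which preserves labels and hence the alphabet), and every hypothesis in $\hcontr'$ has both sides built from atoms in $\At \subseteq \Gamma$, so plugging $\semcka{\alpha}$ into a context that already produces $\Gamma$-pomsets keeps the result over $\Gamma$. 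Condition~\eqref{property:reify-hypotheses} is the crux: for every $e \leq f \in \hobs$ I must show $r(e) \leqqcka^{\hexch \cup \hcontr'} r(f)$. This splits by which component of $\hobs$ the hypothesis comes from. For $\hbool$ (where $p \equivba q$) and for $\hglu$, the point is that $r$ sends $\equivba$-equivalent terms, and the two sides of each gluing identity, to terms that are already $\equivcka$-equal (no hypotheses needed), since $r$ canonicalises every propositional term into the same disjunction of atoms. For $\hexch$, the inequality $r(e) \leqq r(f)$ is an instance of $\hexch$ itself (as $r$ commutes with the CKA operations on the structure of the hypothesis).

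**The main obstacle.**

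I expect the genuinely delicate case to be $\hcontr$, i.e.\ showing $r(p \wedge q) \leqqcka^{\hexch \cup \hcontr'} r(p \cdot q)$. Unfolding, $r(p \wedge q) = \sum_{\alpha \leqqba p \wedge q} \alpha = \sum_{\alpha \leqqba p,\, \alpha \leqqba q} \alpha$, while $r(p \cdot q) = r(p) \cdot r(q) = \bigl(\sum_{\alpha \leqqba p} \alpha\bigr) \cdot \bigl(\sum_{\beta \leqqba q} \beta\bigr) = \sum_{\alpha \leqqba p,\, \beta \leqqba q} \alpha \cdot \beta$. So I need, for each atom $\alpha$ below both $p$ and $q$, that the summand $\alpha$ on the left is dominated by $\alpha \cdot \alpha$ on the right (taking $\beta = \alpha$); this is exactly the single hypothesis $\alpha \leq \alpha \cdot \alpha$ of $\hcontr'$. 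The remaining cross-terms $\alpha \cdot \beta$ with $\alpha \neq \beta$ only enlarge the right-hand side, so monotonicity of $+$ and $\cdot$ finishes the argument. The care needed here is purely in the bookkeeping of the two index sets and in invoking the $\hcontr'$ hypothesis atom-by-atom; once that correspondence is set up the inequality follows by congruence and the least-fixpoint-free axioms of $\equivcka$.
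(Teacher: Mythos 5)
Your proposal matches the paper's proof essentially step for step: both invoke \autoref{lemma:reification-to-reduction}, check the four reification conditions with the same arguments (canonicalisation into sums of atoms for $\hbool$ and $\hglu$, structural commutation for $\hexch$, and the atom-by-atom use of $\alpha \leq \alpha \cdot \alpha$ plus monotonicity for the $\hcontr$ case), and separately verify that $\hobs$ implies $\hcontr'$. The only nitpick is that $\alpha \leq \alpha \cdot \alpha$ is not literally the instance $p = q = \alpha$ of $\hcontr$ --- that instance is $\alpha \wedge \alpha \leq \alpha \cdot \alpha$, so, as the paper does, you also need the $\hbool$ step $\alpha \equivcka^{\hbool} \alpha \wedge \alpha$, which is harmless since $\hbool \subseteq \hobs$.
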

\begin{proof}
By \autoref{lemma:reification-to-reduction}, it suffices to show that $r$ is a reification, and that $\hobs$ implies $\hexch \cup \hcontr'$.
To see that $r$ is a reification, we check the conditions.

\eqref{property:reify-equivalence-letter}: % chktex 2
If $\ltr{a} \in \Sigma$, then $r(\ltr{a}) = \ltr{a} \equivcka^\hobs \ltr{a}$ immediately.
Otherwise, if $p \in \termsba$, then we derive $r(p) = \sum_{\alpha \leqqba p} \alpha \equivcka^\hglu \bigvee_{\alpha \leqqba p} \alpha \equivcka^\hbool p$
and hence $r(p) \equivcka^\hobs p$.

\eqref{property:reify-idem}: % chktex 2
If $\ltr{a} \in \Sigma$, then we already know that $r(\ltr{a}) = \ltr{a}$.
Otherwise, if $\alpha \in \At$, then
\[
  r(\alpha) = \sum_{\beta \leqqba \alpha} \beta = \alpha
\]

\eqref{property:reify-preserve-gamma}: % chktex 2
This property holds because all hypotheses in $\hexch \cup \hcontr'$ preserve $\Gamma$-languages, i.e., if $e \leq f \in \hexch \cup \hcontr'$ where $\semcka{f} \subseteq \SP(\Gamma)$, then $\semcka{e} \subseteq \SP(\Gamma)$ too.
It follows that $\hexch \cup \hcontr'$-closure must preserve $\Gamma$-languages.

\eqref{property:reify-hypotheses}: % chktex 2
We should show that if $e \leq f \in \hobs$, then $r(e) \leqqcka^{\hexch \cup \hcontr'} r(f)$.
To this end, we analyse the separate sets of hypotheses that make up $\hobs$.
\begin{itemize}
\item
  Let $e \leq f \in \hexch$, then
  $
  e = (g_{00} \parallel g_{01}) \cdot (g_{10} \parallel g_{11})$
  and
  $ f = (g_{00} \cdot g_{10}) \parallel (g_{01} \cdot g_{11})$,
  for some $g_{00}, g_{01}, g_{10}, g_{11} \in \termscka$.
  We then find that
  \begin{mathpar}
    r(e) = (r(g_{00}) \parallel r(g_{01})) \cdot (r(g_{10}) \parallel r(g_{11}))
    \and
    r(f) = (r(g_{00}) \cdot r(g_{10})) \parallel (r(g_{01}) \cdot r(g_{11}))
  \end{mathpar}
  hence $r(e) \leq r(f) \in \hexch$, and therefore $r(e) \leqqcka^{\hexch \cup \hcontr'} r(f)$.

\item
  Let $e \leq f \in \hbool$, then $e = p$ and $f = q$ such that $p \equivba q$.
  In that case,
  \[
    r(p) = \sum_{\alpha \leqqba p} \alpha = \sum_{\alpha \leqqba q} \alpha = r(q)
  \]

\item
  Let $e \leq f \in \hcontr$; then $e = p \wedge q$ and $f = p \cdot q$ for $p, q \in \termsba$.
  Then
  \begin{align*}
    r(p \wedge q)
    &= \sum_{\alpha \leqqba p \wedge q} \alpha
      \leqqcka^{\hcontr'} \sum_{\alpha \leqqba p \wedge q} \alpha \cdot \alpha \\
    &\leqqcka \Bigl( \sum_{\alpha \leqqba p} \alpha \Bigr) \cdot \Bigl( \sum_{\alpha \leqqba q} \alpha \Bigr)
      = r(p) \cdot r(q)
      = r(p \cdot q)
  \end{align*}

\item
  Let $e \leq f \in \hglu$.
  On the one hand, if $e = p \vee q$ and $f = p + q$, then
  \[
    r(p \vee q)
    = \sum_{\alpha \leqqba p \vee q} \alpha
    \equivcka \sum_{\alpha \leqqba p} \alpha + \sum_{\alpha \leqqba q} \alpha
    = r(p) + r(q)
    = r(p + q)
  \]
  This also establishes the case for $f \leq e \in \hglu$.
  On the other hand, if $e = 0$ and $p = \bot$, then
  $
    r(0)
    = 0
    = \sum_{\alpha \leqqba \bot} \alpha
    = r(\bot)
  $.
\end{itemize}

\noindent
To see that $\hobs$ implies $\hexch \cup \hcontr'$, it suffices to show that $\hobs$ implies $\hcontr'$.
To this end, note that if $e \leq f \in \hcontr'$, then $e = \alpha$ and $f = \alpha \cdot \alpha$ for some $\alpha \in \At$.
We can then derive that $\alpha \equivcka^\hbool \alpha \wedge \alpha \leqqcka^\hcontr \alpha \cdot \alpha$, and hence $e \leqqcka^\hobs f$.
\end{proof}

\paragraph{Second step: factorising.}

Since $\hcontr'$ satisfies the precondition of \autoref{lemma:factorise-exch}, we obtain the following.
\begin{lemma}%
\label{lemma:instantiate-factorisation}
The hypotheses $\hexch \cup \hcontr'$ factorise into $\hexch$ and $\hcontr'$.
\end{lemma}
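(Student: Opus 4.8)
The plan is to invoke \autoref{lemma:factorise-exch} directly, checking that its hypothesis applies to $\hcontr'$. That lemma states: if every hypothesis $e \leq f \in H$ has $e = 1$ or $e = \ltr{a}$ for some $\ltr{a} \in \Sigma$, then $H \cup \hexch$ factorises into $\hexch$ and $H$. Here I take $H = \hcontr'$, so that $H \cup \hexch = \hcontr' \cup \hexch = \hexch \cup \hcontr'$, which is exactly the set I want to factorise.

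First I would recall the definition of $\hcontr' = \{ \alpha \leq \alpha \cdot \alpha : \alpha \in \At \}$. For each such hypothesis, the left-hand side is $e = \alpha$, a single atom. Since we have identified each atom $\alpha \in \At$ with its propositional term $\pi_\alpha \in \termsba$, and since $\Sigma \cup \termsba$ is the alphabet over which CKAO terms are built, each $\alpha$ is a \emph{letter} of the ambient alphabet $\Gamma = \At \cup \Sigma$. Thus every hypothesis in $\hcontr'$ has a left-hand side that is a single letter, which is precisely the shape ``$e = \ltr{a}$ for some letter $\ltr{a}$'' required by the precondition of \autoref{lemma:factorise-exch}. (The precondition is stated for $\ltr{a} \in \Sigma$, but in the instantiated setting of CKAO the relevant base alphabet is $\Gamma$, and atoms are genuine letters therein; the statement of \autoref{lemma:factorise-exch} is alphabet-agnostic in that it only needs $e$ to be a unit or a single-letter pomset.)

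Having verified the precondition, I would simply apply \autoref{lemma:factorise-exch} with $H = \hcontr'$ to conclude that $\hcontr' \cup \hexch$ factorises into $\hexch$ and $\hcontr'$, which is the desired statement. Concretely, this means (i) $\hexch \cup \hcontr'$ implies both $\hexch$ and $\hcontr'$ --- immediate since both are subsets --- and (ii) for all $L \subseteq \SP$, we have $\closure[\hcontr']{(\closure[\hexch]{L})} = \closure[\hexch \cup \hcontr']{L}$, which is exactly what \autoref{lemma:factorise-exch} delivers.

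The only genuine point requiring care --- and the main (mild) obstacle --- is the bookkeeping around the alphabet: \autoref{lemma:factorise-exch} is phrased with $\ltr{a} \in \Sigma$, whereas here the single-letter left-hand sides are atoms $\alpha$ living in the extended alphabet $\Gamma = \At \cup \Sigma$ of the CKAO instantiation. I would resolve this by noting that the proof of \autoref{lemma:factorise-exch} (and of the underlying \autoref{lemma:factorise-seq-exch}) never uses any property specific to $\Sigma$ beyond $\ltr{a}$ being a letter of the working alphabet, so the result transfers verbatim to the alphabet $\Gamma$ with $\ltr{a}$ ranging over $\Gamma$. Once this identification is made explicit, the lemma follows with essentially no further computation.
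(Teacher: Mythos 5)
Your proposal is correct and matches the paper's own argument exactly: the paper likewise obtains this lemma by observing that every hypothesis in $\hcontr'$ has a single letter (an atom $\alpha$) as its left-hand side, so the precondition of \autoref{lemma:factorise-exch} is met and that lemma applies directly. Your additional remark about the alphabet bookkeeping (atoms being genuine letters of the instantiated alphabet) is a point the paper leaves implicit, but it is resolved just as you describe.
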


This means that, by \autoref{lemma:factorisation} all that remains to do is strongly reduce $\hexch$ and $\hcontr'$ to $\emptyset$; we have already taken care of the former in \autoref{theorem:reduce-exch}.

\paragraph{Third step: reducing $\hcontr'$.}

In~\cite{kappe-brunet-rot-silva-wagemaker-zanasi-2019}, we have already shown that $\hcontr'$ sequentially reduces to $\emptyset$.
Since $\hcontr'$ is grounded we find the following, by \autoref{lemma:lift-reduction}.
\begin{lemma}%
\label{lemma:reduce-contrprime}
The hypotheses $\hcontr'$ strongly reduce to $\emptyset$.
\end{lemma}

\paragraph{Last step: putting it all together.}

Using the above reductions, we can then prove completeness of $\equivcka^\hobs$ w.r.t. $\closure[\hobs]{\semcka{-}}$, and decidability of semantic equivalence, too.

\begin{theorem}[Soundness and Completeness of CKAO]
Let $e, f \in \termsckao$.
\begin{enumerate}[(i)]
    \item
    We have $e \equiv^\hobs f$ if and only if $\closure[\hobs]{\semcka{e}} = \closure[\hobs]{\semcka{f}}$.

    \item
    It is decidable whether $\closure[\hobs]{\semcka{e}} = \closure[\hobs]{\semcka{f}}$.
\end{enumerate}
\end{theorem}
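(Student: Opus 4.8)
The plan is to assemble the reduction lemmas of this section into a single reduction from $\hobs$ all the way down to $\emptyset$, and then to invoke the fact that $\emptyset$ is complete and decidable by \autoref{theorem:bka-completeness-decidability}. Observe first that the soundness direction of part~(i)---that $e \equivcka^\hobs f$ implies $\closure[\hobs]{\semcka{e}} = \closure[\hobs]{\semcka{f}}$---is an immediate instance of \autoref{lemma:soundness}. Everything else then amounts to showing that $\hobs$ is \emph{complete} and \emph{decidable} in the sense defined earlier, since unfolding those definitions yields exactly the converse direction of part~(i) and the whole of part~(ii).

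To obtain completeness and decidability I would build the reduction $\hobs \to \emptyset$ in two stages. By \autoref{theorem:reduce-exch} the hypotheses $\hexch$ strongly reduce to $\emptyset$, and by \autoref{lemma:reduce-contrprime} so do $\hcontr'$; combined with the factorisation of $\hexch \cup \hcontr'$ into $\hexch$ and $\hcontr'$ supplied by \autoref{lemma:instantiate-factorisation}, \autoref{lemma:factorisation} then gives that $\hexch \cup \hcontr'$ strongly reduces to $\emptyset$. Since a strong reduction is in particular a reduction, and since \autoref{lemma:instantiate-reification} provides a reduction from $\hobs$ to $\hexch \cup \hcontr'$, chaining these via \autoref{lemma:chains} produces the desired reduction from $\hobs$ to $\emptyset$.

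With the reduction $\hobs \to \emptyset$ in hand, both remaining conclusions follow from \autoref{lemma:carry-over}: as $\emptyset$ is complete and decidable, so is $\hobs$. Reading off the definitions, ``$\hobs$ is decidable'' is precisely part~(ii), while ``$\hobs$ is complete'' supplies the missing direction of part~(i), which together with the soundness observation above closes the biconditional.

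Because all the substantial work lives in the previously established lemmas, the final step carries no real mathematical obstacle; the only care required is bookkeeping. In particular one must check that each lemma is invoked with matching hypothesis sets, that strong reductions are legitimately used where plain reductions are expected, and---underlying \autoref{lemma:instantiate-factorisation}---that $\hcontr'$ genuinely meets the precondition of \autoref{lemma:factorise-exch}, namely that every left-hand side of a hypothesis in $\hcontr'$ is a single letter (an atom $\alpha \in \At$). This is indeed the case, so no gap arises.
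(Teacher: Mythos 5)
Your proposal is correct and follows essentially the same route as the paper: soundness via \autoref{lemma:soundness}, then the reduction chain $\hobs \to \hexch \cup \hcontr'$ (\autoref{lemma:instantiate-reification}), factorisation plus \autoref{theorem:reduce-exch} and \autoref{lemma:reduce-contrprime} to reach $\emptyset$, and \autoref{lemma:carry-over} to transfer completeness and decidability. Your explicit invocation of \autoref{lemma:chains} and the check that $\hcontr'$ meets the precondition of \autoref{lemma:factorise-exch} merely spell out bookkeeping the paper leaves implicit.
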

\begin{proof}
For the first claim, we already knew the implication from left to right from \autoref{lemma:soundness}.
Conversely, and for the second claim, first note that that $\hobs$ reduces to $\hexch \cup \hcontr'$ by \autoref{lemma:instantiate-reification}.
By \autoref{lemma:instantiate-factorisation} and \autoref{lemma:factorisation}, the latter reduces to $\emptyset$, if we apply \autoref{theorem:reduce-exch} and \autoref{lemma:reduce-contrprime}.
By \autoref{lemma:carry-over}, we then conclude that $\hobs$ is complete and decidable, hence establishing the claim.
\end{proof}

\section{Discussion}%
\label{section:discussion}
The first contribution of this paper is to extend Kleene algebra with hypotheses~\cite{doumane-kuperberg-pous-pradic-2019} with a parallel operator.
The resulting framework, concurrent Kleene algebra with hypotheses (CKAH), is interpreted over pomset languages, a standard model of concurrency.
We start from simple axioms, known to capture equality of pomset languages~\cite{laurence-struth-2014}.
CKAH allows to add custom axioms, the so-called hypotheses.
These may be used to include domain-specific information in the language.
We develop this framework by providing a systematic way of producing from the hypotheses a sound pomset language model.
We also propose techniques that may be used to prove completeness and decidability of the resulting model.

An important instance of this framework is concurrent Kleene algebra (CKA) as presented in~\cite{hoare-moeller-struth-wehrman-2009}.
The only additional axiom there, known as the exchange law, may be added as a set of hypotheses.
We prove that the resulting semantics coincides with the (subsumption-closed) semantics of CKA and, more interestingly, the completeness proof of~\cite{kappe-brunet-silva-zanasi-2018} can be recovered as an instance of this framework.

The second contribution is a new framework to reason about programs with concurrency: concurrent Kleene algebra with observations (CKAO).
CKAO is obtained as an instance of CKAH, where we add the exchange law to model concurrent behaviour, and Boolean assertions to model control flow.
The Boolean assertions we consider are as in Kleene algebra with observations (KAO)~\cite{kappe-brunet-rot-silva-wagemaker-zanasi-2019} --- in fact, CKAO is a conservative extension of KAO\@.
Using the techniques developed earlier, we obtain a sound and complete semantics for this algebra.
While CKAO is similar to concurrent Kleene algebra with tests~\cite{jipsen-moshier-2016}, it avoids the problems of the latter by distinguishing conjunction and sequential composition.
CKAO provides the first sound and complete algebraic theory that seems sensible as a framework to reason about concurrent programs with Boolean assertions.

Future work is to explore other meaningful instances of CKAH\@.
Synchronous Kleene algebra~\cite{wagemaker-bonsangue-etal-2019,prisacariu-2010} is a natural candidate for this.
We also want to try and design domain specific languages, specifically, a concurrent variant of NetKAT~\cite{anderson-et-al-2014,foster-kozen-etal-2015}.

The class of hypotheses considered in this paper for which decidability and completeness may be established systematically is somewhat restrictive; identifying larger classes of tractable hypotheses is a challenging open problem.

Because of the compositional nature of our model, the CKAO semantics of a program contains behaviours that are not possible to obtain in isolation.
These behaviours are present to allow the program to interact meaningfully with its environment, i.e., when placed in a context.
However, for practical purposes one might want to close the system, and only consider behaviours that are possible in isolation. Studying this semantics remains subject of future work.

In the semantics of concurrent programs with assertions, it would be natural to see atoms as partial instead of total functions.
This captures the intuition that a thread might not have access to the complete machine state, but instead holds a partial view of it.
Pseudo-complemented distributive lattices (PCDL) have been proposed~\cite{jipsen-moshier-2016} as an alternative to Boolean algebra, modelling this partiality of information.
We leave it to future work to investigate the variant of CKAO obtained by replacing the Boolean algebra of observations with a PCDL\@.

\clearpage
\bibliography{bibliography}

\vfill

{\small\medskip\noindent{\bf Open Access} This chapter is licensed under the terms of the Creative Commons\break Attribution 4.0 International License (\url{http://creativecommons.org/licenses/by/4.0/}), which permits use, sharing, adaptation, distribution and reproduction in any medium or format, as long as you give appropriate credit to the original author(s) and the source, provide a link to the Creative Commons license and indicate if changes were made.} % chktex 1 chktex 36

{\small \spaceskip .28em plus .1em minus .1em The images or other third party material in this chapter are included in the chapter's Creative Commons license, unless indicated otherwise in a credit line to the material.~If material is not included in the chapter's Creative Commons license and your intended\break use is not permitted by statutory regulation or exceeds the permitted use, you will need to obtain permission directly from the copyright holder.} % chktex 1

\medskip\noindent\includegraphics{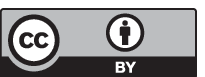} % chktex 8

\appendix
\ifarxiv%

\clearpage
\section{Omitted Proofs}%
\label{appendix:proofs}

\pomsetcontextmonotone*
\begin{proof}
Let $C = [\lp{c}]$, $D = [\lp{d}]$ and $U = [\lp{u}]$.
Without loss of generality, we can assume that $S_\lp{c} = S_{\lp{d}}$ and $\lambda_\lp{c} = \lambda_{\lp{d}}$ and ${\leq_\lp{d}} \subseteq {\leq_{\lp{c}}}$.
We can furthermore assume without loss of generality that $S_\lp{c}$ is disjoint from $S_\lp{u}$.
We write $C[U] = [\lp{c}[\lp{u}]]$ and $D[U] = [\lp{d}[\lp{u}]]$, and note that $S_{\lp{c}[\lp{u}]} = S_{\lp{d}[\lp{u}]}$ as well as
$\lambda_{\lp{c}[\lp{u}]} = \lambda_{\lp{d}[\lp{u}]}$ by definition.
The claim is then established by showing that ${\leq_{\lp{d}[\lp{u}]}} \subseteq {\leq_{\lp{c}[\lp{u}]}}$.
To this end, suppose that $s, s' \in S_{\lp{d}[\lp{u}]}$ such that $s \leq_{\lp{d}[\lp{u}]} s'$; there are four cases.
\begin{itemize}
    \item
    If $s \leq_\lp{u} s'$, then $s \leq_{\lp{c}[\lp{u}]} s'$ by definition.

    \item
    If $s \leq_\lp{d} s'$, then $s \leq_{\lp{c}} s'$, because ${\leq_\lp{d}} \subseteq {\leq_{\lp{c}}}$, and hence $s \leq_{\lp{c}[\lp{u}]} s'$ by definition.

    \item
    If $s \in S_\lp{u}$ and $s_* \leq_\lp{d} s'$, then $s_* \leq_{\lp{c}} s$, and hence $s \leq_{\lp{c}[\lp{u}]} s'$ by definition.

    \item
    If $s' \in S_\lp{u}$ and $s \leq_\lp{d} s_*$, then $s \leq_{\lp{c}} s_*$, and hence $s \leq_{\lp{c}[\lp{u}]} s'$ by definition.
    \qedhere
\end{itemize}
\end{proof}

\seriesparallelcontextsinductive*
\begin{proof}
Let $L$ be the smallest pomset language satisfying the rules above.

To see that $L \subseteq \PCsp$, it suffices to show that $\PCsp$ satisfies the same rules.
For the first rule, we have that $* \in \PCsp$ because $*$ is a pomset with exactly one $*$-labelled element, and $*$ is series-parallel because it is primitive.
For the second rule, suppose $U \in \SP$ and $V \in \PCsp$.
Then $U$ and $V$ are series-parallel, and hence $U \cdot V$ must be series-parallel as well.
Furthermore, since $U$ has no $*$-labelled nodes (on account of $*$ not occurring in $\Sigma$) and $V$ has exactly one $*$-labelled node, it follows that $U \cdot V$ also has exactly one $*$-labelled node; we conclude that $U \cdot V \in \PCsp$.
The other rules can be verified similarly.

For the other inclusion, let $U \in \PCsp$; we show that $U \in L$ by induction on the construction of $U$ as a series-parallel pomset.
In the base, the case where $U = 1$ can be discounted, for the empty pomset has no node to label with $*$.
We are thus left with the case where $U$ is primitive; the sole node must then be labelled with $*$, and hence $U = *$, meaning that $U \in L$.
For the inductive step, there are two cases to consider.
First, if $U = V \cdot W$ for series-parallel pomsets $V$ and $W$, then exactly one of these must contain exactly one $*$-labelled node --- in any other case, $U$ cannot occur in $\PCsp$.
Suppose that $V$ is this pomset, then $V \in \PCsp$, and by induction we find that $V \in L$; we also know that $W$ cannot contain any $*$-labelled node, and hence $W \in \SP$.
We can then conclude that $V \cdot W \in L$.
The case where $V$ contains no $*$-labelled node and $W$ has exactly one can be verified similarly.
Lastly, the case where $U = V \parallel W$ for series-parallel pomsets $V$ and $W$ can be treated analogously.
\end{proof}

To prove \autoref{lemma:context-prime-locus} and \autoref{lemma:context-preserve-series-parallel}, we need the following auxiliary lemma.

\begin{lemma}%
\label{lemma:substitution-vs-composition}
Let $C, C' \in \PC$ and $U, V \in \Pom$.
The following hold:
\begin{enumerate}[(i)]
    \item\label{property:substitution-vs-empty}
    If $C = *$, then $C[U] = U$.

    \item\label{property:substitution-vs-concat-right}
    If $C = C' \cdot V$, then $C[U] = C'[U] \cdot V$.

    \item\label{property:substitution-vs-concat-left}
    If $C = V \cdot C'$, then $C[U] = V \cdot C'[U]$.

    \item\label{property:substitution-vs-parallel}
    If $C = C' \parallel V$, then $C[U] = C'[U] \parallel V$.
\end{enumerate}
\end{lemma}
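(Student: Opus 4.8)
The plan is to prove each of the four claims by unfolding the definition of context substitution on fixed, pairwise-disjoint labelled-poset representatives and comparing the resulting labelled poset to the one defining the right-hand side. So I would fix representatives $C = [\lp{c}]$, $C' = [\lp{c'}]$, $U = [\lp{u}]$ and $V = [\lp{v}]$ with mutually disjoint carriers, and let $s_*$ denote the unique $*$-labelled node (in $\lp{c}$, resp.\ $\lp{c'}$). In every case the carriers and labellings of the two sides coincide by mere bookkeeping: the substitution $\lp{c}[\lp{u}]$ deletes $s_*$ and adjoins $S_\lp{u}$, which is exactly what the right-hand composition does, and no surviving node changes its label. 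The entire content of the lemma therefore lies in checking that the two partial orders agree, after which the two labelled posets are literally identical and the pomsets are equal.

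For the first claim I would observe that when $\lp{c}$ is the single-node context $*$, the node $s_*$ is its only element, so after substitution the carrier is just $S_\lp{u}$; among the four inference rules defining $\leq_{\lp{c}[\lp{u}]}$ only the first (which copies $\leq_\lp{u}$) can fire, since the other three all mention a node of $\lp{c}$ distinct from, above, or below $s_*$, of which there are none. Hence $\lp{c}[\lp{u}] = \lp{u}$ and $C[U] = U$.

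For the second and third claims the representative of $C = C' \cdot V$ is $\lp{c'} \cdot \lp{v}$, with $s_*$ living inside $\lp{c'}$ because $V$ has no $*$-node, and I would prove the two order inclusions separately. The intended correspondence is that the first two inference rules rebuild the internal order of $\lp{c'}[\lp{u}]$ together with that of $\lp{v}$, while the product part $S_{\lp{c'}[\lp{u}]} \times S_\lp{v}$ of the sequential composition is reproduced by the third rule: since $s_* \leq_{\lp{c'} \cdot \lp{v}} s$ for every $s \in S_\lp{v}$, that rule places every node of $\lp{u}$ below every node of $\lp{v}$, and the nodes of $S_{\lp{c'}} - \{s_*\}$ are already below $S_\lp{v}$ in $\lp{c'} \cdot \lp{v}$. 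The third claim is the mirror image, using the fourth rule in place of the third. The parallel claim is the easiest: in $\lp{c'} \parallel \lp{v}$ there is no order relation between $s_*$ and any node of $\lp{v}$, so the third and fourth rules never introduce order between $S_\lp{u}$ and $S_\lp{v}$, and the order on both sides is just the disjoint union of the order of $\lp{c'}[\lp{u}]$ with that of $\lp{v}$.

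The main obstacle, such as it is, is the bookkeeping in the sequential cases: one must verify both inclusions of the order relation and, in particular, confirm that the crossing rules (the third rule for right concatenation, the fourth for left concatenation) transfer the position of the gap $s_*$ faithfully onto all of $S_\lp{u}$, so that the inserted pomset ends up ordered against $\lp{v}$ exactly as $s_*$ was. Once the roles of the four inference rules are matched to the pieces of the composition orders, each inclusion reduces to a short rule-by-rule check, so I expect no genuine difficulty beyond care with the case distinctions.
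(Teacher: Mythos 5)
Your proposal is correct and follows essentially the same route as the paper's proof: fix pairwise-disjoint representatives, note that carriers and labellings coincide by construction, and then verify both inclusions of the order relations by matching the four inference rules defining $\leq_{\lp{c}[\lp{u}]}$ against the components of the composed orders, with the first claim immediate and the remaining cases handled analogously. The paper simply carries out the rule-by-rule case analysis you sketch in full detail for the sequential case and defers the parallel case as analogous, exactly as you propose.
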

\begin{proof}
Let $C = [\lp{c}]$, $C' = [\lp{c}']$, $U = [\lp{u}]$ and $V = [\lp{v}]$.
For the first claim, we can assume without loss of generality that $S_\lp{c} = \{ s_* \}$ is disjoint from $S_\lp{u}$; it suffices to show that $\lp{c}[\lp{u}]$ is isomorphic to $\lp{u}$.
First, we note that $S_{\lp{c}[\lp{u}]} = S_\lp{u} \cup S_\lp{c} - \{ s_* \} = S_\lp{u}$.
Furthermore, if $s \in S_{\lp{c}[\lp{u}]} = S_\lp{u}$, then $\lambda_{\lp{c}[\lp{u}]}(s) = \lambda_\lp{u}(s)$ by definition.
Lastly, if $s, s' \in S_{\lp{c}[\lp{u}]}$ such that $s \leq_{\lp{c}[\lp{u}]} s'$, then since $s, s' \in S_\lp{u}$ we also have that $s \leq_\lp{u} s'$ by definition of $\leq_{\lp{c}[\lp{u}]}$; the other direction can be shown similarly.
Hence, we conclude that $C[U] = [\lp{c}[\lp{u}]] = [\lp{u}] = U$.

For the second claim, we can assume without loss of generality that $\lp{c}'$ is disjoint from $\lp{v}$, and that $\lp{c} = \lp{c}' \cdot \lp{v}$.
We can furthermore assume that $\lp{u}$ is disjoint from $\lp{c}$ (and hence from $\lp{c}'$ and $\lp{v}$, meaning in particular that $\lp{c}'[\lp{u}]$ is disjoint from $\lp{v}$).
It now suffices to show that $\lp{c}[\lp{u}]$ is the same as $\lp{c}'[\lp{u}] \cdot \lp{v}$.
First, we note that the carriers of these labelled posets are identical by construction:
\[
    S_{\lp{c}[\lp{u}]} = S_\lp{c} \cup S_\lp{u} - \{ s_* \} = S_{\lp{c}'} \cup S_\lp{v} \cup S_\lp{u} - \{ s_* \} = S_{\lp{c}'[\lp{u}]} \cup S_\lp{v} = S_{\lp{c}'[\lp{u}] \cdot \lp{v}}
\]
To see that their labellings coincide, suppose that $s \in S_{\lp{c}[\lp{u}]}$; there are three cases.
\begin{itemize}
    \item
    If $s \in S_{\lp{c}'} - \{ s_* \}$, then in particular $s \in S_\lp{c} - \{ s_* \}$, meaning
    \[
        \lambda_{\lp{c}[\lp{u}]}(s)
            = \lambda_\lp{c}(s)
            = \lambda_{\lp{c}'}(s)
            = \lambda_{\lp{c}'[\lp{u}]}(s)
            = \lambda_{\lp{c}'[\lp{u}] \cdot \lp{v}}(s)
    \]

    \item
    If $s \in S_\lp{v}$, then in particular $s \in S_\lp{c} - \{ s_* \}$, meaning
    \[
        \lambda_{\lp{c}[\lp{u}]}(s)
            = \lambda_{\lp{c}}(s)
            = \lambda_\lp{v}(s)
            = \lambda_{\lp{c}'[\lp{u}] \cdot \lp{v}}(s)
    \]

    \item
    If $s \in S_\lp{u}$, then we derive
    \(
        \lambda_{\lp{c}[\lp{u}]}(s)
            = \lambda_\lp{u}(s)
            = \lambda_{\lp{c}'[\lp{u}]}(s)
            = \lambda_{\lp{c}'[\lp{u}] \cdot \lp{v}}(s)
    \).
\end{itemize}

\noindent
To see that ${\leq_{\lp{c}[\lp{u}]}} = {\leq_{\lp{c}'[\lp{u}] \cdot \lp{v}}}$, first suppose that $s, s' \in S_{\lp{c}[\lp{u}]}$ with $s \leq_{\lp{c}[\lp{u}]} s'$.
\begin{itemize}
    \item
    If $s, s' \in S_\lp{c} - \{ s_* \}$, then $s \leq_\lp{c} s'$; this gives us three subcases to consider.
    \begin{itemize}
        \item
        If $s, s' \in S_{\lp{c}'} - \{ s_* \}$, then $s \leq_{\lp{c}'} s'$, meaning $s \leq_{\lp{c}'[\lp{u}]} s'$; thus, $s \leq_{\lp{c}'[\lp{u}] \cdot \lp{v}} s'$.

        \item
        If $s, s' \in S_\lp{v}$, then $s \leq_\lp{v} s'$, meaning that $s \leq_{\lp{c}'[\lp{u}] \cdot \lp{v}} s'$.

        \item
        If $s \in S_{\lp{c}'} - \{ s_* \}$ and $s' \in S_\lp{v}$, then $s \in S_{\lp{c}'[\lp{u}]}$, meaning $s \leq_{\lp{c}'[\lp{u}] \cdot \lp{v}} s'$.
    \end{itemize}

    \item
    If $s, s' \in S_\lp{u}$, then $s \leq_\lp{u} s'$. This tells us that $s \leq_{\lp{c}'[\lp{u}]} s'$, meaning $s \leq_{\lp{c}'[\lp{u}] \cdot \lp{v}} s'$.

    \item
    If $s \in S_\lp{u}$ and $s' \in S_\lp{c} - \{ s_* \}$ with $s_* \leq_{\lp{c}} s'$, then there are two subcases:
    \begin{itemize}
        \item
        If $s' \in S_{\lp{c}'} - \{ s_* \}$, then $s_* \leq_{\lp{c}'} s'$, meaning $s \leq_{\lp{c}'[\lp{u}]} s'$; thus, $s \leq_{\lp{c}'[\lp{u}] \cdot \lp{v}} s'$.

        \item
        If $s' \in S_\lp{v}$, then since $s \in S_{\lp{c'}[\lp{u}]}$, we have $s \leq_{\lp{c}'[\lp{u}] \cdot \lp{v}} s'$ immediately.
    \end{itemize}

    \item
    If $s \in S_\lp{c}$ and $s' \in S_\lp{u}$ with $s \leq_{\lp{c}} s_*$, an argument similar to the above applies.
\end{itemize}
This shows that ${\leq_{\lp{c}[\lp{u}]}} \subseteq {\leq_{\lp{c}'[\lp{u}] \cdot \lp{v}}}$; the other inclusion can be shown similarly.

The third and fourth claim can be proved using an argument analogous to the second claim.
\end{proof}

\contextprimelocus*
\begin{proof}
The proof proceeds by induction on the construction of $\PCsp$ as in \autoref{lemma:series-parallel-contexts-inductive}.
In the base, $C = *$.
We then have $U = C[U] = V \parallel W$.
Since $U$ is a non-empty word, we find that either $V = 1$ or $W = 1$ by~\cite[Lemma~3.1]{kappe-brunet-silva-zanasi-2018}.
In the former case, we choose $C' = C$ to find that $C'[U] = C[U] = U = W$, and $C = C \parallel 1 = C' \parallel V$; the latter case can be handled similarly.

For the inductive step, there are three cases to consider.
\begin{itemize}
    \item
    If $C = D \cdot X$ for some $D \in \PCsp$ and $X \in \SP$, then $D[U] \cdot X = V \parallel W$.
    By~\cite[Lemma~3.1]{kappe-brunet-silva-zanasi-2018}, there are three subcases to consider.
    \begin{itemize}
        \item
        If $V = 1$, then choose $C' = C$ s.t.\ $C'[U] = C[U] = W$ (by \autoref{lemma:substitution-vs-composition}\eqref{property:substitution-vs-concat-right}) and $C = C \parallel 1 = C' \parallel V$.
        The case where $W = 1$ is similar.

        \item
        If $D[U] = 1$, then $D = *$ and $U = 1$; but this contradicts the premise that $U$ is non-empty.
        We can thus exclude this case.

        \item
        If $X = 1$, then $C[U] = D[U] = V \parallel W$.
        The claim follows by induction.
    \end{itemize}

    \item
    If $C = X \cdot D$ for some $D \in \PCsp$ and $X \in \SP$, then we can find $C'$ in a manner analogous to the above.

    \item
    If $C = D \parallel X$ for some $D \in \PCsp$ and $X \in \SP$, then by~\cite[Lemma~3.5]{kappe-brunet-silva-zanasi-2018}, we obtain $Y_0, Y_1, Z_0, Z_1 \in \SP$ such that all of the following hold:
    \begin{mathpar}
    D[U] = Y_0 \parallel Y_1
    \and
    X = Z_0 \parallel Z_1
    \and
    V = Y_0 \parallel Z_0
    \and
    W = Y_1 \parallel Z_1
    \end{mathpar}
    By induction, we find $D' \in \PCsp$ such that either $D = D' \parallel Y_1$ and $D'[U] = Y_0$, or $D = Y_0 \parallel D'$ and $D'[U] = Y_1$.
    In the former case, we can choose $C' = D' \parallel Z_0$ to find that $C'[U] = D'[U] \parallel Z_0 = Y_0 \parallel Z_0 = V$ (by \autoref{lemma:substitution-vs-composition}\eqref{property:substitution-vs-parallel}) and $C = D \parallel X = D' \parallel Y_1 \parallel Z_0 \parallel Z_1 = C' \parallel W$.
    The latter case is similar.
    \qedhere
\end{itemize}
\end{proof}

\contextpreserveseriesparallel*
\begin{proof}
We proceed by induction on the inductive construction of $\PCsp$ given in \autoref{lemma:series-parallel-contexts-inductive}.
In the base, we have that $C = *$, and hence $C[U] = U$ by \autoref{lemma:substitution-vs-composition}\eqref{property:substitution-vs-empty}; since $U \in \SP$, the claim follows.
For the inductive step, first suppose that $C = V \cdot C'$ for some $C' \in \PCsp$ and $V \in \SP$
By \autoref{lemma:substitution-vs-composition}\eqref{property:substitution-vs-concat-left}, we have that $C[U] = V \cdot C'[U]$; since $C'[U] \in \SP$ by induction, it then follows that $C[U] \in \SP$ as well.
The other cases can be treated similarly.
\end{proof}

To prove \autoref{lemma:context-erasure}, we first prove two auxiliary lemmas.
First, we describe the labelled posets involved when the empty pomset is plugged into a context.

\begin{lemma}%
\label{lemma:context-plug-empty}
Let $C \in \PC$ and $U \in \Pom$.
Now $C[1] = U$ if and only if $C = [\lp{c}]$ and $U = [\lp{u}]$ such that:
\begin{enumerate}[(i)]
    \item
    $S_\lp{c} = S_\lp{u} \cup \{ s_* \}$ with $s_* \not\in S_\lp{u}$, and

    \item
    for all $s, s' \in S_\lp{u}$ it holds that $s \leq_\lp{u} s'$ if and only if $s \leq_\lp{c} s'$, and

    \item
    $\lambda_\lp{c}(s_*) = *$ and $\lambda_\lp{c}(s) = \lambda_\lp{u}(s)$ for all $s \in S_\lp{u}$.
\end{enumerate}
\end{lemma}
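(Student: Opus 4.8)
The plan is to prove both implications by directly unfolding the definition of $C[1]$, exploiting the key observation that plugging the \emph{empty} pomset renders three of the four inference rules defining $\leq_{\lp{c}[\lp{u}]}$ vacuous. Concretely, fix a representative $\lp{c}$ of $C$ and let $s_*$ be its unique $*$-labelled node. Taking the empty labelled poset $\lp{e}$ (with $S_\lp{e} = \emptyset$) as the representative of $1$, the definition of context application gives $C[1] = [\lp{c}[\lp{e}]]$, where $S_{\lp{c}[\lp{e}]} = S_\lp{c} - \{s_*\}$, the labelling is $\lambda_\lp{c}$ restricted to $S_\lp{c} - \{s_*\}$, and $\leq_{\lp{c}[\lp{e}]}$ is the smallest relation closed under the four rules. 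Since $S_\lp{e} = \emptyset$, every rule mentioning a node of $S_\lp{e}$ --- namely the first, third, and fourth --- has no instances, so $\leq_{\lp{c}[\lp{e}]}$ is generated solely by the second rule $s \leq_\lp{c} s' \Rightarrow s \leq_{\lp{c}[\lp{e}]} s'$; as $\leq_\lp{c}$ is already a partial order, this forces $\leq_{\lp{c}[\lp{e}]}$ to be exactly $\leq_\lp{c}$ restricted to $S_\lp{c} - \{s_*\}$.

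For the forward direction, I would assume $C[1] = U$ and simply set $\lp{u} := \lp{c}[\lp{e}]$, i.e.\ the labelled poset obtained from $\lp{c}$ by deleting $s_*$. Then $[\lp{u}] = C[1] = U$, and conditions (i)--(iii) hold by construction: the carrier is $S_\lp{c} - \{s_*\}$, so $S_\lp{c} = S_\lp{u} \cup \{s_*\}$ with $s_* \notin S_\lp{u}$; the order of $\lp{u}$ is the restriction of $\leq_\lp{c}$, giving (ii); and the labellings agree off $s_*$ while $\lambda_\lp{c}(s_*) = *$, giving (iii).

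For the backward direction, I would assume we are given representatives $\lp{c}$ of $C$ and $\lp{u}$ of $U$ satisfying (i)--(iii), and compute $C[1]$ as above. Condition (i) yields $S_{\lp{c}[\lp{e}]} = S_\lp{c} - \{s_*\} = S_\lp{u}$, condition (iii) yields that the labellings coincide on this set, and condition (ii) together with the observation of the first paragraph yields $\leq_{\lp{c}[\lp{e}]} = \leq_\lp{c}\restr{S_\lp{u}} = \leq_\lp{u}$. Hence $\lp{c}[\lp{e}]$ is \emph{literally} the labelled poset $\lp{u}$, so $C[1] = [\lp{c}[\lp{e}]] = [\lp{u}] = U$.

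The only step requiring genuine care is the observation in the first paragraph: one must verify that the ``smallest relation satisfying the rules'' does not force any pairs beyond the restriction of $\leq_\lp{c}$. This is where I expect the main (though still minor) subtlety to lie --- it rests on the vacuity of the three rules referencing $S_\lp{e}$ together with the fact that $\leq_\lp{c}$ is already reflexive and transitive, so that closing the restricted relation under the rules produces nothing new.
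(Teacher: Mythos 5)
Your proof is correct and takes essentially the same route as the paper's: both unfold the definition of $\lp{c}[\lp{1}]$, observe that the three ordering rules mentioning nodes of the empty pomset are vacuous (so the order on $S_\lp{c} - \{s_*\}$ is exactly the restriction of $\leq_\lp{c}$), and then check both directions by comparing carriers, orders, and labellings. The only cosmetic difference is that the paper exhibits the identity map as an isomorphism between $\lp{c}[\lp{1}]$ and $\lp{u}$ in the converse direction, where you observe the two labelled posets are literally equal.
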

\begin{proof}
Let $C = [\lp{c}]$, where $s_* \in S_\lp{c}$ is the unique $*$-labelled note of $\lp{c}$, and write $\lp{1}$ for the (unique) empty labelled partial order.
For the direction from left to right, we know that $U = C[1] = [\lp{c}[\lp{1}]]$.
It suffices to prove that $\lp{c}[\lp{1}]$ satisfies exactly the properties of $\lp{u}$ listed above.
First of all, note that $S_{\lp{c}[\lp{1}]} = S_\lp{c} \cup S_\lp{1} - \{ s_* \}$ by definition, hence $s_* \not\in S_{\lp{c}[\lp{1}]}$.
Furthermore, since $S_\lp{1} = \emptyset$, it follows that $S_{\lp{c}[\lp{1}]} \cup \{ s_* \} = S_\lp{c}$.
Next, suppose that $s, s' \in S_{\lp{c}[\lp{1}]}$ with $s \leq_{\lp{c}[\lp{1}]} s'$.
We can discount the possibility that $s \in S_\lp{1}$ or $s' \in S_\lp{1}$, which leaves us to conclude that $s \leq_\lp{c} s'$; the converse holds by definition of $\leq_{\lp{c}[\lp{1}]}$.
Lastly, note that $\lambda_\lp{c}(s_*) = *$ immediately, and that if $s \in S_{\lp{c}[\lp{1}]}$, then $s \in S_\lp{c}$, and hence $\lambda_{\lp{c}[\lp{1}]}(s) = \lambda_\lp{c}(s)$.

Conversely, we can show that $\lp{c}[\lp{1}]$ is isomorphic to $\lp{u}$ satisfying the above conditions, and hence that $C[1] = U$.
In detail, first note that $S_{\lp{c}[\lp{1}]} = S_\lp{c} \cup S_\lp{1} - \{ s_* \} = S_\lp{c} - \{ s_* \} = S_\lp{u}$ by the first property; we choose the identity on $S_\lp{c}$ to be the mediating isomorphism.
To see that this indeed gives us a labelled poset isomorphism, note that the identity preserves and reflects ordering by the first property, and it preserves labels by the second property.
\end{proof}

The second auxiliary lemma that we will need has to do with the second part of the claim for \autoref{lemma:context-erasure}.
It says that we can turn any context into a series-parallel context while preserving the effect of plugging in the empty pomset; in fact this new context will be subsumed by the old one.

\begin{lemma}%
\label{lemma:context-to-sp-context}
Let $C\in\PC$ and $U \in \SP$ such that $C[1] = U$.
We can construct a $C' \in \PCsp$ such that $C'[1] = U$ and $C'\sqsubseteq C$.
\end{lemma}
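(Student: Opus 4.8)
The plan is to induct on the structure of $U$ as a series-parallel pomset, using \autoref{lemma:context-plug-empty} throughout to translate between the equation $C[1] = U$ and concrete statements about the order of $C$. Write $C = [\lp{c}]$ with unique $*$-node $s_*$; by \autoref{lemma:context-plug-empty} we have $S_\lp{c} = S_\lp{u} \cup \{ s_* \}$, the labels agree off $s_*$, and $\leq_\lp{c}$ restricted to $S_\lp{u}$ is exactly $\leq_\lp{u}$. The guiding constraint is that we may only \emph{enlarge} the order at $s_*$: to keep $C'[1] = U$ we must never add order between two points of $S_\lp{u}$, whereas to obtain $C' \sqsubseteq C$ it suffices (taking the identity bijection in \autoref{definition:subsumption}) that $\leq_\lp{c} \subseteq \leq_{\lp{c}'}$. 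The single observation that drives everything is that, because $\leq_\lp{c}$ agrees with $\leq_\lp{u}$ on $S_\lp{u}$, the node $s_*$ cannot ``bridge'' two $\leq_\lp{u}$-incomparable points: writing $P = \{ s \in S_\lp{u} : s \leq_\lp{c} s_* \}$ and $F = \{ s \in S_\lp{u} : s_* \leq_\lp{c} s \}$ for the predecessors and successors of $s_*$, transitivity of $\leq_\lp{c}$ forces $p \leq_\lp{u} f$ for all $p \in P$ and $f \in F$.

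Two cases are then immediate and also cover the base of the induction. If $P = \emptyset$ I take $C' = {*} \cdot U$, and if $F = \emptyset$ I take $C' = U \cdot {*}$. In each case $C' \in \PCsp$ by \autoref{lemma:series-parallel-contexts-inductive}, $C'[1] = U$ by \autoref{lemma:substitution-vs-composition}, and making $s_*$ minimal (resp.\ maximal) only adds order, so $\leq_\lp{c} \subseteq \leq_{\lp{c}'}$ and $C' \sqsubseteq C$. Since a one-element $U$, and $U = 1$, necessarily fall into one of these two cases, the remaining work is the inductive step in which both $P$ and $F$ are non-empty, which forces $U$ to be composite.

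For the inductive step I decompose $U = U_1 \parallel U_2$ or $U = U_1 \cdot U_2$ into non-empty parts and split $P, F$ over the two carriers as $P_i, F_i$. The bridging observation rules out the ``cross'' configurations. In the parallel case, $P_i \neq \emptyset$ together with $F_j \neq \emptyset$ for $j \neq i$ would force order between the two components, which is impossible; since both $P, F \neq \emptyset$, all of $P \cup F$ must therefore lie in a single component $U_i$, so $C = C_i \parallel U_j$ for the restricted context $C_i$, and I recurse on $C_i$ and glue back with $\parallel U_j$. In the sequential case, a successor of $s_*$ inside $U_1$ forces all of $U_2$ above $s_*$, so $C = C_1 \cdot U_2$ and I recurse on $C_1$; symmetrically a predecessor of $s_*$ inside $U_2$ forces all of $U_1$ below $s_*$, so $C = U_1 \cdot C_2$ and I recurse on $C_2$; in the one remaining configuration, $P \subseteq S_{\lp{u}_1}$ and $F \subseteq S_{\lp{u}_2}$, and I take $C' = U_1 \cdot {*} \cdot U_2$ directly. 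In each sub-case $C' \in \PCsp$ and $C'[1] = U$ follow from \autoref{lemma:series-parallel-contexts-inductive} and \autoref{lemma:substitution-vs-composition}, while $C' \sqsubseteq C$ follows either from the inductive hypothesis together with monotonicity of $\cdot$ and $\parallel$ under subsumption (\autoref{lemma:subsumption-sp}) or, in the direct constructions, from the fact that the new order extends $\leq_\lp{c}$ only by pairs involving $s_*$.

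The main obstacle is the last sequential configuration, where $s_*$ sits ``at the seam'' between $U_1$ and $U_2$. Here $C$ is genuinely non-series-parallel (it contains an $\N$ routed through $s_*$), so unlike the other sub-cases one cannot recurse, and one must check both that the explicit refinement $U_1 \cdot {*} \cdot U_2$ eliminates every $\N$ and that it adds order only at $s_*$. It is exactly the bridging observation --- that $P$ lies entirely below $F$ in $U$ --- which guarantees that placing $s_*$ after all of $U_1$ and before all of $U_2$ enlarges $\leq_\lp{c}$, so that $U_1 \cdot {*} \cdot U_2$ subsumes $C$ while still erasing to $U$.
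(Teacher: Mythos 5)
Your proof is correct, but it takes a genuinely different route from the paper's. The paper argues on the side of $C$: since $U = C[1]$ is \N-free, any \N-pattern in $C$ must pass through $s_*$; it repairs one pattern at a time by adding a single ordering pair involving $s_*$ and taking the transitive closure, then verifies---via two relational-algebra facts---that the result is still a partial order, adds no order between points of $S_\lp{u}$ (so erasure is preserved), and strictly enlarges the order; termination is by well-founded induction on $\sqsubseteq$. You instead argue on the side of $U$: a structural induction on $U$ as a series-parallel pomset, driven by the predecessor and successor sets $P, F$ of $s_*$ and the bridging observation that every element of $P$ lies below every element of $F$ in $U$, decomposing $C$ along the decomposition of $U$ and recursing, with explicit constructions (${*} \cdot U$, $U \cdot {*}$, $U_1 \cdot {*} \cdot U_2$) in the boundary configurations. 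Your route buys a cleaner termination argument (plain structural induction rather than well-founded induction on subsumption) and avoids the delicate antisymmetry check for transitive closures; the price is the case analysis justifying the decompositions $C = C_i \parallel U_j$, $C = C_1 \cdot U_2$, and $C = U_1 \cdot C_2$, which you correctly reduce to the bridging observation. Two small blemishes, neither fatal. First, monotonicity of $\cdot$ and $\parallel$ under subsumption is not really \autoref{lemma:subsumption-sp}: that lemma concerns $\subsp$ on series-parallel pomsets, and the contexts $C_1, C_i$ you recurse on need not be series-parallel; what you need is the elementary fact that composition of arbitrary pomsets is monotone with respect to $\sqsubseteq$, proved by extending the witnessing bijection with the identity on the other factor. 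Second, in the seam case $C$ need not be ``genuinely non-series-parallel'' (e.g.\ $C = U_1 \cdot {*} \cdot U_2$ itself lands in that configuration); fortunately your construction never relies on that claim.
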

\begin{proof}
Let $C=[\lp{c}]$ and $U=[\lp{u}]$.
We will show that if $C\notin\PCsp$, then we can construct a $\lp{c}'$ such that
\begin{enumerate}[(i)]
    \item
    $S_\lp{c} =S_{\lp{c}'}$, and
    \item
    for all $s, s' \in S_\lp{c} \setminus \{s_*\}$ with $s \leq_{\lp{c}'} s'$, we have $s \leq_\lp{c} s'$, and
    \item
    $\lambda_\lp{c} = \lambda_{\lp{c}'}$, and
    \item
    ${\leq_\lp{c}}$ is contained in but not equal to ${\leq_{\lp{c}'}}$.
\end{enumerate}
The first three conditions, in combination with \autoref{lemma:context-plug-empty}, imply that $C[1] = [\lp{c}'[\lp{1}]]$.
Moreover, the first, second and last condition together tell us that $[\lp{c}'] \sqsubseteq [\lp{c}]$ but $[\lp{c}'] \neq [\lp{c}]$.
Hence, $[\lp{c}']$ is strictly smaller than $C$ but still satisfies the premise of the lemma.
By well-founded induction on $\sqsubseteq$, we can conclude that if we repeat this process we eventually end up with a context $C'\in\PCsp$ that is subsumed by $C$ and still satisfies the condition that $C'[1] = U$.

An \N-pattern in $\lp{v}$ is a quadruplet of events $(s_1,s_2,s_3,s_4)\in S_\lp{v}^4$ such that:
\begin{mathpar}
s_1\leq_\lp{v} s_3
\and
s_2\leq_\lp{v} s_3
\and
s_2\leq_\lp{v} s_4
\and
s_1\not\leq_\lp{v} s_4
\and
s_2\not\leq_\lp{v} s_1
\and
s_4\not\leq_\lp{v} s_3.
\end{mathpar}
It is well known that a pomset is series-parallel if and only if it does not contain any \N-pattern~\cite{gischer-1988}.
Since $C \not\in \PCsp$ but $C \in \PC$, it follows that $C$ is not series-parallel; hence, there is an \N-pattern $(s_1,s_2,s_3,s_4) \in S_\lp{c}^4$.
On the other hand, since $U = C[1] \in \SP$ we know that $U$ does not have this pattern, so $s_*$ must be one of these four events.
Let us do a case analysis:
\begin{enumerate}
    \item
    First, suppose that $s_* = s_1$.
    We claim that we can build $\lp{c}'$ by choosing
    \begin{mathpar}
    S_{\lp{c}'} = S_\lp{c}
    \and
    \lambda_{\lp{c}'} = \lambda_\lp{c}
    \and
    {\leq_{\lp{c}'}} = {\left({\leq_\lp{c}} \cup\{(s_*,s_4)\}\right)}^*
    \end{mathpar}
    Clearly, the first, second and last conditions on $\lp{c}'$ hold by construction, and $\leq_{\lp{c}'}$ is reflexive and transitive, too.
    It remains to validate the third condition, and that $\leq_\lp{c'}$ is antisymmetric.
    The following facts help establish this.

    \begin{fact}%
    \label{fact:sstar-vs-s4}
    For all $s \in S_\lp{c}$ with $s <_\lp{c} s_*$ we have $s \leq_\lp{c} s_4$.
    \end{fact}
    \begin{proof}[Proof of \autoref{fact:sstar-vs-s4}]
    The proof proceeds by contradiction: assume there exists $s<_\lp{c} s_*$ with $s\not\leq_\lp{c} s_4$.
    Then we can show that the quadruplet $(s,s_2,s_3,s_4) \in S_\lp{c} \setminus\{s_*\} = S_\lp{u}$ is an \N-pattern in $\lp{u}$.
    Indeed, we already know that
    \begin{mathpar}
    s_2\leq_\lp{u}s_3
    \and
    s_2\leq_\lp{u}s_4
    \and
    s_4\not\leq_\lp{u}s_3.
    \end{mathpar}
    Therefore what remains are the statements relating to $s$, i.e., that
    \begin{mathpar}
    s\leq_\lp{u}s_3
    \and
    s\not\leq_\lp{u}s_4
    \and
    s_2\not\leq_\lp{u}s
    \end{mathpar}
    The first one is obtained by transitivity: $s\leq_\lp{c} s_*\leq_\lp{c}s_3$, and thus $s \leq_\lp{u} s_3$.
    The second one is a direct consequence of our assumption that $s \not\leq_\lp{c} s_4$.
    Lastly, if we assume $s_2 \leq_\lp{u} s$, then $s_2 \leq_\lp{c} s$, and by transitivity we get $s_2 \leq_\lp{c} s_*$, which contradicts that $(s_*,s_2,s_3,s_4)$ is an \N-shape.

    \smallskip
    We now have shown that $s \not\leq_\lp{c} s_4$ implies the existence of an \N-pattern in $U$, which cannot be the case.
    We conclude that if $s <_\lp{c}  s_*$, then $s \leq_\lp{c} s_4$.
    \end{proof}

    \begin{fact}%
    \label{fact:cprime-vs-c}
    If $s \leq_{\lp{c}'} s'$, then either $s \leq_\lp{c} s'$, or $s = s_*$ and $s_4 \leq_\lp{c} s'$.
    \end{fact}
    \begin{proof}[Proof of \autoref{fact:cprime-vs-c}]
    We shall phrase the claim in terms of relational algebra, and prove it by reasoning on relations.
    Let $R = \{(s_*,s_4)\}$ and $T = {<_\lp{c}}$; note that ${\leq_\lp{c}} = {(T \cup R)}^*$.
    The claim is now equivalent to showing that
    \[
        {(T \cup R)}^* \subseteq T^* \cup R \circ T^*
    \]

    \smallskip
    To this end, note that \autoref{fact:sstar-vs-s4} can be written as $T \circ R\subseteq T^*$.
    Also, $R \circ R = \emptyset$, since $s_*\neq s_4$, because $s_* \leq_\lp{c} s_3$ and $s_4 \not\leq_\lp{u} s_3$.
    Using these properties, we can derive the following containments:
    \begin{align*}
        T \circ T^*
            &\subseteq T^*\cup R \circ T^*\\
        T \circ R \circ T^*
             \subseteq T^*\circ T^*
            &\subseteq T^*\cup R\circ T^*\\
        R \circ T^*
            &\subseteq T^*\cup R\circ T^*\\
        R \circ R \circ T^*
             = \emptyset \circ T^*
             = \emptyset
            &\subseteq T^*\cup R\circ T^*
    \end{align*}
    By distributivity, we can then derive the following:
    \begin{align*}
        \left(T\cup R\right)\circ\left(T^*\cup R\circ T^*\right)
        &=
            T\circ T^* \cup
            T\circ R\circ T^*\cup
            R\circ T^* \cup
            R\circ R\circ T^*\\
        &\subseteq T^*\cup R\circ T^*.
    \end{align*}
    By the fixpoint principle for reflexive-transitive closure, it follows that:
    \[
        {\left(T\cup R\right)}^\star \circ \left( T^*\cup R\circ T^* \right)
            \subseteq T^* \cup R \circ T^*.
    \]
    Finally, we conclude that ${\left( T \cup R \right)}^* \subseteq T^* \cup R\circ T^*$ by:
    \[
        {\left(T\cup R\right)}^*
            ={\left(T\cup R\right)}^* \circ \mathsf{id}
            \subseteq {\left( T \cup R \right)}^* \circ \left( T^*\cup R\circ T^* \right)
            \subseteq T^*\cup R\circ T^*
        \qedhere
    \]
    \end{proof}

    We can now use \autoref{fact:cprime-vs-c} to show the remaining properties.
    For the third condition on $\lp{c}'$, assume $s \leq_\lp{c'} s'$ with $s,s' \neq s_*$.
    By the previous observation, either $s \leq_\lp c s'$ or we have both $s = s_*$ and $s_4\leq_c s'$.
    Since $s \neq s_*$, we may conclude that $s \leq_\lp c s'$.

    \smallskip
    For antisymmetry, let $s \leq_\lp{c'} s' \leq_\lp{c'} s$.
    Using \autoref{fact:cprime-vs-c}, we distinguish four cases:
    \begin{enumerate}
        \item
        If $s \leq_\lp{c} s' \leq_\lp{c} s$, then by antisymmetry of $\leq_\lp{c}$ we get $s = s'$;

        \item
        If $s \leq_\lp{c} s'$ and $s' = s_*$ with $s_4 \leq_\lp{c} s$, then we get $s_4 \leq_\lp{c} s \leq_\lp{c} s' = s_*$;

        \item
        If $s = s_*$ with $s_4 \leq_\lp{c} s'$ and $s' \leq_\lp{c} s$, then we get $s_4 \leq_\lp{c} s' \leq_\lp{c} s = s_*$;

        \item
        If $s = s_*$ with $s_4 \leq_\lp{c} s'$ and $s' = s_*$ with $s_4 \leq_\lp{c} s$, then we get $s = s_* = s'$.
    \end{enumerate}
    In the first and last case we could conclude that $s = s'$, while in the other three we ended up with $s_4\leq_\lp c s_*$, contradicting that $s_* \leq_\lp{c} s_3$ and $s_4 \not\leq_\lp{u} s_3$.

    \item
    Next, suppose that $s_*=s_2$.
    We claim that we can build $\lp{c}'$ by choosing
    \begin{mathpar}
    S_{\lp{c}'} = S_\lp{c}
    \and
    \lambda_{\lp{c}'} = \lambda_\lp{c}
    \and
    {\leq_{\lp{c}'}} = {\left({\leq_\lp{c}} \cup\{(s_*,s_1)\}\right)}^*
    \end{mathpar}
    As before, the first, second and last conditions on $\lp{c}'$ hold by construction, and $\leq_{\lp{c}'}$ is reflexive and transitive, too.
    It remains to validate the third condition, and that $\leq_\lp{c'}$ is antisymmetric.
    The following facts help establish this.

    \begin{fact}%
    \label{fact:sstar-vs-s1}
    For all $s \in S_\lp{c}$ with $s <_\lp{c} s_*$, we have $s \leq_\lp{c} s_1$.
    \end{fact}
    \begin{proof}[Proof of \autoref{fact:sstar-vs-s1}]
    We proceed by contradiction.
    Assume there exists $s <_\lp{c} s_*$ such that $s\not\leq_\lp{c} s_1$
    Then we can show that the quadruplet $(s_1,s,s_3,s_4)\in S_\lp{c} \setminus\{s_*\}=S_\lp{u}$ is an \N-pattern.
    Indeed, we already know that
    \begin{mathpar}
    s_1 \leq_\lp{u} s_3
    \and
    s_1 \not\leq_\lp{u} s_4
    \and
    s_4 \not\leq_\lp{u} s_3.
    \end{mathpar}
    Therefore what remains are the statements relating to $s$, i.e.,
    \begin{mathpar}
    s \leq_\lp{u} s_3
    \and
    s \leq_\lp{u} s_4
    \and
    s \not\leq_\lp{u} s_1
    \end{mathpar}
    The first and second are obtained by transitivity: $s \leq_\lp{c} s_*\leq_\lp{c} s_3,s_4$.
    The third one follows directly from our assumption.

    \smallskip
    We now have shown that    $s\not\leq_\lp{c} s_1$ implies the existence of an \N-pattern in $U$, which cannot be the case.
    We conclude that if $s <_\lp{c} s_*$, then $s \leq_\lp{c} s_1$.
    \end{proof}

    \begin{fact}%
    \label{fact:cprime-vs-c-bis}
    If $s \leq_{\lp{c}'} s'$, then either $s \leq_\lp{c} s'$ or $s = s_*$ and $s_1 \leq_\lp{c} s'$.
    \end{fact}
    \begin{proof}[Proof of \autoref{fact:cprime-vs-c-bis}]
    Just like in the proof of \autoref{fact:cprime-vs-c}, we can formulate the claim in terms of relational algebra.
    Let $R = \{(s_*, s_1)\}$ and $T = {<_\lp{c}}$.
    As before, we have ${\leq_\lp{c}} = {(T \cup R)}^*$, which makes the claim equivalent to showing that
    \[
        {\left( T \cup R \right)}^*
            \subseteq T^* \cup R \circ T^\star
    \]
    We can reformulate \autoref{fact:sstar-vs-s1} as $T \circ R \subseteq T^*$.
    Note also that $T \circ T = \emptyset$, since $s_*\neq s_1$, because $s_*\leq_\lp{c}s_4$ and $s_1\not\leq_\lp{u}s_4$.
    Since we now have the same hypotheses as in \autoref{fact:cprime-vs-c}, we may derive the same conclusion.
    \end{proof}

    The desired properties now follow from \autoref{fact:cprime-vs-c-bis}, as in the previous case.

    \item
    If $s_* = s_3$, then by a similar argument as in the case where $s_* = s_2$, we may show first that for all $s$ with $s_* <_\lp{c} s$ it holds that $s_4 <_\lp{c} s$.
    We can then use this to show that choosing ${\leq_{\lp{c}'}} = {({\leq_\lp{c}} \cup \{(s_4,s_*)\})}^*$ validates the claim.

    \item
    If $s_* = s_4$, then by a similar argument as in the case where $s_* = s_1$, we may show first that for all $s$ with $s_* <_\lp{c} s$ it holds that $s_1 <_\lp{c} s$.
    We can then use this to show that choosing ${\leq_{\lp{c}'}} = {({\leq_\lp{c}} \cup \{(s_1,s_*)\})}^*$ validates the claim.
    \qedhere
\end{enumerate}
\end{proof}

\noindent
With these lemmas in place, we can now prove \autoref{lemma:context-erasure}.

\contexterasure*
\begin{proof}
Let $U = C[1]$.
It suffices to construct a $C' \in \PC$ such that $C' \sqsubseteq C$ and $C'[1] = V$, since \autoref{lemma:context-to-sp-context} takes care of the ``moreover'' clause.
By \autoref{lemma:context-plug-empty}, we find that $C = [\lp{c}]$ and $U = [\lp{u}]$ such that $S_\lp{c} = S_\lp{u} \cup \{ * \}$, for all $s, s' \in S_\lp{u}$ it holds that $s \leq_\lp{u} s'$ if and only if $s \leq_\lp{c} s'$, $\lambda_\lp{c}(*) = *$ and $\lambda_\lp{c}(s) = \lambda_\lp{u}(s)$ for all $s \in S_\lp{u}$.
Let $V = [\lp{v}]$; since $V \sqsubseteq U$, we know without loss of generality that $S_\lp{v} = S_\lp{u}$ and $\lambda_\lp{u} = \lambda_\lp{v}$ and ${\leq_\lp{u}} \subseteq {\leq_\lp{v}}$.

Let $\leq_{\lp{c}'}$ be the smallest transitive relation on $S_\lp{c}$ containing both $\leq_{\lp{v}}$ and $\leq_\lp{c}$.
Let $s, s' \in S_\lp{c}$ with $s \leq_{\lp{c}'} s'$; the following properties of $\leq_{\lp{c}'}$ will be useful:
\begin{enumerate}[(i)]
    \item\label{property:interpolate-order-left}
    If $s = s_*$ and $s' \in S_\lp{v}$, then there exists an $\hat{s} \in S_\lp{v}$ such that $s_* \leq_\lp{c} \hat{s} \leq_\lp{v} s'$.

    \item\label{property:interpolate-order-right}
    If $s \in S_\lp{v}$ and $s' = s_*$, then there exists an $\hat{s} \in S_\lp{v}$ such that $s \leq_\lp{v} \hat{s} \leq_\lp{c} s_*$.

    \item\label{property:specialise-order}
    If $s, s' \in S_\lp{v}$, then $s \leq_\lp{v} s'$.
\end{enumerate}
We prove these claims in tandem by induction on the construction of $\leq_{\lp{c}'}$.
In the base, suppose for the first claim that $s = s_*$ and $s' \in S_\lp{v}$; we then know that $s \leq_{\lp{c}} s'$ (the case where $s \leq_\lp{v} s'$ can be excluded, for $s_* \not\in S_\lp{v}$), and hence we can choose $\hat{s} = s'$ to satisfy the claim; the second claim goes through similarly.
For the last claim, if $s \leq_{\lp{v}} s'$ then we are done immediately; otherwise, if $s \leq_\lp{c} s'$, then since $s, s' \neq s_*$ we have that $s \leq_\lp{u} s'$, and hence $s \leq_\lp{v} s'$.

In the inductive step, we have that $s \leq_{\lp{c}'} s'$ because there exists an $s'' \in S_\lp{c}$ with $s \leq_{\lp{c}'} s'' \leq_{\lp{c}'} s'$.
We consider each claim separately.
\begin{enumerate}[(i)]
    \item
    If $s = s_*$ and $s' \in S_\lp{v}$, then there are two cases to consider.
    \begin{itemize}
        \item
        If $s'' = s_*$, then we can apply the induction hypothesis to $s'' \leq_{\lp{c}'} s'$ to find an $\hat{s} \in S_\lp{v}$ such that $s_* \leq_\lp{c} \hat{s} \leq_\lp{v} s'$.

        \item
        If $s'' \neq s_*$, then $s'' \in S_\lp{v}$, and we can apply the induction hypothesis to $s \leq_{\lp{c}'} s''$ to find an $\hat{s} \in S_\lp{v}$ such that $s_* \leq_\lp{c} \hat{s} \leq_\lp{v} s''$.
        By applying the induction hypothesis (specifically, the third claim) to $s'' \leq_{\lp{c}'} s'$, we find that $s'' \leq_\lp{v} s'$, and thus we can conclude that $s_* \leq_\lp{c} \hat{s} \leq_\lp{v} s'$.
    \end{itemize}

    \item
    If $s \in S_\lp{v}$ and $s = s_*$, then the proof proceeds as in the previous case.

    \item
    If $s, s' \in S_\lp{v}$, then there are again two cases to consider.
    \begin{itemize}
        \item
        If $s'' = s_*$, then we can apply the induction hypothesis (specifically, the second claim) to $s \leq_{\lp{c}'} s''$ to find an $\hat{s} \in S_\lp{v}$ such that $s \leq_\lp{v} \hat{s} \leq_\lp{c} s''$.
        Similarly, we can apply the induction hypothesis (in this case, the first claim) to $s'' \leq_{\lp{c}'} s$ to find an $\hat{s}' \in S_\lp{v}$ such that $s'' \leq_\lp{c} \hat{s}' \leq_\lp{v} s'$.
        We then know that $\hat{s} \leq_\lp{c} \hat{s}'$, and since $\hat{s}, \hat{s}' \in S_\lp{v} = S_\lp{u}$, we know that $\hat{s} \leq_\lp{u} \hat{s}'$, and hence $\hat{s} \leq_\lp{v} \hat{s}'$.
        In total, we find that $s \leq_\lp{v} \hat{s} \leq_\lp{v} \hat{s}' \leq_\lp{v} s'$.

        \item
        If $s'' \neq s_*$, then $s'' \in S_\lp{v}$, and we can apply the induction hypothesis (specifically, the third claim) to both $s \leq_{\lp{c}'} s''$ and $s'' \leq_{\lp{c}'} s'$ to find that $s \leq_\lp{v} s'' \leq_\lp{v} s'$.
    \end{itemize}
\end{enumerate}

\noindent
We now claim that $\leq_{\lp{c}'}$ is antisymmetric.
To see this, suppose that $s, s' \in S_\lp{c}$ with $s \leq_{\lp{c}'} s' \leq_{\lp{c}'} s$.
Now, if $s, s' \in S_\lp{v}$, then $s \leq_\lp{v} s' \leq_\lp{v} s$ by property~\eqref{property:specialise-order}, and hence $s = s'$ by antisymmetry of $\leq_\lp{v}$.
Otherwise, if $s = s_*$, then suppose towards a contradiction that $s' \neq s_*$; in that case, $s' \in S_\lp{v}$, and we can find $\hat{s}, \hat{s}' \in S_\lp{v}$ such that $s_* \leq_\lp{c} \hat{s} \leq_\lp{v} s' \leq_\lp{v} \hat{s}' \leq_\lp{c} s_*$ by properties~\eqref{property:interpolate-order-left} and~\eqref{property:interpolate-order-right}.
But then, since $\hat{s}' \leq_\lp{c} \hat{s}$, it follows that $\hat{s}' \leq_\lp{v} \hat{s}$.
Moreover, $\hat{s} \leq_\lp{v} s' \leq_\lp{v} \hat{s}' \leq \hat{s}$, and hence $\hat{s}' = s' = \hat{s}$ by antisymmetry of $\leq_\lp{v}$.
It then follows that $s_* \leq_\lp{c} s' \leq_\lp{c} s_*$, meaning that $s' = s_*$ by antisymmetry of $\leq_\lp{c}$ --- a contradiction.
We conclude that $s' = s_* = s$.

Since $\leq_{\lp{c}'}$ is reflexive and transitive by construction, and antisymmetric by the above, it is a partial order.
We now choose $S_{\lp{c}'} = S_\lp{c}$ and $\lambda_{\lp{c}'} = \lambda_{\lp{c}}$, and let $C' = [\lp{c}']$.
Note that $C'$ has exactly one $s_*$-labelled node, and hence $C' \in \PC$.
Now, if $s, s' \in S_\lp{v}$, then $s \leq_\lp{v} s'$ implies $s \leq_{\lp{c}'} s'$ by definition of $\leq_{\lp{c}'}$; furthermore, if $s \leq_{\lp{c}'} s'$, then $s \leq_\lp{v} s'$ by property~\eqref{property:specialise-order} above.
Since $S_{\lp{c}'} = S_\lp{c} = S_\lp{u} \cup \{ s_* \} = S_\lp{v} \cup \{ s_* \}$, and furthermore
$\lambda_{\lp{c}'}(s_*) = \lambda_\lp{c}(s_*) = *$ and $\lambda_{\lp{c}'}(s) = \lambda_\lp{c}(s) = \lambda_\lp{u}(s) = \lambda_\lp{v}(s)$ for $s \in S_\lp{v}$,
we have that $C'[1] = V$ by \autoref{lemma:context-plug-empty}. Lastly, ${\leq_{\lp{c}}} \subseteq {\leq_{\lp{c}'}}$ by construction, and thus we have that $C' \sqsubseteq C$.
\end{proof}

To prove \autoref{lemma:context-subsumption}, we need the following auxiliary lemma, which is analogous to \autoref{lemma:context-plug-empty} except that it concerns plugging in single letters.

\begin{lemma}%
\label{lemma:context-plug-letter}
Let $C \in \PC$ and $U \in \Pom$ and $a \in \Sigma$.
Now $C[\ltr{a}] = U$ if and only if $C = [\lp{c}]$ and $U = [\lp{u}]$ such that the following hold:
\begin{enumerate}[(i)]
    \item
    $S_\lp{c} = S_\lp{u}$ as well as ${\leq_\lp{c}} = {\leq_\lp{u}}$, and

    \item
    $\lambda_\lp{c}(s_*) = *$ and $\lambda_\lp{u}(s_*) = \ltr{a}$, and

    \item
    $\lambda_\lp{c}(s) = \lambda_\lp{u}(s)$ for all $s \in S_\lp{c} - \{ s_* \}$.
\end{enumerate}
\end{lemma}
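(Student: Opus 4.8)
The plan is to follow the template of the proof of \autoref{lemma:context-plug-empty}, but exploiting a simplification: plugging a single node into the gap disturbs neither the carrier nor the order, since the inserted node simply takes over the role previously played by $s_*$. Fix representatives once and for all: let $C = [\lp{c}]$ with its unique $*$-labelled node $s_*$, and let $\lp{a}$ be the one-point labelled poset with carrier $\{ s_a \}$, trivial order, and $\lambda_\lp{a}(s_a) = \ltr{a}$, chosen with $s_a \notin S_\lp{c}$, so that $C[\ltr{a}] = [\lp{c}[\lp{a}]]$. I would reduce both directions of the biconditional to a single claim. Let $\lp{u}$ denote the labelled poset obtained from $\lp{c}$ by retaining the carrier and the order and relabelling $s_*$ from $*$ to $\ltr{a}$; then $\lp{u}$ satisfies conditions (i)--(iii) relative to $\lp{c}$ by construction, and the whole lemma follows once I establish the isomorphism $\lp{c}[\lp{a}] \cong \lp{u}$. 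For the direction from left to right, this identifies $U = [\lp{c}[\lp{a}]]$ with $[\lp{u}]$, exhibiting a representative of $U$ with the stated properties; for the converse, given representatives satisfying (i)--(iii), the same isomorphism shows $C[\ltr{a}] = [\lp{c}[\lp{a}]] = [\lp{u}] = U$.

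For the isomorphism I would take $h : S_{\lp{c}[\lp{a}]} \to S_\lp{u}$ with $h(s_a) = s_*$ and $h(s) = s$ for $s \in S_\lp{c} - \{ s_* \}$; since $S_{\lp{c}[\lp{a}]} = \{ s_a \} \cup (S_\lp{c} - \{ s_* \})$ and $S_\lp{u} = \{ s_* \} \cup (S_\lp{c} - \{ s_* \})$, this is a bijection. Label preservation is immediate from the definition of $\lp{u}$: we have $\lambda_\lp{u}(h(s_a)) = \lambda_\lp{u}(s_*) = \ltr{a} = \lambda_{\lp{c}[\lp{a}]}(s_a)$, and off $s_a$ the labels are inherited verbatim from $\lp{c}$.

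The only substantive work is to show that $h$ preserves and reflects the order, i.e.\ that $s \leq_{\lp{c}[\lp{a}]} s'$ if and only if $h(s) \leq_\lp{c} h(s')$ (recall $\leq_\lp{u} = \leq_\lp{c}$). I would package this by introducing the relation $R$ on $S_{\lp{c}[\lp{a}]}$ defined by $s \mathbin{R} s'$ iff $h(s) \leq_\lp{c} h(s')$. Being the $h$-pullback of a partial order along a bijection, $R$ is itself a partial order, and it suffices to prove $R = {\leq_{\lp{c}[\lp{a}]}}$. The inclusion ${\leq_{\lp{c}[\lp{a}]}} \subseteq R$ follows from minimality of $\leq_{\lp{c}[\lp{a}]}$ once I verify that $R$ satisfies the four clauses defining context substitution: the first clause yields only $s_a \mathbin{R} s_a$ (reflexivity of $\leq_\lp{c}$ at $s_*$); the second is immediate since $h$ is the identity off $s_a$; and the third and fourth become, after applying $h$, the tautologies $s_* \leq_\lp{c} s \Rightarrow s_* \leq_\lp{c} s$ and $s \leq_\lp{c} s_* \Rightarrow s \leq_\lp{c} s_*$. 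For the reverse inclusion $R \subseteq {\leq_{\lp{c}[\lp{a}]}}$, I would take $s \mathbin{R} s'$ and split into four cases according to whether $s$ and $s'$ equal $s_a$, reading off the matching generating clause in each: the second clause when neither is $s_a$, the third when $s = s_a$, the fourth when $s' = s_a$, and the first when both coincide with $s_a$.

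I do not expect a genuine obstacle here; the statement is essentially bookkeeping. The one point worth flagging is the contrast with \autoref{lemma:context-plug-empty}: there the gap node is \emph{deleted}, so the carrier shrinks and the orderings above and below $s_*$ must be spliced together through transitivity of $\leq_\lp{c}$. In the present single-letter case the carrier and order are preserved outright, which is exactly why condition (i) can assert equality of both; consequently the case analysis is shorter and, because $R$ is manifestly a partial order, no separate antisymmetry or transitivity argument is needed.
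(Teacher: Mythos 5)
Your proof is correct and takes essentially the same route as the paper's: the identical bijection $h$ swapping $s_a$ and $s_*$, the same label check, and the same case analysis over the four generating clauses of $\leq_{\lp{c}[\lp{a}]}$. Your packaging is in fact slightly tighter --- reducing both directions of the biconditional to the single claim $\lp{c}[\lp{a}] \cong \lp{u}$, and using the pullback order $R$ together with minimality to obtain preservation and reflection of order simultaneously --- whereas the paper argues the two directions of the ``if and only if'' separately and dismisses the reflection step with ``a similar argument''.
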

\begin{proof}
Let $C = [\lp{c}]$ with $s_* \in S_\lp{c}$ the unique node of $\lp{c}$ such that $\lambda_\lp{c}(s_*) = *$.
Also, let $\ltr{a} = [\lp{a}]$, where we assume without loss of generality that $S_\lp{a} = \{ s_\ltr{a} \}$ (where $s_\ltr{a}$ does not occur in $S_\lp{c}$), and we know that $\lambda_{\lp{a}} = s_\ltr{a} \mapsto \ltr{a}$.

For the direction from left to right, we choose $S_\lp{u} = S_\lp{c}$ and ${\leq_\lp{u}} = {\leq_\lp{c}}$, and we set $\lambda_\lp{u}(s_*) = \ltr{a}$, while $\lambda_\lp{u}(s) = \lambda_\lp{c}(s)$ for all $s \in S_\lp{c} - \{ s_* \}$.
It should be clear that this choice of $\lp{u}$ and $\lp{c}$ satisfies the three conditions above; it remains to prove that $[\lp{u}] = U$, for which it suffices to show that that $\lp{u}$ is isomorphic to $\lp{c}[\lp{a}]$, since $C[\ltr{a}] = U$.
To see this, first note that
\[
    S_{\lp{c}[\lp{a}]}
        = S_\lp{c} \cup S_\lp{a} - \{ s_* \}
        = S_\lp{c} \cup \{ s_\ltr{a} \} - \{ s_* \}
\]
We choose $h: S_{\lp{c}[\lp{a}]} \to S_\lp{u}$ by setting $h(s_\ltr{a}) = s_*$ and $h(s) = s$ when $s \neq s_\ltr{a}$; clearly, $h$ is a bijection between $S_{\lp{c}[\lp{a}]}$ and $S_\lp{u}$.
To see that $h$ preserves labels, first note
\[
    \lambda_\lp{u}(h(s_\ltr{a}))
        = \lambda_\lp{u}(s_*)
        = \ltr{a}
        = \lambda_\lp{a}(s_\ltr{a})
        = \lambda_{\lp{c}[\lp{a}]}(s_\ltr{a})
\]
Second, when $s \neq s_\ltr{a}$ we have that $s \in S_\lp{c}$, and hence
\[
    \lambda_\lp{u}(h(s))
        = \lambda_\lp{u}(s)
        = \lambda_\lp{c}(s)
        = \lambda_{\lp{c}[\lp{a}]}(s)
\]

To see that $h$ preserves order, suppose that $s, s' \in S_{\lp{c}[\lp{a}]}$ such that $s \leq_{\lp{c}[\lp{a}]} s'$; there are four cases to consider.
\begin{itemize}
    \item
    If $s \leq_\lp{c} s'$, then $s, s' \neq s_\ltr{a}$, and hence $h(s) = s \leq_\lp{u} s' = h(s')$ by definition.

    \item
    If $s \leq_\lp{a} s'$, then $s = s' = s_\ltr{a}$ since $S_\lp{a}$ is a singleton; hence $h(s_\ltr{a}) \leq_\lp{u} h(s_\ltr{a})$.

    \item
    If $s \leq_\lp{c} s_*$ and $s' \in S_\lp{a}$, then $s \in S_\lp{c}$ (and thus $s \neq s_\ltr{a}$) and $s' = s_\ltr{a}$; hence $h(s) = s \leq_\lp{c} s_* = h(s')$, meaning $h(s) \leq_\lp{u} h(s')$.

    \item
    If $s_* \leq_\lp{c} s'$ and $s \in S_\lp{a}$, then $s' \in S_\lp{c}$ (and thus $s' \neq s_\ltr{a}$) and $s = s_\ltr{a}$; hence $h(s) = s_* \leq_\lp{c} s' = h(s')$, meaning $h(s) \leq_\lp{u} h(s')$.
\end{itemize}
A similar argument shows that $h$ reflects ordering; hence, $h$ is a pomset isomorphism, and thus $U = [\lp{c}[\lp{a}]] = [\lp{u}]$.

For the converse direction, suppose that $U = [\lp{u}]$ such that the three conditions above are satisfied.
It remains to show that $C[\ltr{a}] = U$ --- in other words, that $\lp{c}[\lp{a}]$ is isomorphic to $\lp{u}$.
As isomorphism, we choose the identity function, which is already a bijection by the first property; it also preserves and reflects ordering (by the first property), and preserves labels (by the second and last properties).
Hence, $\lp{c}[\lp{a}]$ is isomorphic to $\lp{u}$, and hence $C[\ltr{a}] = U$.
\end{proof}

With this lemma in hand, we are now ready to prove \autoref{lemma:context-subsumption}.

\contextsubsumption*
\begin{proof}
Let $U = C[\ltr{a}]$.
By \autoref{lemma:context-plug-letter}, we know that $C = [\lp{c}]$ and $U = [\lp{u}]$ such that $S_\lp{u} = S_\lp{c}$ and ${\leq_\lp{u}} = {\leq_\lp{c}}$, with $\lambda_\lp{c}(s_*) = *$ and $\lambda_\lp{u}(s_*) = \ltr{a}$, and that $\lambda_\lp{c}(s) = \lambda_\lp{u}(s)$ for all $s \in S_\lp{c} - \{ s_* \}$.
Without loss of generality, we can assume that $V = [\lp{v}]$ with $S_\lp{v} = S_\lp{u}$ and $\lambda_\lp{v} = \lambda_\lp{u}$ and ${\leq_\lp{u}} \subseteq {\leq_\lp{v}}$.
We now choose $S_{\lp{c}'} = S_\lp{c}$ and ${\leq_{\lp{c}'}} = {\leq_\lp{v}}$ and $\lambda_{\lp{c}'} = \lambda_\lp{c}$ to obtain $C' = [\lp{c}']$.
First, note that $C' \sqsubseteq C$ by construction.
Also, observe that $S_{\lp{c}'} = S_\lp{c} = S_\lp{u} = S_\lp{v}$; furthermore, $\lambda_{\lp{c}'}(s_*) = \lambda_\lp{c}(s_*) = *$, while $\lambda_\lp{v}(s_*) = \lambda_\lp{u}(s_*) = \ltr{a}$, and for all $s \in S_\lp{c} - \{ * \}$ we have $\lambda_{\lp{c}'}(s) = \lambda_\lp{c}(s) = \lambda_\lp{u}(s) = \lambda_\lp{v}(s)$.
By \autoref{lemma:context-plug-letter}, we conclude that $C'[\ltr{a}] = V$.

Finally, if $V \in \SP$, then $C'$ must also be $\mathsf{N}$-free (and hence series-parallel), since any $\mathsf{N}$ in $C'$ must also occur in $V$ (by construction of $C'$), and $V$ is $\mathsf{N}$-free because it is series-parallel.
\end{proof}

\compositionversusclosure*
\begin{proof}
We treat the claims in the order given.
\begin{enumerate}
    \item
    We show both implications:
    \begin{itemize}
        \item
        First, suppose that $L \subseteq {\closure K}$.
        We show by induction that
        \[
                \forall A\subseteq \closure L,\,A\subseteq \closure K.
        \]
        \begin{itemize}
            \item
            In the base, where $A = L$, we have $A = L \subseteq {\closure K}$ by the premise.

            \item
            If $A=C[\semcka{e}]$ with $e \leq f \in H$ and $C[\semcka{f}] \subseteq \closure{L}$, then by the induction hypothesis $C[\semcka f]\subseteq\closure K$, and thus $A = C[\semcka{e}] \subseteq \closure{K}$.
        \end{itemize}

        \item
        The other implication is trivial: if ${\closure L} \subseteq {\closure K}$, then since $L\subseteq\closure L$ we get $L\subseteq\closure K$ by transitivity.
    \end{itemize}

    \item
    By definition of closure we have $K \subseteq {\closure K}$, and thus $L \subseteq K \subseteq {\closure K}$.
    By~\eqref{property:hypothesis-closure} we then immediately obtain the desired result.

    \item
    For the inclusion from left to right, first note that $L \cup K \subseteq {\closure L} \cup {\closure K}$ by definition of closure, and hence $\closurep {L \cup K} \subseteq \closurep {{\closure L} \cup {\closure K}}$ by~\eqref{property:hypothesis-monotone}.

    \smallskip
    For the other inclusion, note that ${\closure L}, {\closure K} \subseteq \closurep {L \cup K}$ by~\eqref{property:hypothesis-monotone}, and hence ${\closure L} \cup {\closure K} \subseteq \closurep {L \cup K}$.
    We then conclude by~\eqref{property:hypothesis-closure} that
    \[
            \closurep {{\closure L} \cup {\closure K}} \subseteq \closurep {L \cup K}
    \]

    \item
    For the inclusion from left to right, note that $L \subseteq {\closure L}$ and $K \subseteq {\closure K}$ by~\eqref{property:hypothesis-closure}.
    We then know that $L \cdot K \subseteq {\closure L} \cdot {\closure K}$; the inclusion then follows by~\eqref{property:hypothesis-monotone}.

    \smallskip
    For the other inclusion, it suffices to show that ${\closure L} \cdot {\closure K} \subseteq \closurep {L \cdot K}$ by~\eqref{property:hypothesis-closure}.
    We show by induction on $A\subseteq \closure L$ the following statement:
    \[
        \forall A \subseteq \closure L, A \cdot \closure K \subseteq \closurep{L\cdot K}.
    \]
    \begin{itemize}
        \item
        If $A = L$, then we do an induction on $B\subseteq \closure K$ to prove:
        \[
            \forall B \subseteq \closure K, L \cdot B \subseteq \closurep{L\cdot K}.
        \]
        \begin{itemize}
            \item
            In the base, where $B=K$, the statement holds by~\eqref{property:hypothesis-closure}.

            \item
            If $B = C[\semcka{e}]$ such that $e \leq f \in H$ and $C[\semcka{f}] \subseteq \closure{K}$, then by the induction hypothesis we have $L \cdot C[\semcka{f}] \subseteq \closurep{L \cdot K}$.
            Let $U \in L\cdot C[\semcka e]$; there are $V \in L$ and $W\in C[\semcka{e}]$ such that $U = V \cdot W$.
            If we pick the context $C' = V \cdot C$, then we have $U \in C'[\semcka{e}]$.
            Since $V \in L$, we also have $C'[\semcka{f}] \subseteq L \cdot C[\semcka{f}] \subseteq \closurep{L \cdot K}$.
            Therefore $U \in C'[\semcka e] \subseteq \closurep{L \cdot K}$, and hence $L \cdot B = L \cdot C[\semcka{e}] \subseteq \closurep{L \cdot K}$.
        \end{itemize}

        \item
        If $A = C[\semcka{e}]$ s.t.\ $e \leq f\in H$ and $C[\semcka{f}] \subseteq \closure{L}$, then by the induction hypothesis we have $C[\semcka{f}] \cdot \closure{K} \subseteq \closurep{L\cdot K}$.
        Let $U \in C[\semcka{e}] \cdot \closure{K}$; there are $V \in C[\semcka{e}]$ and $W \in \closure{K}$ such that $U = V \cdot W$.
        If we pick the context $C' = C \cdot W$, then we have $U \in C'[\semcka{e}]$.
        Since $W\in K$, we also have $C'[\semcka{f}] \subseteq C[\semcka{f}] \cdot \closure{K} \subseteq \closurep{L \cdot K}$.
        Therefore $U \in C'[\semcka{e}] \subseteq \closurep{L \cdot K}$, and hence $A \cdot \closure{K} = C[\semcka{e}] \cdot \closure{K} \subseteq \closurep{L \cdot K}$.
    \end{itemize}

    \item
    This equation can be proved using an argument similar to the above.

    \item
    The inclusion from left to right follows by an argument similar that in~\eqref{property:hypothesis-concat}.

    \smallskip
    For the other inclusion, it suffices to prove that ${({\closure L})}^* \subseteq \closurep {L^*}$.
    To this end, we first argue that for all $n \in \naturals$, it holds that ${({\closure L})}^n \subseteq \closurep {L^*}$, by induction on $n$.
    In the base, where $n = 0$, we have that ${({\closure L})}^0 = \{ 1 \} \subseteq \closurep {\{ 1 \}} \subseteq \closurep {L^*}$ by~\eqref{property:hypothesis-monotone}.
    For the inductive step, suppose the claim holds for $n$.
    We then calculate, using~\eqref{property:hypothesis-closure} and~\eqref{property:hypothesis-concat}, that
    \[
            {({\closure L})}^{n+1}
                    = {({\closure L})}^n \cdot {\closure L}
                    \subseteq \closurep {L^*} \cdot {\closure L}
                    \subseteq \closurep {L^* \cdot L}
                    \subseteq \closurep {L^*}
    \]

    Putting this together, we have that
    \[
            {({\closure L})}^*
                    = \bigcup_{n \in \naturals} {({{\closure L}})}^n
                    \subseteq \bigcup_{n \in \naturals} \closurep {L^*}
                    \subseteq \closurep {L^*}
    \]

    \item
    Since $\closure L\subseteq\closure K$, we have $C[\closure L]\subseteq C[\closure K]$,
    therefore by~\eqref{property:hypothesis-monotone} we obtain $\closure{C[\closure L]}\subseteq \closure{C[\closure K]}$. From~\eqref{property:hypothesis-concat} and~\eqref{property:hypothesis-parallel} we get that
    \[
        \closure{C[\closure L]}=\closure{C[L]},
    \]
    so we may conclude:
    \[\closure{C[L]}=\closure{C[\closure L]}\subseteq \closure{C[\closure K]}
    =\closure{C[K]}.\]

    \item
    We proceed by induction on the construction of $\closure{L}$, showing that
    \[
        \forall A \subseteq \closure{L}, A \subseteq \SP
    \]
    \begin{itemize}
        \item
        In the base, we have that $A = L$, whence $A \subseteq \SP$ by the premise.

        \item
        For the inductive step, we have $A = C[\semcka{e}]$ with $e \leq f \in H$, and $C[\semcka{f}] \subseteq \closure{L}$.
        Since $C \in \PCsp$ and $\semcka{e} \subseteq \SP$, $C[\semcka{e}] \subseteq \SP$ by \autoref{lemma:context-preserve-series-parallel}.
        \qedhere
    \end{itemize}
\end{enumerate}
\end{proof}

\soundness*
\begin{proof}
The proof proceeds by induction on the construction of $\equivcka^H$.

In the base, there are two cases to consider.
On the one hand, if $e \equivcka^H f$ because $e \equivcka f$, then $\semcka{e} = \semcka{f}$ by \autoref{theorem:bka-completeness-decidability}; hence, $\closure {\semcka{e}} = \closure {\semcka{f}}$.
On the other hand, if $e \leqqcka^H f$ because $e \leq f \in H$, then it suffices to prove that $\semcka{e} \subseteq \closure {\semcka{f}}$, by \autoref{lemma:composition-vs-closure}\eqref{property:hypothesis-closure}.
If we choose $C = *$, then we find $C[\semcka{f}]=\semcka{f}\subseteq\closure{\semcka{f}}$.
By definition of closure, we then have $C[\semcka{e}]=\semcka{e}\subseteq\closure{\semcka{f}}$.

For the inductive step, suppose that $e \equivcka^H f$ because $e = e_0 + e_1$ and $f = f_0 + f_1$ such that $e_i \equivcka^H f_i$ for $i \in 2$.
By induction, we then know that $\closure {\semcka{e_i}} = \closure {\semcka{f_i}}$ for $i \in 2$.
Using \autoref{lemma:composition-vs-closure}\eqref{property:hypothesis-union}, we then derive as follows:
\begin{align*}
\closure {\semcka{e}}
    &= \closurep {\semcka{e_0} \cup \semcka{e_1}} \\
    &= \closurep {\closure {\semcka{e_0}} \cup \closure {\semcka{e_1}}} \\
    &= \closurep {\closure {\semcka{f_0}} \cup \closure {\semcka{f_1}}} \\
    &= \closurep {\semcka{f_0} \cup \semcka{f_1}} \\
    &= \closure {\semcka{f}}
\end{align*}
The other inductive steps that arise congruence can be argued similarly.

It remains to validate the fixpoint axioms.
Here, we have that $e \leqqcka^H f$ because $e = g^* \cdot h + f$ with $g \cdot f + h \leqqcka^H f$.
By \autoref{lemma:composition-vs-closure}\eqref{property:hypothesis-closure}, it suffices to prove $\semcka{g^* \cdot h} \subseteq \closure {\semcka{f}}$.
By induction, we know that $\closure {\semcka{h}} \subseteq \closure {\semcka{f}}$ and hence $\semcka{h} \subseteq \closure {\semcka{f}}$ by \autoref{lemma:composition-vs-closure}\eqref{property:hypothesis-closure}.
By \autoref{lemma:composition-vs-closure}\eqref{property:hypothesis-concat} and induction, we also find
\[
    \semcka{g} \cdot \closure {\semcka{f}}
        \subseteq \closure {\semcka{g}} \cdot \closure {\semcka{f}}
        \subseteq \closure {\semcka{g \cdot f}}
        \subseteq \closure {\semcka{f}}
\]
Putting this together, we conclude that $\semcka{g^* \cdot h} = \semcka{g}^* \cdot \semcka{h} \subseteq \closure {\semcka{f}}$.
\end{proof}

\implicationlemma*
\begin{proof}
We treat the claims in the order given.
\begin{enumerate}[(i)]
    \item
    By induction on $e \equivcka^{H'} f$.
    In the base, we have two cases.
    \begin{itemize}
        \item
        If $e \equivcka^{H'} f$ because $e \equivcka f$, then $e \equivcka^{H} f$ immediately.

        \item
        If $e \leqqcka^{H'} f$ because $e \leq f \in H'$, then $e \leqqcka^{H} f$ by the premise.
    \end{itemize}
    For the inductive step, there are again two cases to consider.
    \begin{itemize}
        \item
        If $e \equivcka^{H'} f$ because of a congruence rule, then the proof is straightforward.
        For instance, if $e = e_0 + e_1$ and $f = f_0 + f_1$ such that $e_i \equivcka^{H'} f_i$ for $i \in 2$, then by induction we know that $e_i \equivcka^{H} f_i$ for $i \in 2$.
        We can then conclude that $e \equivcka^{H} f$.

        \item
        If $e \leqqcka^{H'} f$ because of a fixpoint rule, then we proceed as follows.
        First, if $e = g \cdot h^*$ and $g + h \cdot f \leqqcka^{H'} f$, then by induction $g + h \cdot f \leqqcka^{H} f$.
        We then conclude that $e = g \cdot h^* \leqqcka^{H} f$ as well.
        The proof for the other fixpoint rule is similar.
    \end{itemize}

    \item
    We show that if $A \subseteq \closure[H']{L}$, then $A \subseteq \closure{L}$, by induction on $A \subseteq \closure[H']{L}$.
    \begin{itemize}
        \item
        In the base, where $A = L$, the claim follows by definition of closure.

        \item
        Otherwise, if there exist $C \in \PCsp$ and $e \leq f \in H'$ such that $A = C[\semcka{e}]$ and $C[\semcka{f}] \subseteq \closure[H']{L}$, then $C[\semcka{f}] \subseteq \closure{L}$ by induction.
        By the premise that $H$ implies $H'$ we furthermore know that $e \leqqcka^H f$, and hence $\closure{\semcka{e}} \subseteq \closure{\semcka{f}}$ by soundness.
        By \autoref{lemma:composition-vs-closure}\eqref{property:hypothesis-context}, we derive
        \[
            \closure{C[\semcka{e}]} \subseteq \closure{C[\semcka{f}]} \subseteq \closure{L}
        \]
        and hence $C[\semcka{e}] \subseteq \closure{L}$ by \autoref{lemma:composition-vs-closure}\eqref{property:hypothesis-closure}.
    \end{itemize}

    \item
    The inclusion from left to right follows by~\eqref{property:implication-vs-closure} and \autoref{lemma:composition-vs-closure}\eqref{property:hypothesis-closure}.
    For the other inclusion, it suffices to show that $L \subseteq \closurepsmall{\closure[H']{L}}$ by \autoref{lemma:composition-vs-closure}\eqref{property:hypothesis-closure}; this is true, since $L \subseteq \closure[H']{L} \subseteq \closurepsmall{\closure[H']{L}}$ by definition of closure.
    \qedhere
\end{enumerate}
\end{proof}

\carryingover*
\begin{proof}
Let $r$ be the reduction from $H$ to $H'$, and let $e, f \in \termscka$.

For completeness, suppose that $\closure[H]{\semcka{e}} = \closure[H]{\semcka{f}}$.
We then know that
\[
    \closure[H']{\semcka{r(e)}} = \closure[H']{\semcka{r(f)}}
\]
Hence, by completeness of $H'$, we have that $r(e) \equivcka^{H'} r(f)$.
Because $H$ implies $H'$, it follows that $r(e) \equivcka^H r(f)$.
We can then conclude by deriving
\[
    e \equivcka^H r(e) \equivcka^H r(f) \equivcka^H f
\]

For decidability, first note that we have that if $\closure{\semcka{e}} = \closure{\semcka{f}}$, then also $\closure[H']{\semcka{r(e)}} = \closure[H']{\semcka{r(f)}}$.
Conversely, if $\closure[H']{\semcka{r(e)}} = \closure[H']{\semcka{r(f)}}$, then
\begin{align*}
\closure{\semcka{e}}
    &= \closure{\semcka{r(e)}} \tag{Soundness} \\
    &= \closurepsmall{\closure[H']{\semcka{r(e)}}} \tag{\autoref{lemma:implication}\eqref{property:implication-vs-double-closure}} \\
    &= \closurepsmall{\closure[H']{\semcka{r(f)}}} \tag{Premise} \\
    &= \closure{\semcka{r(f)}} \tag{\autoref{lemma:implication}\eqref{property:implication-vs-double-closure}} \\
    &= \closure{\semcka{f}} \tag{Soundness}
\end{align*}
Hence, we can decide $\closure{\semcka{e}} = \closure{\semcka{f}}$ by checking if $\closure[H']{\semcka{r(e)}} = \closure[H']{\semcka{r(f)}}$.
\end{proof}

\chains*
\begin{proof}
Let $r$ be the reduction from $H$ to $H'$, and let $r'$ be the reduction from $H'$ to $H''$.
We claim that $r' \circ r$ is a reduction from $H$ to $H''$.

To see that $H$ implies $H''$, suppose that $e \leq f \in H''$.
Since $H'$ implies $H''$, we obtain $e \leqqcka^{H'} f$.
Since $H$ implies $H'$, we find $e \leqqcka^H f$ by \autoref{lemma:implication}\eqref{property:implication-vs-equivalence}.

To see that $e \equivcka^H r'(r(e))$, first note $e \equivcka^H r(e)$.
Also, $r(e) \equivcka^{H'} r'(r(e))$, and since $H$ implies $H'$, we have $r(e) \equivcka^H r'(r(e))$.
The claim then follows.

Lastly, suppose that $e, f \in \termscka$ such that $\closure{\semcka{e}} = \closure{\semcka{f}}$.
We then know that $\closure[H']{\semcka{r(e)}} = \closure[H']{\semcka{r(f)}}$, thus $\closure[H'']{\semcka{r'(r(e))}} = \closure[H'']{\semcka{r'(r(e))}}$.
\end{proof}

\factorisation*
\begin{proof}
Let $r_i$ be the strong reduction from $H_i$ to $H'$; choose $r = r_1 \circ r_0$.
We claim that $r$ is a strong reduction from $H$ to $H'$.
\begin{enumerate}[(i)]
    \item
    Since $H$ implies $H_1$, which in turn implies $H'$, we know that $H$ implies $H'$ by an argument similar to the one in \autoref{lemma:chains}.

    \item
    Let $e \in \termscka$.
    We then know that $e \equivcka^{H_0} r_0(e) \equivcka^{H_1} r_1(r_0(e)) = r(e)$.
    Since $H$ implies $H_0$ and $H_1$, we can conclude that $e \equivcka^H r(e)$ by \autoref{lemma:implication}\eqref{property:implication-vs-equivalence}.

    \item
    Let $e \in \termscka$.
    We then derive that
    \begin{align*}
    \closure[H]{\semcka{e}}
        &= \closure[H_1]{(\closure[H_0]{\semcka{e}})} \tag{Factorisation} \\
        &= \closure[H_1]{(\closure[H']{\semcka{r_0(e)}})} \tag{Reduction} \\
        &= \closure[H_1]{\semcka{r_0(e)}} \tag{\autoref{lemma:implication}\eqref{property:implication-vs-double-closure}} \\
        &= \closure[H']{\semcka{r_1(r_0(e))}} \tag{Reduction} \\
        &= \closure[H']{\semcka{r(e)}} \tag*{\qedhere}
    \end{align*}
\end{enumerate}
\end{proof}

\contextsem*
\begin{proof}
We prove the claims in the order given.
\begin{enumerate}[(i)]
    \item
    The proof proceeds by induction on the construction of $C$.
    In the base, $C = *$, in which case $r(*) = \{ * \} \subseteq \PCsp$.

    \smallskip
    For the inductive step, there are three cases to consider.
    If $C = C' \cdot V$, then $r(C) = r(C') \cdot r(V)$.
    By induction, we know that $r(C') \subseteq \PCsp$; since $r(V) \subseteq \SP$, the claim then follows by \autoref{lemma:series-parallel-contexts-inductive}.
    The other cases can be treated similarly.

    \item
    The proof proceeds by induction on the construction of $C$.
    In the base, $C = *$, in which case $r(C) = \{ * \}$, and hence $r(C[L]) = r(L) = \bigcup_{D \in r(C)} D[r(L)]$.

    \smallskip
    For the inductive step, there are three cases to consider.
    If $C = C' \cdot V$, then $r(C) = r(C') \cdot r(V)$.
    We make the following observations:
    \begin{itemize}
        \item
        If $D' \in r(C')$ and $U \in D'[r(L)] \cdot r(V)$, then there exists a $D \in r(C)$ such that $U \in D[r(L)]$.
        To see this, note that $U = V \cdot W$ for $V \in D'[r(L)]$ and $W \in r(V)$.
        If we then choose $D = D' \cdot W \in r(C') \cdot r(V) = r(C)$, we find that $U \in D'[r(L)] \cdot W = D[r(L)]$.

        \item
        If $D \in r(C)$, then there exists a $D' \in r(C')$ such that $D[r(L)] \subseteq D'[r(L)] \cdot r(V)$.
        To see this, note that $D = D' \cdot W$ for $D' \in r(C')$ and $W \in r(V)$, and that $D[r(L)] = D'[r(L)] \cdot W \subseteq D'[r(L)] \cdot r(V)$.
    \end{itemize}

    Hence, we derive that
    \begin{align*}
    r(C[L])
        &= r(C'[L] \cdot V) \tag{\autoref{lemma:substitution-vs-composition}\eqref{property:substitution-vs-concat-right}} \\
        &= r(C'[L]) \cdot r(V) \tag{Def. $r$ on languages} \\
        &= \Bigl( \bigcup\nolimits_{D' \in r(C')} D'[r(L)] \Bigr) \cdot r(V) \tag{Induction} \\
        &= \bigcup_{D' \in r(C')} D'[r(L)] \cdot r(V) \tag{Distributivity} \\
        &= \bigcup_{D \in r(C)} D[r(L)] \tag{Observations above}
    \end{align*}
    The other cases can be derived similarly.

    \item
    The proof proceeds by induction on the construction of $e$.
    In the base, there are two cases to consider.
    First, if $e = 0$ or $e = 1$, then $r(\semcka{e}) = \semcka{e} = \semcka{r(e)}$.
    Otherwise, if $e = \ltr{a}$ for some $\ltr{a} \in \Sigma$, then $r(\semcka{e}) = r(\{ \ltr{a} \}) = r(\ltr{a}) = \semcka{r(\ltr{a})}$.

    For the inductive step, the proof is straightforward.
    For instance, if $e = e_0 + e_1$, then we can derive that
    \begin{align*}
    r(\semcka{e_0 + e_1})
        &= r(\semcka{e_0} \cup \semcka{e_1})
            \tag{Def. $\semcka{-}$} \\
        &= r(\semcka{e_0}) \cup r(\semcka{e_1})
            \tag{Def. $r$ on languages} \\
        &= \semcka{r(e_0)} \cup \semcka{r(e_1)}
            \tag{Induction} \\
        &= \semcka{r(e_0) + r(e_1)}
            \tag{Def. $\semcka{-}$} \\
        &= \semcka{r(e_0 + e_1)}
            \tag{Def. $r$ on expressions}
    \end{align*}
    The other cases can be shown similarly.
    \qedhere
\end{enumerate}
\end{proof}

\propreif*
\begin{proof}
As usual for such statements, we proceed by induction on the construction of $\closure L$ the following statement:
\[
    \forall A \subseteq\closure L,\ r(A)\subseteq\closure[H']{r(L)}.
\]
\begin{itemize}
    \item
    In the base, where $A=L$, we have $r(L)\subseteq\closure[H']{r(L)}$ by definition of closure.

    \item
    For the inductive case, assume $A=C[\semcka e]$, for some $C$ and $e \leq f\in H'$ such that $C[\semcka{f}] \subseteq \closure{L}$.
    The induction hypothesis is that $r\left(C[\semcka{f}]\right) \subseteq \closure[H']{r(L)}$.
    Since $e\leq f\in H$, by definition of a reification we have $r(e) \leqqcka^{H'} r(f)$, so by soundness $\closure[H']{\semcka{r(e)}} \subseteq \closure[H']{\semcka{r(f)}}$.
    By \autoref{lemma:composition-vs-closure}\eqref{property:hypothesis-context}, for any context $D \in \PCsp$ this entails $\closure[H']{D\left[\semcka{r(e)}\right]} \subseteq \closure[H']{D\left[\semcka{r(f)}\right]}$.
    We may conclude:
    \begin{align*}
        r(C[\semcka{e}])
        &= \bigcup_{D\in r(C)} D[r(\semcka e)]
            \tag{\autoref{lemma:reify-context-sem}\eqref{property:reification-vs-context}} \\
        &\subseteq \bigcup_{D\in r(C)} \closure[H']{D[\semcka{r(e)}]}
            \tag{Def. closure} \\
        &\subseteq \bigcup_{D\in r(C)} \closure[H']{D[\semcka{r(f)}]}
            \tag{Observation above} \\
        &\subseteq \closurep[H']{\bigcup_{D\in r(C)} D[\semcka{r(f)}]}
            \tag{\autoref{lemma:composition-vs-closure}\eqref{property:hypothesis-union}} \\
        &= \closure[H']{r(C[\semcka{f}])}
            \tag{\autoref{lemma:reify-context-sem}\eqref{property:reification-vs-context}} \\
        &\subseteq \closurep[H']{\closure[H']{r(L)}}
            \tag{Induction} \\
        &\subseteq \closure[H']{r(L)}
            \tag*{(\autoref{lemma:composition-vs-closure}\eqref{property:hypothesis-closure}) \qedhere}
    \end{align*}
\end{itemize}
\end{proof}

\exchclosure*
\begin{proof}
We prove both directions separately.
\begin{itemize}
    \item
    For the implication from left to right, it is more convenient to reason about languages instead of individual pomsets.
    We write $L \subsp K$ if for every $U \in L$ there exists $V \in K$ such that $U \subsp V$; note that this makes $\subsp$ a preorder on languages.
    Using this definition we may reformulate the statement as:
    \[
        \forall A \subseteq \closure[\hexch]{L},\ A \subsp L.
    \]
    It should come as no surprise that we perform an induction on $A \subseteq \closure[\hexch]{L}$.
    \begin{itemize}
    \item The base case, where $A = L$, is trivial, since $\subsp$ is reflexive.
    \item For the inductive step, we have:
        \begin{mathpar}
        A = C[\semcka{(e \parallel f) \cdot (g \parallel h)}]
        \and
        C[\semcka{(e \cdot g) \parallel (f \cdot h)}] \subseteq \closure L.
        \end{mathpar}
        Our inductive hypothesis is $C[\semcka {(e \cdot g) \parallel (f \cdot h)}] \subsp L$.
        Now, since
        \[
            \semcka{(e \parallel f) \cdot (g \parallel h)}
                    \subsp \semcka{ (e \cdot g) \parallel (f \cdot h)},
        \]
        by definition of $\subsp$, we get
        \[
            C[\semcka {(e \cdot g) \parallel (f \cdot h)}]
                \subsp C[\semcka {(e \cdot g) \parallel (f \cdot h)}]
                \subsp M.
        \]
        by \autoref{lemma:pomset-context-monotone}.
        Therefore we conclude by transitivity.
    \end{itemize}
\item
    For the other direction, we first prove the following claim: if $C \in \PCsp$ and $U, V \in \SP$ such that $U \subsp V$ and $ C[V]\in \closure[\hexch]L$, then $ C[U]\in \closure[\hexch]L$, by induction on the construction of $\subsp$.
    In the base, there are two cases.
    \begin{itemize}
        \item
        If $U \subsp V$ because $U = V$, we find $C[U] = C[V] \in \closure[\hexch]L$ immediately.

        \item
        If $U \subsp V$ because there exist $W_{00}, W_{01}, W_{10}, W_{11} \in \SP $ such that
        \begin{mathpar}
        U = (W_{00} \parallel W_{01}) \cdot (W_{10} \parallel W_{11})
        \and
        V = (W_{00} \cdot W_{10}) \parallel (W_{01} \cdot W_{11})
        \end{mathpar}
        then we can find for each $i, j \in 2$ a $g_{ij} \in \termscka$ such that $\semcka{g_{ij}} = \{ W_{ij} \}$.
        We choose $e = (g_{00} \parallel g_{01}) \cdot (g_{10} \parallel g_{11})$ and $f = (g_{00} \cdot g_{10}) \parallel (g_{01} \cdot g_{11})$ to find that $e \leq f \in \hexch$, $\semcka{e} = \{ U \}$ and $\semcka{f} = \{ V \}$.
        By definition of $\closure[\hexch]L$, since $C[\semcka{f}] \subseteq \closure[\hexch]{L}$ it follows that $ C[U]\in C[\semcka{e}] \subseteq \closure[\hexch]L$.
    \end{itemize}
    For the inductive step, there are four cases to consider.
    \begin{itemize}
        \item
        If $U \subsp V$ because $U = U' \cdot W$ and $V = V' \cdot W$ with $U' \subsp V'$, then choose $C' = C[* \cdot W] \in \PCsp$.
        Since $U' \subsp V'$ and $C'[V'] = C[V' \cdot W] = C[V] \in L$, we find that $ C[U]= C[U' \cdot W] = C'[U'] \in \closure[\hexch]L$ by induction.

        \item
        If $U \subsp V$ because $U = W \cdot U'$ and $V = W \cdot V'$ with $U' \subsp V'$, the proof proceeds as above.

        \item
        If $U \subsp V$ because $U = W \parallel U'$ and $V = W \parallel V'$ with $U' \subsp V'$, the proof proceeds as above.

        \item
        If $U \subsp V$ because there exists a $W \in \Pom$ and $U \subsp W$ and $W \subsp V$, then by induction we first find that $ C[W]\in \closure[\hexch]L$, and if we apply the induction hypothesis once more can conclude that $ C[U]\in \closure[\hexch]L$.
    \end{itemize}
    Thus, if $V \in L$ with $U \subsp V$, we can choose $C = *$ to find that $ C[V]= V \in \closure[\hexch]L$, and hence $ U= C[U] \in \closure[\hexch]L$.
    \qedhere
\end{itemize}
\end{proof}

\factoriseseq*
\begin{proof}
We may reformulate the claim as
\[
    U \subsp V \in B \subseteq \closure[H]{(\closure[\hexch]{L})} \Rightarrow U \in \closure[H]{(\closure[\hexch]{L})}
\]
We proceed by induction on $B \subseteq \closure{(\closure[\hexch]L)}$.
In the base, $B \subseteq \closure{(\closure[\hexch]L)}$ because $B = \closure[\hexch]{L}$.
By \autoref{lemma:exch-closure-vs-subsumption}, we find that $U \in \closure[\hexch]{L}$, and thus $U \in \closure[H]{(\closure[\hexch]{L})}$.

For the inductive step, we obtain $e \leq f \in H$ and $C \in \PCsp$ such that $B = C[\semcka{e}]$, and $C[\semcka{f}] \subseteq \closure[H]{(\closure[\hexch]{L})}$.
There are two cases to consider.
\begin{itemize}
    \item
    If $e = \ltr{a}$ for some $\ltr{a} \in \Sigma$, then $B = \{ C[\ltr{a}] \}$, and hence $U \sqsubseteq C[\ltr{a}]$.
    By \autoref{lemma:context-subsumption}, we find $C' \in \PCsp$ such that $C' \sqsubseteq C$ and  $C'[\ltr{a}] = U$.
    By induction, and the fact that $C'[\semcka{f}] \subsp C[\semcka{f}]$ by \autoref{lemma:pomset-context-monotone}, it follows that $C'[\semcka{f}] \subseteq \closure{(\closure[\hexch]L)}$.
    Since $e \leq f \in H$ we can conclude that $U \in C'[\semcka{e}] \subseteq \closurep{\closure[\hexch]L}$.

    \item
    If $e = 1$, then $U \sqsubseteq C[1]$ then $B = \{ C[1] \}$, and hence $U \sqsubseteq C[1]$.
    By \autoref{lemma:context-erasure}, we find $C' \in \PCsp$ such that $C' \sqsubseteq C$ and $C'[1] = U$.
    By induction, and the fact that $C'[\semcka{f}] \subsp C[\semcka{f}]$ by \autoref{lemma:pomset-context-monotone}, it follows that $C'[\semcka{f}] \subseteq \closure{(\closure[\hexch]L)}$.
    Since $e \leq f \in H$, we can conclude that $U \in C'[\semcka{e}] \subseteq \closure{(\closure[\hexch]L)}$.
    \qedhere
\end{itemize}
\end{proof}

The following auxiliary lemma will be useful to prove \autoref{lemma:lift-closure}.
It says that substituting a sequential pomset in a sequential pomset yields a sequential pomset; conversely, a substitution that yielded a sequential context after plugging in a non-empty pomset must have come from a sequential pomset and context.

\begin{restatable}{lemma}{contextvssequential}%
\label{lemma:context-vs-sequential}
Let $C \in \PC$ and $U, V \in \Pom$.
The following hold:
\begin{enumerate}[(i)]
    \item\label{property:context-vs-sequential-preserve}
    If $U \in \Sigma^*$ and $C \in \PCseq$, then $C[U] \in \Sigma^*$.

    \item\label{property:context-vs-sequential-extract}
    If $C[U] \in \Sigma^*$ and $U \neq 1$, then $U \in \Sigma^*$ and $C \in \PCseq$.
\end{enumerate}
\end{restatable}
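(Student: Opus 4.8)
The plan is to characterise the order $\leq_{\lp{c}[\lp{u}]}$ on $S_{\lp{c}[\lp{u}]} = S_\lp{u} \cup (S_\lp{c} - \{s_*\})$ in terms of the orders of $U$ and $C$ together with the position of $s_*$, and then read off both statements from this characterisation. Writing $C = [\lp{c}]$ and $U = [\lp{u}]$ with disjoint carriers and $s_*$ the gap of $\lp{c}$, I would first establish, for all relevant points: (a) for $s, s' \in S_\lp{u}$, $s \leq_{\lp{c}[\lp{u}]} s'$ iff $s \leq_\lp{u} s'$; (b) for $s, s' \in S_\lp{c} - \{s_*\}$, $s \leq_{\lp{c}[\lp{u}]} s'$ iff $s \leq_\lp{c} s'$; (c) for $s \in S_\lp{u}$ and $s' \in S_\lp{c} - \{s_*\}$, $s \leq_{\lp{c}[\lp{u}]} s'$ iff $s_* \leq_\lp{c} s'$; and symmetrically (d) for $s \in S_\lp{c} - \{s_*\}$ and $s' \in S_\lp{u}$, $s \leq_{\lp{c}[\lp{u}]} s'$ iff $s \leq_\lp{c} s_*$. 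The $(\Leftarrow)$ directions are precisely the four generating rules defining $\leq_{\lp{c}[\lp{u}]}$, so the content lies in the $(\Rightarrow)$ directions.

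For the $(\Rightarrow)$ directions I would use that $\leq_{\lp{c}[\lp{u}]}$ is the reflexive--transitive closure of those rules, and analyse a witnessing chain $s = t_0, t_1, \dots, t_n = s'$. The key observation is that a chain can cross between $S_\lp{u}$ and $S_\lp{c} - \{s_*\}$ only ``through'' $s_*$: it can leave $S_\lp{c}$ only from a node $t \leq_\lp{c} s_*$ (fourth rule) and re-enter $S_\lp{c}$ only at a node $t' \geq_\lp{c} s_*$ (third rule), while steps that stay inside $S_\lp{c}$ respect $\leq_\lp{c}$ and steps inside $S_\lp{u}$ respect $\leq_\lp{u}$. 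Hence any excursion through $S_\lp{u}$ between two $\lp{c}$-points can be short-circuited by transitivity through $s_*$, giving (b); an excursion through $S_\lp{c}$ between two $U$-points is impossible, as it would force some $a \in S_\lp{c} - \{s_*\}$ with $s_* \leq_\lp{c} a \leq_\lp{c} s_*$, i.e.\ $a = s_*$, a contradiction, giving (a); and the mixed chains collapse to the stated comparisons with $s_*$, giving (c) and (d).

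With the characterisation in hand, part (i) is immediate: if $U$ and $C$ are totally ordered, then any two distinct points of $C[U]$ are comparable --- two $U$-points by (a) and the total order of $U$, two $\lp{c}$-points by (b) and the total order of $C$, and a mixed pair by (c)/(d) using that every $\lp{c}$-node is comparable with $s_*$ --- so $C[U]$ is totally ordered, i.e.\ lies in $\Sigma^*$. For part (ii), assume $C[U] \in \Sigma^*$ and $U \neq 1$. Total orderedness of $C[U]$ together with (a) makes any two $U$-points $\leq_\lp{u}$-comparable, so $U \in \Sigma^*$. For $C$, any two points of $S_\lp{c} - \{s_*\}$ are $\leq_\lp{c}$-comparable by (b); the remaining case is comparing some $s \in S_\lp{c} - \{s_*\}$ with $s_*$, which is exactly where $U \neq 1$ enters: picking any $x \in S_\lp{u}$, the points $x$ and $s$ are comparable in the total order $C[U]$, and (c)/(d) turn $x \leq_{\lp{c}[\lp{u}]} s$ into $s_* \leq_\lp{c} s$ and $s \leq_{\lp{c}[\lp{u}]} x$ into $s \leq_\lp{c} s_*$, so $s$ and $s_*$ are comparable either way. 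Thus $\lp{c}$ is totally ordered and $C \in \PCseq$. The main obstacle is the $(\Rightarrow)$ direction of the characterisation --- controlling chains that weave between the two carriers --- and correctly isolating the single place ($s_*$-comparability) at which non-emptiness of $U$ is essential.
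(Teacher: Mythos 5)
Your proof is correct and takes essentially the same route as the paper: both argue by a case analysis on whether each pair of points lies in $S_\lp{u}$, in $S_\lp{c} - \{s_*\}$, or is mixed, with cross-carrier comparability mediated by comparisons with $s_*$ in $\lp{c}$, and with non-emptiness of $U$ invoked exactly where you invoke it (to compare $s_*$ with the other nodes of $\lp{c}$ in part (ii)). The only difference is presentational: you first isolate the four-way characterisation of $\leq_{\lp{c}[\lp{u}]}$ and justify it by collapsing chains, whereas the paper reads those comparisons directly off the generating rules of the order, which happen to be closed under transitivity already.
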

\begin{proof}
Let $C = [\lp{c}]$ and $U = [\lp{u}]$.
We treat the claims in the order given.
\begin{enumerate}[(i)]
    \item
    Let $s, s' \in S_{\lp{c}[\lp{u}]}$.
    We have three cases to consider.
    \begin{itemize}
        \item
        If $s, s' \in S_\lp{c} - \{ s_* \}$, then $s \leq_\lp{c} s'$ or $s' \leq_\lp{c} s$ because $C$ is totally ordered; hence, $s \leq_{\lp{c}[\lp{u}]} s'$ or $s' \leq_{\lp{c}[\lp{u}]} s$.

        \item
        If $s \in S_\lp{c} - \{ s_* \}$ and $s' \in S_\lp{u}$, then note that since $S_\lp{c}$ is totally ordered we have that either $s \leq_\lp{c} s_*$ or $s_* \leq_\lp{c} s$; hence, we have that $s \leq_{\lp{c}[\lp{u}]} s'$ or $s' \leq_{\lp{c}[\lp{u}]} s$ by definition of $\leq_{\lp{c}[\lp{u}]}$.

        \item
        If $s, s' \in S_\lp{u}$, then either $s \leq_\lp{u} s'$ or $s' \leq_\lp{u} s$ because $U$ is totally ordered; hence $s \leq_{\lp{c}[\lp{u}]} s'$ or $s' \leq_{\lp{c}[\lp{u}]} s$.
        \qedhere
    \end{itemize}

    \item
    To see that $U \in \Sigma^*$, let $s, s' \in S_\lp{u} \subseteq S_{\lp{c}[\lp{u}]}$.
    Because $C[U] \in \Sigma^*$, we have $s \leq_{\lp{c}[\lp{u}]} s'$ or $s' \leq_{\lp{c}[\lp{u}]} s$, hence $s \leq_\lp{u} s'$ or $s' \leq_\lp{u} s$.

    \smallskip
    To see that $C \in \PCseq$, let $s, s' \in S_\lp{c}$; we have three cases to consider.
    \begin{itemize}
        \item
        First, if $s = s_* = s'$, then $s \leq_\lp{c} s'$ immediately.

        \item
        Second, if $s = s_*$ and $s' \in S_\lp{c} - \{ s_* \}$, then take $s'' \in S_\lp{u}$, which exists because $S_\lp{u}$ is nonempty.
        Since $s, s'' \in S_{\lp{c}[\lp{u}]}$ and $\lp{c}[\lp{u}]$ is totally ordered, we have either $s \leq_{\lp{c}[\lp{u}]} s''$ or $s'' \leq_{\lp{c}[\lp{u}]} s$.
        In the former case, we find that $s \leq_\lp{c} s_*$ by definition of $\leq_{\lp{c}[\lp{u}]}$; the latter case can be treated similarly.

        \item
        Lastly, if $s, s' \neq s_*$, then since $s, s' \in S_{\lp{c}[\lp{u}]}$ and $\lp{c}[\lp{u}]$ is totally ordered, we have that either $s \leq_{\lp{c}[\lp{u}]} s'$ or $s' \leq_{\lp{c}[\lp{u}]} s$.
        By definition of $\leq_{\lp{c}[\lp{u}]}$ we then find that either $s \leq_\lp{c} s'$ or $s' \leq_\lp{c} s$.
    \end{itemize}
\end{enumerate}
\end{proof}

\liftclosure*
\begin{proof}
We treat the claims in the order given.
\begin{enumerate}[(i)]
    \item
    The inclusion from right to left is straightforward: if $A \subseteq \seqclosure{L}$, then $A \subseteq \closure{L}$ as well.
    For the other inclusion, suppose that $A \subseteq \closure{L}$.
    We proceed by induction on the construction of $A \subseteq \closure{L}$, showing that $A \subseteq \seqclosure{L}$ and $A \subseteq \Sigma^*$.
    In the base, know that $A \subseteq L$, hence $A \subseteq \seqclosure{L}$ and $A \subseteq \Sigma^*$.

    \smallskip
    For the inductive step, we find $e \leq f \in H$ and $C \in \PCsp$ such that $A = C[\semcka{e}]$ and $C[\semcka{f}] \subseteq \closure L$.
    Since $H$ is grounded, we have $\semcka{f} = \{ X \}$ for some non-empty word $X$.
    Since $C[X] \in C[\semcka{f}] \subseteq \Sigma^*$ by induction, it follows that $C \in \PCseq$ by \autoref{lemma:context-vs-sequential}\eqref{property:context-vs-sequential-extract}.
    Also by induction, we know that $C[\semcka{f}] \subseteq \seqclosure L$; hence, $A = C[\semcka{e}] \subseteq \closure L$.
    Finally, since $e \in \termska$, we have that $\semka{e} \in \Sigma^*$, and hence $A = C[\semcka{e}] \subseteq \Sigma^*$ as well, by \autoref{lemma:context-vs-sequential}\eqref{property:context-vs-sequential-preserve}.

    \item
    The inclusion from right to left follows from \autoref{lemma:composition-vs-closure}\eqref{property:hypothesis-parallel}.

    \smallskip
    For the other inclusion, suppose $A \subseteq \closurep{L \parallel L'}$; it suffices to show that we can find $B, B' \subseteq \SP$ such that $A \subseteq B \parallel B'$ and $B \subseteq \closure L$ and $B' \subseteq \closure {L'}$.
    We proceed by induction on the construction of $\closurep{L \parallel L'}$.
    In the base, where $A \subseteq L \parallel L'$, we can choose $B = L$ and $B' = L'$ to satisfy the claim.

    \smallskip
    For the inductive step, $A \subseteq \closurep{L \parallel L'}$ because there exists a $C \in \PCsp$ and $e \leq f \in H$ such that $A = C[\semcka{e}]$ and $C[\semcka{f}] \subseteq \closurep{L \parallel L'}$.
    By induction, we find $B, B' \subseteq \SP$ such that $C[\semcka{f}] \subseteq B \parallel B'$ and $B \subseteq \closure{L}$ and $B' \subseteq \closure{L'}$.
    Since $e \leq f$ is grounded, we know that $\semcka{f} = \{ W \}$ for some non-empty word $W$; hence $C[W] = X \parallel X'$ with $X \in B$ and $X' \in B'$.
    By \autoref{lemma:context-prime-locus}, we find that either $C = C' \parallel X'$ such that $C'[W] = X$, or $C = C' \parallel X$ such that $C'[W] = X'$.
    In the former case, we can write $A = C[\semcka{e}] \subseteq C'[\semcka{e}] \parallel B'$.
    Since $C'[\semcka{e}] \subseteq \closure{L}$ by definition of closure, the claim then follows.
    The latter case can be treated similarly.
    \qedhere
\end{enumerate}
\end{proof}

\liftreduction*
\begin{proof}
Let $r$ be the sequential reduction from $H$ to $H'$.
We extend $r$ to a map $\termscka \to \termscka$ by acting homomorphically, i.e., $r(e \parallel f) = r(e) \parallel r(f)$.
We already know that $H$ implies $H'$; it is not hard to show that if $e \in \termscka$, then $e \equivcka^H r(e)$.

For the last requirement, the proof proceeds by induction on the number of occurrences of $\parallel$ in $e$.
In the base, where $\parallel$ does not occur in $e$, we have that $e \in \termska$.
We can then derive by \autoref{lemma:lift-closure} that
\[
    \closure{\semka{e}}
        = \seqclosure {\semka{e}}
        = \seqclosure[H']{\semcka{r(e)}}
        = \closure[H']{\semcka{r(e)}}
\]
For the inductive step, we have $e = e_0 \parallel e_1$.
We then derive:
\begin{align*}
\closure {\semcka{e}}
    &= \closurep{\semcka{e_0} \parallel \semcka{e_1}}
        \tag{Def. $\semcka{-}$} \\
    &= \closure{\semcka{e_0}} \parallel \closure{\semcka{e_1}}
        \tag{\autoref{lemma:lift-closure}} \\
    &= \closure[H']{\semcka{r(e_0)}} \parallel \closure[H']{\semcka{r(e_1)}}
        \tag{Induction} \\
    &= \closurep[H']{\semcka{r(e_0)} \parallel \semcka{r(e_1)}}
        \tag{\autoref{lemma:lift-closure}} \\
    &= \closure[H']{\semcka{r(e)}}
        \tag*{(Def. $\semcka{-}$) \qedhere}
\end{align*}
\end{proof}

\fi%
\end{document}